\definecolor{innerboxcolor}{rgb}{.9,.95,1}
\definecolor{outerlinecolor}{rgb}{.6,0,.2}
\definecolor{outerlinecoloreb}{rgb}{0,1,0}
\definecolor{innerboxcoloreb}{rgb}{1,1,1}
\declaretheorem[name=Theorem,numberwithin=section]{thm}
\declaretheorem[name=Lemma,numberwithin=section]{lemma}
\begin{document}

\title{Bounds on the conditional and average treatment effect with unobserved confounding factors}

\author{Steve Yadlowsky, Hongseok Namkoong, Sanjay Basu, John Duchi, Lu Tian}
  \begin{center}
  {\LARGE Bounds on the conditional and average treatment effect with unobserved confounding factors} \\
  \vspace{.5cm} {\large Steve Yadlowsky$^1$, Hongseok Namkoong$^2$, Sanjay Basu$^3$, John Duchi$^4$, Lu Tian$^5$} \\
  \vspace{.2cm}
  $^{1}$Google Research, Brain Team, \texttt{yadlowsky@google.com}\\
  $^2$Decision, Risk, and Operations Division, Columbia Business School, \texttt{namkoong@gsb.columbia.edu} \\
  $^3$School of Public Health,
  Imperial College London, \texttt{s.basu@imperial.ac.uk} \\
  $^4$Statistics and Electrical Engineering, Stanford University, \texttt{jduchi@stanford.edu}\\
  $^5$Biomedical Data Science, Stanford University, \texttt{lutian@stanford.edu}
\end{center}

\begin{abstract}
  For observational studies, we study the sensitivity of causal inference when
  treatment assignments may depend on unobserved confounders. We develop a
  loss minimization approach for estimating bounds on the conditional average
  treatment effect (CATE) when unobserved confounders have a bounded effect on
  the odds ratio of treatment selection. Our approach is scalable and allows
  flexible use of model classes in estimation, including nonparametric and black-box machine
  learning methods. Based on these bounds for the CATE, we propose a sensitivity
  analysis for the average treatment effect (ATE). Our semi-parametric
  estimator extends/bounds the augmented inverse propensity weighted (AIPW)
  estimator for the ATE under bounded unobserved confounding. By constructing
  a Neyman orthogonal score, our estimator of the bound for the ATE is a regular root-$n$ estimator so
  long as the nuisance parameters are estimated at the $o_p(n^{-1/4})$
  rate. We complement our methodology with optimality results
  showing that our proposed bounds are tight in certain cases. We demonstrate
  our method on simulated and real data examples, and show accurate coverage
  of our confidence intervals in practical finite sample regimes with rich
  covariate information.
\end{abstract}

\section{Introduction}

Consider a causal inference problem with
treatment indicator $Z\in \{0, 1\}$ representing control or
intervention, potential outcomes $Y(1) \in \R$
under intervention and $Y(0) \in \R$ under control,
and a set of observed covariates $X \in \mathcal{X} \subseteq \R^d$.
Our interest is in studying confounding bias in estimators of the \emph{conditional average treatment effect} (CATE)
\begin{equation*}
  \tau(x) \defeq \E[Y(1)-Y(0)\mid X=x],
\end{equation*} and estimation and inference of the \emph{average
  treatment effect} (ATE)
\begin{equation*}
  \tau \defeq \E[Y(1) - Y(0)]
\end{equation*}
based on $n$ i.i.d.\ observations $\{Y = Y(Z), Z, X\}$.\footnote{Together, these imply the stable unit treatment value assumption, which will be assumed throughout.} Many methods provide
consistent estimators for the ATE~\cite{ImbensRu15} under the independence
assumption
\begin{equation}
  \{Y(1), Y(0)\} \independent Z \mid X, \label{eq:indepsim}
\end{equation}
that all confounding factors are observed or equivalently, that observed
covariates $X$ account for all dependence between the potential outcomes and
treatment assignments.
Estimation of the CATE, $\tau(x),$ under the
independence assumption~\eqref{eq:indepsim} has recently generated substantial
interest~\cite{hill2011bayesian, athey2016recursive, kunzel2017meta,
  wager2017estimation, nie2017learning}.
  

Confounding bias is ubiquitous in observational studies, and the assumption
\eqref{eq:indepsim} is frequently too restrictive: in practice, there is
almost always an unobserved confounding factor $U \in \mc{U}$ affecting both
treatment selection and outcome. Consequently, we consider an unobserved
confounding factor $U$ such that
\begin{equation}
  \{Y(1), Y(0)\} \independent Z \mid X,U. \label{eq:indep}
\end{equation}
This allows there to be a common cause $U$ of the treatment $Z$ and potential outcomes $\{Y(0), Y(1)\}$ that contains the relevant information about the potential outcomes that influence the treatment assignment. More abstractly, it allows the treatment assignment $Z$ to depend directly on the unobserved
potential outcome; a multivariate unobserved confounder $U$ satisfying
condition~\eqref{eq:indep} always exists by letting $U=(Y(1), Y(0))$. Under
this assumption, neither the ATE $\tau$ nor the CATE $\tau(x)$ is
identifiable, and traditional estimators can be arbitrarily
biased~\cite{Rosenbaum02,imbensreview,robins2000sensitivity}.  Yet it may be
plausible that there is not ``too much'' confounding, so it is interesting to
provide bounds on the possible range of treatment effects under such
scenarios.  We take this approach to propose a sensitivity analysis linking
the posited strength of unobserved confounding to the range of possible values
of the ATE $\tau$ and CATE $\tau(x)$.

We consider unobserved confounders that have bounded influence on the odds of
treatment assignment, following Rosenbaum's ideas~\cite{Rosenbaum02}.
\begin{definition}
  \label{definition:bounded-selection}
  A distribution $P$ over $\{Y(1), Y(0), X, U, Z\}$ satisfies the
  \emph{$\Gamma$-\cornfield} condition with $1 \le \Gamma < \infty$ if
  for $P$-almost all $u, \tilde{u} \in \mc{U}$ and $X \in \mc{X}$,
  \begin{equation}
    \frac{1}{\Gamma} \le
    \frac{P(Z = 1 \mid  X, U=u)}{P(Z = 0 \mid  X, U=u)}
    \frac{P(Z = 0 \mid  X, U=\tilde{u})}{P(Z = 1 \mid  X, U=\tilde{u})}
    \le \Gamma.
    \label{eq:cornfield}
  \end{equation}
\end{definition}
\noindent
Condition~\eqref{eq:cornfield} limits departures from the independence
assumption \eqref{eq:indepsim}, and is equivalent to a regression
model for the treatment selection probability~\cite[Prop.~12]{Rosenbaum02}
where the log odds ratio for treatment is
\begin{equation}
  \label{eq:stat-model}
  \log \frac{P(Z=1 \mid  X, U)}{P(Z=0 \mid  X, U)}
  = \kappa(X) + \log(\Gamma) b(U, X),
\end{equation}
for some function $\kappa : \mathcal{X} \to \R$ of observed
covariates $X$ and a bounded function $b: \mathcal{U}\times \mathcal{X} \to [0, 1]$ of the
unobserved, and observed confounders, $U$ and $X$ respectively. Such odds ratios are common, for example, in
medicine, where they reflect associations between risk factors and
outcomes~\cite{norton2018odds}. Practice requires choosing a realistic value
of $\Gamma$ to interpret the sensitivity analysis; we discuss this in more
detail in Section~\ref{section:discussion}. One common approach by
practitioners is to look at the level of $\Gamma$ when bounds on the ATE
$\tau$ crosses a certain level of interest (e.g. $0$), which measures the
robustness of the findings to unobserved confounding ~\cite{Rosenbaum02}, and
then consider how plausible that choice of $\Gamma$ would be for the data
generating process.

The ATE $\tau,$ and CATE $\tau(x),$ are partially identified under the $\Gamma$-\cornfield condition~\eqref{eq:cornfield}, so we focus instead on estimating bounds for them.  This perspective on sensitivity
to unobserved confounding traces to \citeauthor{CornfieldHaHaLiShWy59}'s
analysis demonstrating that if an unmeasured hormone can explain the observed
association between smoking and lung cancer, it would need to increase the
probability of smoking by nine-fold (an unrealistic
amount)~\cite{CornfieldHaHaLiShWy59}.  Contemporary medical informatics and
epidemiological studies focusing on small effect sizes require a more nuanced
approach for estimating the causal effect in the presence of unobserved
confounding than the simple one used by \citet{CornfieldHaHaLiShWy59}.  For
example, observational data is often used for post-market drug surveillance,
but \citet{bosco2010most} shows that unobserved confounding presents a
particularly high risk in these data, motivating the need for sensitivity
analysis to contextualize findings. \citet{coloma2012electronic} show that
effect sizes are often small, as adverse events for approved drugs are
relatively rare. Therefore, to draw confident and precise conclusions when
there is mild confounding, it is important to avoid applying an overly conservative sensitivity
analysis. Motivated to provide the most precise
conclusions possible in the presence of confounding, we seek methods that
provide optimal (tight) bounds on the CATE and ATE under the
$\Gamma$-\cornfield{} condition~\eqref{eq:cornfield}.

\subsection{Bounding treatment effects}

In what follows, we bound the confounding bias using analogues of the plug-in
treatment contrast estimator for the the CATE, and the augmented inverse
probability weighted (AIPW) estimator for the
ATE~\cite{bang2005doublerobust}. We treat each potential outcome separately,
focusing on lower bounds on $\mu_1 = \E[Y(1)]$ (other cases are
symmetric). Based on observed data, all parameters necessary to estimate $\mu_1$ can be non-parametrically identified, except the conditional mean of the unobserved potential outcome, $\E[Y(1) \mid X, Z=0]$. Since this quantity is
not identifiable in the presence of unobserved confounding, we develop a
worst-case bound under the $\Gamma$-\cornfield{} condition~\eqref{eq:cornfield}, and develop estimators
based on the observed data. Specifically, let
\begin{equation}
\theta_1(x) \defeq \inf\{\E_Q[Y(1) \mid X=x, Z=0] : Q \in \mathcal{Q}_x\}
    \label{eq:cate-lower-bound}
\end{equation}
where $\mathcal{Q}_x$ is the set of all distributions for $(Y(0), Y(1), Z)$
conditional on $X=x$ satisfying the independence assumption \eqref{eq:indep}
and the bound \eqref{eq:cornfield} for $X = x$, and matching the conditional
distributions that are identified in the observed data $P$:
$Q(Z=1 \mid X)=P(Z=1 \mid X)$ and
$Q(Y(1) \in \cdot \mid Z=1, X) = P(Y(1) \in \cdot \mid Z=1, X)$.  By
definition, $ \theta_1(x) \le \E_P[Y(1) \mid X=x, Z=0]$ under the
bounded unobserved confounding ($\Gamma$-\cornfield{}
condition~\eqref{eq:cornfield}). Lower bounds on $\E[Y(1) \mid X=x]$ and $\E[Y(1)]$
follow from plugging in $\theta_1(x)$ in place of the unknown $\E_P[Y(1) \mid X=x, Z=0].$


Our first main result (Section~\ref{sec:cate-sensitivity}) shows that $\theta_1(x)$ can be expressed as the solution to the loss
minimization problem with a reweighted squared loss
\begin{equation*}
    \minimize_{\funcparam(\cdot)} 
    ~\half \E\left[ \hinge{Y(1) - \funcparam(X)}^2 +
    \Gamma \neghinge{Y(1) - \funcparam(X)}^2 \mid \treatmentrv=1 \right], 
\end{equation*}
where $a_+=a \ind{a>0}$, $a_-=-a \ind{a<0}, a \in \R$ and $\ind{\cdot}$ is the
indicator function.  
The scalable loss minimization approach allows us to use flexible model
classes to estimate the lower bound, including many nonparametric and machine
learning methods.  Intuitively, the preceding display upweights the penalty for negative residuals, therefore increasing the impact of smaller observed outcomes on the minimizer $\theta_1(x),$ correcting for the fact that selection bias from confounding may have decreased the frequency of smaller observed outcomes.

Our second main result defines a semiparametric estimator
\eqref{eq:orthogonal-estimator} for the lower bound on the expected outcome
$Y(1)$ under the $\Gamma$-\cornfield condition~\eqref{eq:cornfield}
\begin{equation}
  \label{eq:outcome-bound}
  \mu_1^- \defeq \E[Z Y(1) + (1-Z) \theta_1(X)] \le \E[Y(1)].
\end{equation}
Our estimation approach (Section~\ref{sec:semiparametric}) builds out of a
line of work~\cite{bang2005doublerobust, chernozhukov2018double} for
statistical inference on $\tau$ when all confounders are
observed~\eqref{eq:indepsim}; we adapt \citet{chernozhukov2018double}'s
cross-fitting procedure to allow large model classes to estimate nuisance
parameters. Our semiparametric estimator satisfies \emph{Neyman
  orthogonality}~\cite{Neyman59}, and is insensitive to estimation errors in
nuisance parameters.
By virtue of this orthogonality, our estimator is root-$n$ consistent and
asymptotically normal so long as the nuisance parameters are estimated at a
slower-than-parametric $o_p(n^{-1/4})$ rate of convergence.  Our result gives
asymptotically exact confidence intervals (CIs) for the lower bound $\mu_1^-$
~\eqref{eq:outcome-bound}.

Coupling the asymptotic distribution for $\hat{\mu}_1^-$ with the symmetrically defined upper and lower bounds $\what{\mu}_z^\pm$ for $\E[Y(z)]$, we can construct a
CI for the ATE $\tau$ under the
$\Gamma$-\cornfield condition~\eqref{eq:cornfield}. In general, the boundary of our
interval never shrinks to $\tau$ even in the large sample limit due to
unobserved confounding. However, when there is no
unmeasured confounding ($\Gamma = 1$), our method is equivalent to the AIPW estimator for the ATE $\tau$.

Our population-level bound is unimprovable for bounding each expected potential outcome and
their conditional counterparts, $\E[Y(z)]$ and $\E[Y(z) \mid X=x]$,
$z \in \{0, 1\}$, but may not be always optimal in bounding their difference, the ATE
$\tau = \E[Y(1) - Y(0)]$. On the other hand, when the potential outcomes are symmetric in the sense that $Y(0) \overset{d}{=} C(1-Y(1))$ for some constant $C$, then our bounds
on the treatment effect are also unimprovable
(Section~\ref{sec:hypothesis-test}), thereby guaranteeing that our CI converges (in the large sample limit) to
the smallest possible interval containing $\tau$ under the $\Gamma$-\cornfield
condition~\eqref{eq:cornfield}.

Finally, we supplement our theoretical analysis with an experimental
investigation of the proposed approaches in Section~\ref{sec:experiments}. On
both simulated and real-world data, we show that the CIs have
good coverage and reasonable length.

\subsection{Related Work}

The semiparametric literature~\cite{bang2005doublerobust,
  chernozhukov2018double} have shown that the augmented inverse probability
weighted (AIPW) estimator allows the use of flexible nonparametric and machine
learning models to estimate the nuisance parameters: conditional means
$\E[Y(z) \mid X]$, $z \in \{0, 1\}$, and the propensity score
$\P(Z = 1\mid X).$ By exploiting certain orthogonality properties,
\citet{chernozhukov2018double} showed how to obtain root-$n$ consistency and
asymptotic normality for estimated $\tau$ even when involved estimates of the
nuisance parameters converge at slower nonparametric rates. We generalize this
approach under the $\Gamma$-\cornfield{} condition.

A number of authors have studied nonparametric and semiparametric models for
sensitivity analysis. These works consider alternatives to our choice of
model~\eqref{eq:cornfield} in characterizing the relationship between
unobserved confounders, treatment, and
outcomes~\cite{,franks2019flexible,richardson2014nonparametric,
  robins2000sensitivity, zhao2017sensitivity, shen2011sensitivity,
  brumback2004sensitivity}.  We focus on the model of \citet{Rosenbaum02}
because of its appealing interpretation as a regression model~\eqref{eq:stat-model}.  

\citet{Imbens03} derived a sensitivity analysis for the treatment effect in the presence of unobserved confounding. His approach requires specifying parametric models for the effect of an unobserved confounder on both the treatment selection and outcome. Specifically, the relationship between the unmeasured confounder and treatment assignment is modelled via a logistic regression, which is a special case of condition~\eqref{eq:cornfield}.

\citet{aronow2012interval} and \citet{miratrix2017shape} study the bias due
to unknown selection probabilities in survey analysis, with
an
approach similar to ours.
In the survey setting, only
surveyed individuals provide covariates $X$,
so the papers~\cite{aronow2012interval,miratrix2017shape}
consider a simplified model for selection bias,
\begin{equation} \label{cond:simple-gamma-selection}
  \frac{1}{\Gamma} \le \frac{P(Z=1 \mid U=u)}{P(Z=0 \mid U=u)} \frac{P(Z=0 \mid U=\tilde{u})}{P(Z=1 \mid U=\tilde{u})} \le \Gamma.
\end{equation}
\citet{zhao2017sensitivity} and \citet{shen2011sensitivity} consider the
sensitivity of inverse probability weighted estimates of the ATE $\tau$ to
unobserved confounding by varying the propensity score estimates around their
estimated values. \citet{zhao2017sensitivity} discuss the relationship between
their model of bounded unobserved confounding---which they call the marginal
sensitivity model---and that based on the
$\Gamma$-\cornfield{}~\eqref{eq:cornfield}. 
Compared to our semiparametric estimator, the complexity of the asymptotic distribution of their estimator necessitates
using a bootstrap method for inference. A interesting
future direction is to extend the methods in this paper to improve
statistical inference under their model. 

The most common approach to sensitivity analysis for the ATE under
condition~\eqref{eq:cornfield} is to use matched
observations~\cite{Rosenbaum02, Rosenbaum10,
  rosenbaum2011new,rosenbaum2014weighted,fogarty2016sensitivity}. Unfortunately,
exactly matched pairs rarely exist in practice, even for covariate vectors of
moderate dimension; when considering continuous covariates, the probability of
finding exactly matched pairs is zero. \citet{abadie2006large} show that under
appropriate regularity conditions on the functions $\mu_{z}(x)$ and $e_1(x)$ (defined in Eqs.  \eqref{eq:true-mean-function} and \eqref{eq:prop-score}),
estimators of $\tau$ using approximately matched pairs can have a bias of
order $\Omega(n^{-1/d})$ for $d$-dimensional continuous covariates. For these
data, the AIPW method is a more appropriate statistical analysis tool. The
AIPW estimator and other semiparametric methods can provide
$\sqrt{n}$-consistent estimates of the ATE without unmeasured confounding
\cite{imbensreview, hahn98, scharfstein1999adjusting,
  chernozhukov2018double}. The semi-parametric approach for the lower bound on
the ATE that we present in Section~\ref{sec:semiparametric} is
$\sqrt{n}$-consistent under analogous regularity conditions. 
Therefore, when analyzing an observational study using the AIPW estimator, one should perform a sensitivity analysis using the semiparametric method we provide here. When finding good matched pairs is feasible, many analysts prefer matching due to the transparency of the results and the simplicity of confounding adjustment. If analyzing an observational study using matching methods, it would be natural to also use a matching-based method for sensitivity analysis, such as the ones described above. In summary, our proposed method and matching based sensitivity analysis approaches can be coupled with different main analyses in practice, and are complementary to each other.

Most work~\cite{hill2011bayesian, athey2016recursive, kunzel2017meta,
  wager2017estimation, nie2017learning} directly study estimation of the CATE
$\tau(x) = \mu_1(x) - \mu_0(x)$ assuming that all confounders are observed.
More recently, \citet{kallus2018confounding} present an approach to learning
personalized decision policy in the presence of unobserved confounding, and a
contemporaneous work with this paper~\cite{kallus2019interval} derive bounds
on the CATE; their methods are based on the marginal sensitivity model of
\citet{zhao2017sensitivity}.

\paragraph*{Notation}
We use $\P_n$ and $\P_n(\cdot \mid Z=z)$ to represent the empirical
probabilities of $\{(Y_i(Z_i), X_i, Z_i)\}_{i = 1}^n$ and
$\{(Y_i(Z_i), X_i) \mid Z_i=z\}$, respectively, and
$\empE[\cdot \mid Z=z]$ is the expectation with respect to $\P_n(\cdot \mid
Z=z)$ for $z=0, 1.$ We let $n_z=\sum_{i=1}^n \ind{Z_i=z}$ be the count
of observations with $Z_i = z$, where
$\ind{\cdot}$ is the indicator function.
For a distribution $P$ and function $f : \mc{X} \to \R$, we use
$\norm{f}_{2,P} = (\int_{\mc{X}} f^2(x) \dif{P}(x))^{1/2}$. For functions
$f : \Omega \to \R$ and $g : \Omega \to \R$ with arbitrary domain $\Omega$, we write $f \lesssim g$ if there
exists constant $C < \infty$ such that $f(t) \le C g(t)$ for all $t \in \Omega$,
and $f \asymp g$ if $g \lesssim f \lesssim g$. 
We use $P_z$ and $\E_z$ to denote the conditional distribution
$P(\cdot \mid Z=z)$ and associated expectation, respectively. We write
$\E_{Q}$ for the expectation under the probability $Q$, and omit the
subscript under the data-generating distribution $P$.


\section{Bounds on Conditional Average Treatment Effect}
\label{sec:cate-sensitivity}


To bound the CATE $\tau(X) = \E[Y(1) - Y(0)\mid X]$, we begin by separately
bounding
\begin{equation}
  \mu_1(X) = \E[Y(1)\mid X] ~~~\mbox{and}~~~ \mu_0(X) = \E[Y(0)\mid X].
  \label{eq:true-mean-function}
\end{equation}
We focus on $\mu_1(\cdot)$ as these two cases are symmetric.   Henceforth,
our statements hold for $P$-almost every $X$ and $P_z$-almost every $Y$ (where $z$ should be inferred from context).

\subsection{Bounding the unobserved potential outcome}
\label{sec:no-covariate-lower-bound}

Decompose $\mu_1(\cdot)$ into observed and unobserved components
\begin{equation}
  \label{eq:mu-1-decomposition}
  \mu_1(X) = \E[Y(1) \mid Z = 1, X] P(Z = 1 \mid X) 
 +\E[Y(1) \mid Z = 0, X] P(Z = 0 \mid X).
\end{equation}
The mean functions and the nominal propensity score,
\begin{gather}
  \label{eq:mean-function}
    \mu_{z,z}(X) = \E[Y(z) \mid Z=z, X],
    \\
  \label{eq:prop-score}
  e_z(X) = P(Z=z\mid X),
\end{gather}
are standard regression functions estimable based on observed data
~\cite{imbensreview,nie2017learning,
  scharfstein1999adjusting,wager2015adaptive}. The key difficulty in
estimating the CATE is that one potential outcome is always unobserved: we
never observe data to directly estimate $\E[Y(1) \mid Z=0, X]$.

We begin by reformulating the worst-case lower
bound~\eqref{eq:cate-lower-bound}, $\theta_1(\cdot)$, based on the likelihood
ratio between the observed and unobserved potential outcomes. We take a worst
case optimization approach over likelihood ratios to bound the unobserved
conditional mean. Using Lemma~\ref{lem:bdd-lr-no-cov} to come, the conditional
distribution $P(Y(1) \in \cdot \mid X, Z=1)$ is absolutely continuous with
respect to $P(Y(1) \in \cdot \mid X, Z=0)$ under condition
\eqref{eq:cornfield}, so
\begin{equation}
  \label{eq:lr-form}
  \E[Y(1) \mid Z = 0, X] = \E\left[Y(1) L(Y(1),X) \mid Z = 1, X \right],
\end{equation}
where $L$ is the likelihood ratio
\begin{equation}
  L(y,x)=\frac{\dif{P(Y(1) \in \cdot \mid Z = 0, X=x)}}{\dif{P(Y(1) \in \cdot \mid Z=1, X=x)}}(y).
  \label{eq:likratio}
\end{equation} 
While $L$ is unknown, the $\Gamma$-\cornfield{} condition \eqref{eq:cornfield}
constrains it, inducing a lower bound on the unobserved
quantity~\eqref{eq:lr-form}.
\begin{restatable}{lemma}{lembddlrnocov}
  \label{lem:bdd-lr-no-cov}
  Let $P$ satisfy the $\Gamma$-\cornfield{} condition~\eqref{eq:cornfield},
  and the conditional independence~\eqref{eq:indep}. Then $P_{Y(1)|Z = 0,X=x}$
  is absolutely continuous with respect to $P_{Y(1)|Z = 1,X=x}$, and the
  likelihood ratio~\eqref{eq:likratio} satisfies
  $0\le L(y,x) \le \Gamma L(\tilde{y},x)$ for almost every $y$, $\tilde{y}$
  and $x$.
  
  Furthermore, for any likelihood ratio $L$ satisfies
  $0\le L(y,x) \le \Gamma L(\tilde{y},x)$ for almost every $y$, $\tilde{y}$
  and $x$, there is a distribution $P$ satisfying the $\Gamma$-\cornfield{}
  condition~\eqref{eq:cornfield}, and the independence
  assumption~\eqref{eq:indep}, such that Eq.~\eqref{eq:likratio} holds.
\end{restatable}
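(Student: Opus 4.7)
My plan for the forward direction is to use Bayes' rule together with the independence assumption~\eqref{eq:indep} to transfer the bound on propensity odds ratios onto the likelihood ratio $L$. First, marginalizing over the unobserved confounder and using~\eqref{eq:indep} gives
\[
p(y \mid Z=z, X=x) = \int p(y \mid X=x, U=u)\, p(u \mid Z=z, X=x)\, du.
\]
A second application of Bayes' rule rewrites the confounder ratio
\[
w(u,x) \defeq \frac{p(u \mid Z=0, X=x)}{p(u \mid Z=1, X=x)} = \frac{P(Z=0 \mid X=x, U=u)}{P(Z=1 \mid X=x, U=u)} \cdot \frac{P(Z=1 \mid X=x)}{P(Z=0 \mid X=x)},
\]
so the $\Gamma$-\cornfield{} condition~\eqref{eq:cornfield} directly yields $w(u,x)/w(\tilde u, x) \in [\Gamma^{-1}, \Gamma]$ for a.e.\ $u, \tilde u$. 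A third application of Bayes' rule then expresses the target as a conditional average of $w$,
\[
L(y,x) = \E\bigl[w(U,x) \mid Y(1)=y, Z=1, X=x\bigr],
\]
and since $w(\cdot, x)$ varies by at most a factor of $\Gamma$, so does any conditional expectation of it, giving $L(y,x) \le \Gamma L(\tilde y, x)$. Absolute continuity then follows because $L$ vanishing on a set of positive $P(Y(1) \in \cdot \mid Z=1, X=x)$-measure would, by the ratio bound, force $L \equiv 0$ a.s., contradicting $\int L(\cdot, x)\, dP_{Y(1) \mid Z=1, X=x} = 1$.

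For the reverse direction, I would construct a distribution realizing $L$ directly. Take $U := Y(1)$, which makes $Y(1) \independent Z \mid X, U$ trivial, and let $Y(0)$ be any measurable function of $U$ (e.g.~$Y(0) \equiv 0$) so that $Y(0) \independent Z \mid X, U$ also holds trivially, verifying~\eqref{eq:indep}. Pick any $e_1(x) \in (0,1)$ and any $P(Y(1) \in \cdot \mid Z=1, X=x)$ against which $L(\cdot,x)$ integrates to one (this normalization can always be arranged, since rescaling $L$ preserves its ratio bound). Define $P(Y(1) \in \cdot \mid Z=0, X=x)$ by the Radon--Nikodym derivative $L(\cdot, x)$ so that~\eqref{eq:likratio} holds by construction. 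Bayes' rule then determines $P(Z=1 \mid X=x, U=y)$, and a direct computation gives
\[
\frac{P(Z=1 \mid X=x, U=u)/P(Z=0 \mid X=x, U=u)}{P(Z=1 \mid X=x, U=\tilde u)/P(Z=0 \mid X=x, U=\tilde u)} = \frac{L(\tilde u, x)}{L(u, x)} \in [\Gamma^{-1}, \Gamma],
\]
which is~\eqref{eq:cornfield}.

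The main technical obstacle is measure-theoretic bookkeeping rather than any deep difficulty. When $\mc{U}$ is an abstract space, the density manipulations above must be recast as statements about conditional probability kernels and Radon--Nikodym derivatives with respect to a common dominating measure, so that the three chained applications of Bayes' rule all make sense simultaneously and so that zero-denominator issues are handled cleanly. The most delicate step is establishing absolute continuity in the forward direction; the reverse construction is safer because we build the joint distribution $P$ from the ground up, and the propensities $P(Z=z \mid X=x, U=y)$ are determined by Bayes' rule on a distribution we have already placed on the joint space.
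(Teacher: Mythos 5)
Your argument follows essentially the same route as the paper's: both express $L(y,x)$ as a conditional expectation of the confounder-level likelihood ratio $w(u,x) = q_0(u\mid x)/q_1(u\mid x)$, bound the variation of $w$ by a factor of $\Gamma$ via Bayes' rule and the $\Gamma$-\cornfield{} condition, and prove the converse by taking $U:=Y(1)$ with treatment probabilities defined through Bayes' rule. The one step to rearrange is absolute continuity: you cannot reason about $L$ vanishing before $L$ is known to exist; instead, observe that the ratio bound together with $\int w(u,x)\, q_1(u\mid x)\, d\mu(u) = 1$ forces $w(\cdot,x) \in [\Gamma^{-1},\Gamma]$, so that $P(Y(1)\in B \mid Z=0, X=x) = \E[w(U,x)\ind{Y(1)\in B} \mid Z=1, X=x] \le \Gamma\, P(Y(1)\in B \mid Z=1, X=x)$, which is how the paper obtains absolute continuity before defining $L$.
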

See Appendix~\ref{sec:proof-bounded-likelihood} for a proof of the absolute continuity. The rest of the results are illuminating, so we provide them here, assuming absolute continuity.

\begin{proof}
For simplicity in notation and without loss
  of generality, we assume there are no covariates $x$. Define the
  likelihood ratio for the unobserved
  $U$ by
  $r(u) \defeq \frac{q_0(u)}{q_1(u)},$ where $q_z(u)$ is the probability density function for $U\mid Z=z.$ Note that by applying Bayes rule in the inequality \eqref{eq:cornfield}, for any $u,~\wt{u}$, 
  \begin{equation}
  r(u) \le \Gamma r(\wt{u}).
    \label{eq:u-bound}
  \end{equation}
  Then, for
  any set $B \in \sigma(Y(1))$, the sigma algebra of $Y(1)$, we have
  \begin{align*}
    \E[\ind{B} \mid Z = 0]
    = \E\left[\E[r(U) \mid Y(1), Z = 1]
      \ind{B} \mid Z = 1\right],
  \end{align*}
  so that almost everywhere, the likelihood
  ratio $L(y) = \frac{dP_{Y(1) \mid Z = 0}}{dP_{Y(1) \mid Z = 1}}(y)$
  satisfies
  \begin{equation}
    L(y) = \E\left[r(U) \mid Y(1) = y, Z = 1\right]
    \label{eqn:remember-radon}
  \end{equation}
  by the Radon-Nikodym theorem.  Now, for an arbitrary $\epsilon > 0$, and $y,~ \wt{y}$ satisfying the equality~\eqref{eqn:remember-radon}, let $u_0$ be such that $r(u_0) \le
  \inf_u r(u) + \epsilon$.
  Then
  \begin{equation*}
    L(y) \stackrel{(i)}{=}
    \E[r(U) \mid Y(1) = y, Z = 1]
    = r(u_0) \E\left[\frac{r(U)}{r(u_0)} \mid Y(1) = y, Z = 1\right]
    \stackrel{(ii)}{\le} \Gamma r(u_0)
  \end{equation*}
  where equality~$(i)$ is simply Eq.~\eqref{eqn:remember-radon} and
  inequality~$(ii)$ follows from the bound \eqref{eq:u-bound}.
  We also have $L(\wt{y}) \ge \inf_u r(u) \ge r(u_0) - \epsilon$
  by equality~\eqref{eqn:remember-radon}. Consequently,
  $L(y) \le \Gamma r(u_0) \le \Gamma (L(\wt{y}) + \epsilon)$, and
  as $\epsilon$ was arbitrary, this completes the proof.
  
The converse follows easily as well: given a likelihood ratio satisfying the
above constraint, the $\Gamma$-\cornfield~\eqref{eq:cornfield} condition and
the independence $\{Y(1), Y(0)\} \,\indep\, Z \mid X, U$ is satisfied for $U \defeq (Y(1), Y(0))$, and $P(Z=1 \mid Z=z, U=u)$ only depending on the $Y(1)$ component of $U$, and defined by applying Bayes rule to the likelihood ratio.
\end{proof}

Lemma~\ref{lem:bdd-lr-no-cov} implies that the lower bound $\theta_1(x)$ from
Eq.~\eqref{eq:cate-lower-bound} on the unobserved conditional expectation
$\E[Y(1)\mid X, Z = 0]$ is:
\begin{align}
  \label{eq:population-theta}
  \theta_1(X) &=
  \inf \left\{ \E[Y(1) L(Y(1)) \mid Z = 1, X] ~:~ L \in \likeratioset\right\}
\end{align}
where
\begin{equation*}
   \likeratioset = \left\{ L :\R \to \R~\text{measurable} ~:~\begin{aligned} & 0 \le L(y) \le \Gamma L(\tilde{y})~\text{for all}~y,\tilde{y},\\&\E[L(Y(1)) \mid Z=1, X] = 1\end{aligned}\right\}.
\end{equation*}
The first constraint in $\likeratioset$ comes from the $\Gamma$-\cornfield{} condition
(Lemma~\ref{lem:bdd-lr-no-cov}), and the second normalization constraint
guarantees that $L$ is a likelihood ratio; the objectives and constraints are
linear in $L$. Applying Lagrangian duality to these constraints and
simplifying the resulting dual problem shows that the solution to this
optimization problem is the solution to an estimating equation in terms of the
function
\begin{equation}
  \psi_{\theta}(y) \defeq \hinge{y - \theta} - \Gamma \neghinge{y -
    \theta}.
  \label{eq:psi-defn}
\end{equation}

\begin{restatable}{lemma}{populationduality}
  \label{lemma:duality}
  Let $\theta_1(X)$ be defined as in
  \eqref{eq:population-theta}. If $|\theta_1(X)|<\infty$, then
  $\theta_1(X)$ solves 
  $$\E[ \psi_{\theta_1(X)}(Y(1)) \mid Z=1, X]=0$$ 
  whenever this solution is unique. If the solution is not unique, 
  \begin{equation}
    \theta_1(X) 
    = \sup \left\{\mu \in \R
    ~ :~ \E[ \psi_\mu(Y(1)) \mid Z=1, X] \ge 0
    \right\}.
    \label{eq:dual-constrained}
  \end{equation}
\end{restatable}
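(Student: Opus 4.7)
The plan is to verify the identity by two inequalities: a weak-duality bound giving $\theta_1(x)\ge$ RHS of~\eqref{eq:dual-constrained}, obtained directly from the box and normalization constraints on $L$, and an attainment argument $\theta_1(x)\le \mu^\star$ via an explicit threshold-form likelihood ratio. Since the primal~\eqref{eq:population-theta} is a well-structured infinite-dimensional linear program, both inequalities can be verified by hand without invoking general duality theorems.

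For the weak-duality direction, fix any $\mu$ with $\E[\psi_\mu(Y(1))\mid Z=1,X=x]\ge 0$ and any feasible $L$ in~\eqref{eq:population-theta}. Let $\alpha$ denote the essential infimum of $L$ under $P_{Y(1)\mid Z=1,X=x}$; then the pointwise constraint $0\le L(y)\le \Gamma L(\tilde y)$ forces $\alpha \le L\le \Gamma\alpha$ almost surely, and the normalization $\E[L\mid Z=1,X=x]=1$ forces $\alpha>0$. Rewriting $\E[Y(1)L\mid Z=1,X=x]-\mu = \E[(Y(1)-\mu) L\mid Z=1,X=x]$ using the normalization, splitting into positive and negative parts of $Y(1)-\mu$, and applying $L\ge \alpha$ on $\hinge{Y(1)-\mu}$ and $L\le \Gamma\alpha$ on $\neghinge{Y(1)-\mu}$ gives the lower bound $\alpha\,\E[\psi_\mu(Y(1))\mid Z=1,X=x]\ge 0$, so $\theta_1(x)\ge \mu$, and taking the supremum over valid $\mu$ delivers one direction.

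For attainment, let $\mu^\star$ denote the RHS supremum. The finiteness hypothesis $|\theta_1(x)|<\infty$ implies $\E[|Y(1)|\mid Z=1,X=x]<\infty$: either infinite positive or negative tails of $Y(1)$, combined with the bound $L\ge \alpha>0$ for every feasible $L$, would force $\theta_1(x)=\pm\infty$. Hence $\mu\mapsto \E[\psi_\mu(Y(1))\mid Z=1,X=x]$ is $\max(1,\Gamma)$-Lipschitz, strictly decreasing, and ranges over all of $\R$, so $\mu^\star$ is finite and $\E[\psi_{\mu^\star}(Y(1))\mid Z=1,X=x]=0$ by continuity. I would then exhibit the threshold likelihood ratio $L^\star(y)\propto \Gamma\,\ind{y\le \mu^\star}+\ind{y>\mu^\star}$, normalized to have unit mean under $P_{Y(1)\mid Z=1,X=x}$: it is feasible by construction, and $\E[Y(1)L^\star\mid Z=1,X=x]-\mu^\star$ simplifies (via $\E[L^\star\mid Z=1,X=x]=1$) to a positive multiple of $\E[\psi_{\mu^\star}(Y(1))\mid Z=1,X=x]=0$, so $\theta_1(x)\le \mu^\star$.

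The main obstacle is the integrability/continuity bookkeeping needed to guarantee that the supremum in~\eqref{eq:dual-constrained} is actually attained by $\mu^\star$; once that is in place, optimality of the threshold construction is essentially forced by the structure of the LP (push the likelihood ratio to its maximum where $Y(1)$ is smallest and to its minimum where $Y(1)$ is largest), with the threshold and normalization pinned down by the two equalities $\E[\psi_{\mu^\star}(Y(1))\mid Z=1,X=x]=0$ and $\E[L^\star\mid Z=1,X=x]=1$. A minor subtlety is the freedom in the value of $L^\star$ on the atom $\{y=\mu^\star\}$ (if any), but this freedom cancels in the computation because $(y-\mu^\star)\ind{y=\mu^\star}=0$.
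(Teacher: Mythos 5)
Your proposal is correct, and it reaches the identity by a genuinely different route than the paper. The paper invokes an infinite-dimensional strong duality theorem (Luenberger) to pass to the Lagrangian $\sup_{\mu}\inf_{L}\{\E_1[(Y(1)-\mu)L(Y(1))]+\mu\}$, identifies the threshold form of the inner minimizer, and then evaluates $\inf_{c\ge 0} c\,\E_1[\psi_\mu(Y(1))]$ to recover the constrained supremum. You instead close the duality gap by hand: a weak-duality inequality obtained directly from $\alpha \le L \le \Gamma\alpha$ and the normalization (giving $\E[Y(1)L\mid Z=1,X=x]-\mu \ge \alpha\,\E[\psi_\mu(Y(1))\mid Z=1,X=x]\ge 0$), plus explicit attainment by the normalized threshold likelihood ratio at the unique zero $\mu^\star$ of the continuous, strictly decreasing map $\mu\mapsto\E[\psi_\mu(Y(1))\mid Z=1,X=x]$. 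The two arguments share the same structural insight --- the extremal likelihood ratio is two-valued with a threshold at the optimum --- but yours is more elementary and self-contained, avoids citing the interior-point condition needed for strong duality, and makes explicit where the hypothesis $|\theta_1(x)|<\infty$ enters (integrability of $Y(1)$, hence continuity of $\mu\mapsto\E[\psi_\mu]$ and the identity $\E[\psi_{\mu^\star}(Y(1))\mid Z=1,X=x]=0$), a point the paper's proof leaves implicit. The paper's Lagrangian route is shorter once the duality theorem is granted and is the version reused later (e.g., in Eq.~\eqref{eqn:get-theta-via-infimum}), so the normalized threshold representation you derive in your attainment step is worth recording explicitly, as the paper does.
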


While $\theta_1(\cdot)$ could be estimated using a local estimating equation approach (eg., as in \cite{newey1994kernel} and \cite{athey2019generalized}) for the equations
$\E[ \psi_{\theta_1(X)}(Y(1))\mid Z = 1, X]=0$ for each $X$, we go further to provide an
alternative loss minimization method to estimate $\theta_1(\cdot)$.
This enables the application of a broad class of computationally and statistically efficient estimators.

\begin{figure}
    \centering
    \includegraphics[width=3in]{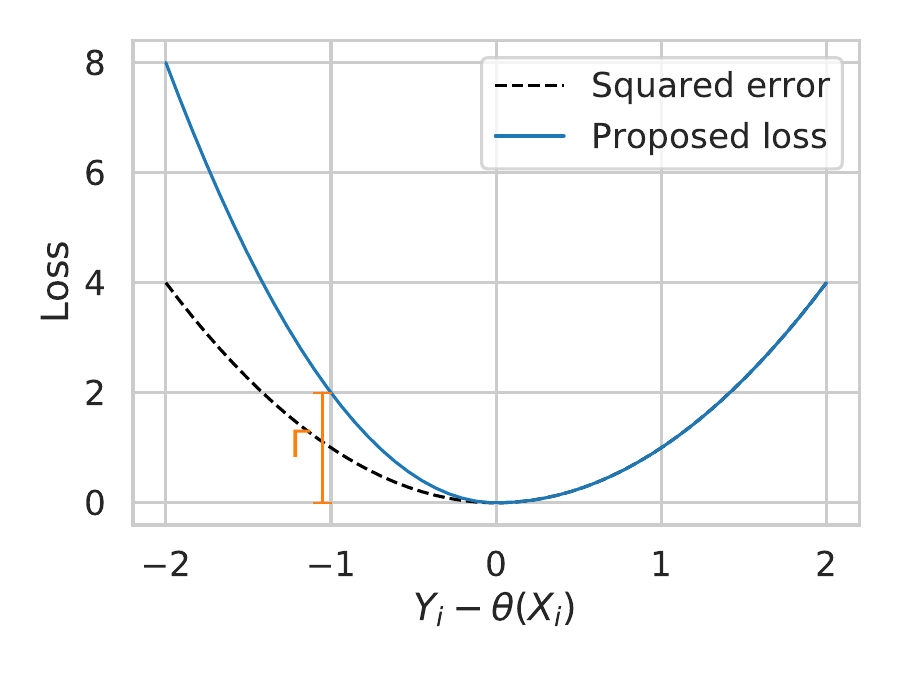}
    \caption{Loss function \eqref{eq:gamma-loss} to minimize to lower bound conditional mean of unobserved potential outcome under the $\Gamma$-\cornfield{} condition. Illustrated here for $\Gamma=2$. This loss penalizes negative residuals more than positive residuals, to account for the the fact that confounding could be already up-weighting positive residuals.}
    \label{fig:loss}
\end{figure}

The lower bound $\theta_1(\cdot)$ is the solution to
the convex loss minimization problem
  \begin{equation}
    \label{eqn:opt}
    \minimize_{\funcparam(\cdot)} 
    ~\E[  \loss_\Gamma\left(\funcparam(X), Y(1)\right) \mid \treatmentrv=1],
  \end{equation}
  where $\loss_{\Gamma}$ is the weighted squared loss
  \begin{equation}
    \label{eq:gamma-loss}
  \loss_\Gamma(\theta, \outcome)
  \defeq \half \left[ \hinge{\outcome - \theta}^2 +
    \Gamma \neghinge{\outcome - \theta}^2 \right],
\end{equation}
illustrated in Figure~\ref{fig:loss}. Noting that
$\frac{\dif{}}{\dif{\theta}} \loss_\Gamma(\theta, \outcome) =
-\psi_{\theta}(y)$, we have the following lemma on the uniqueness properties
and structure of $\funcparam_1$ solving the optimization
problem~\eqref{eqn:opt}.
\begin{restatable}{lemma}{lemoptisgood}
  \label{lemma:opt-is-good}
  Assume $(t, x) \mapsto
  \E[\loss_\Gamma(t, Y(1)) \mid X = x, Z = 1]$ is continuous
  on $\R \times \mc{X}$.
  If $\treatedE[\loss_\Gamma(\popfunc(\covariaterv), Y(1))] < \infty$, then
  $\popfunc(\cdot)$ solves
  $\E\left[ \psi_{\theta}(Y(1)) \mid X = x, Z=1\right]=0$
  for almost every $x$ if and only if it
  solves \eqref{eqn:opt}.
  Such a minimizer $\theta_1(\cdot): \mc{X} \to \R$ exists and is unique up to
  measure-0 sets.
\end{restatable}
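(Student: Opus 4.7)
The strategy is to reduce the optimization problem \eqref{eqn:opt} to pointwise minimization. By the tower property, for any measurable $\theta : \mc{X} \to \R$,
\begin{equation*}
\E[\loss_\Gamma(\theta(X), Y(1)) \mid Z = 1]
= \E[g(\theta(X), X) \mid Z = 1],
\qquad g(t, x) \defeq \E[\loss_\Gamma(t, Y(1)) \mid X = x, Z = 1],
\end{equation*}
so it suffices to show that for each fixed $x$ the scalar function $g(\cdot, x)$ admits a unique minimizer $\theta_*(x)$, characterized by the estimating equation $\E[\psi_\theta(Y(1)) \mid X = x, Z = 1] = 0$, and that $x \mapsto \theta_*(x)$ is measurable; the pointwise statements then stitch together to give both halves of the claimed equivalence.

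For the scalar analysis I would record three structural facts about the loss: (a) $\loss_\Gamma(\cdot, y)$ is continuously differentiable with $\partial_\theta \loss_\Gamma(\theta, y) = -\psi_\theta(y)$; (b) it is strictly convex, since its weak second derivative equals $\ind{\theta < y} + \Gamma \ind{\theta > y} \ge \min(1, \Gamma) > 0$; and (c) it is coercive, growing at least quadratically in $|\theta|$. The finite-loss hypothesis $\E[\loss_\Gamma(\theta_1(X), Y(1)) \mid Z = 1] < \infty$ combined with joint continuity of $g$ forces $g$ to take finite values everywhere on $\R \times \mc{X}$, and in particular $\E[Y(1)^2 \mid X = x, Z = 1] < \infty$ for every $x$. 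Consequently $g(\cdot, x)$ inherits strict convexity and coercivity, hence attains a unique minimizer $\theta_*(x)$. A dominated-convergence argument---using the bound $|\psi_\theta(y)| \le \max(1, \Gamma)(|y| + |\theta|)$ uniform for $\theta$ in compact sets---justifies interchanging differentiation with the conditional expectation, giving $\partial_\theta g(\theta, x) = -\E[\psi_\theta(Y(1)) \mid X = x, Z = 1]$. Strict convexity then identifies $\theta_*(x)$ as the unique root of this derivative, which is precisely the estimating equation; conversely, by convexity, any root is the minimizer.

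To lift the pointwise statements to the function level, measurability of $x \mapsto \theta_*(x)$ follows from the representation $\theta_*(x) = \inf\{q \in \mathbb{Q} : \partial_\theta g(q, x) \ge 0\}$, an infimum over a countable collection of continuous functions. Pointwise uniqueness then delivers both directions: any $\theta_1$ that satisfies the estimating equation for almost every $x$ must equal $\theta_*$ almost everywhere and hence minimizes \eqref{eqn:opt}; conversely, any minimizer of \eqref{eqn:opt} with the stipulated finite loss must agree with $\theta_*$ almost everywhere, and so satisfies the estimating equation almost everywhere. The main technical obstacle is the dominated-convergence step for pulling the derivative inside the conditional expectation, which is where the finite-loss hypothesis and continuity assumption combine to supply the needed conditional moment bound on $Y(1)$; the remaining ingredients---strict convexity, coercivity, and measurable selection---are routine.
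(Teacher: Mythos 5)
Your proof is correct, and it reaches the same structural endpoint as the paper's---reduce the functional problem \eqref{eqn:opt} to pointwise minimization of $g(t,x) = \E[\loss_\Gamma(t, Y(1)) \mid X=x, Z=1]$ over $t$, then characterize the pointwise minimizer by the first-order condition $\E[\psi_\theta(Y(1)) \mid X=x, Z=1]=0$---but the key technical step is handled by a genuinely different mechanism. The paper invokes normal integrand theory (Rockafellar--Wets, Theorem 14.60) to justify interchanging the infimum over measurable functions with the integral and to obtain, in one stroke, that a measurable minimizer exists and coincides a.e.\ with the pointwise argmin. You instead do this by hand: strong convexity and quadratic coercivity of $\loss_\Gamma(\cdot,y)$ give a unique pointwise minimizer $\theta_*(x)$, a dominated-convergence argument (using $|\psi_\theta(y)| \le \Gamma(|y|+|\theta|)$ and the conditional second moment extracted from finiteness of $g$) lets you differentiate under the conditional expectation, and the representation $\theta_*(x) = \inf\{q \in \mathbb{Q} : \partial_\theta g(q,x) \ge 0\}$ supplies measurability of the selection, since $\partial_\theta g(q,\cdot)$ is a pointwise limit of continuous functions of $x$ and hence Borel. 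Your route is more elementary and self-contained; the paper's is shorter and generalizes immediately to losses that are merely lower semicontinuous rather than convex and smooth in $t$. Two small points of care in your write-up: the derivative $\partial_\theta g(q,\cdot)$ need not itself be continuous in $x$ (measurability is what you actually need and have), and the "conversely" direction silently uses that $\E_1[g(\theta_*(X),X)] \le \E_1[g(\theta_1(X),X)] < \infty$, so that a minimizer differing from $\theta_*$ on a positive-measure set would have strictly larger (finite) risk; both are easily repaired and neither is a gap in substance.
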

\noindent
See Appendix~\ref{section:proof-of-opt-is-good} for proof. Our approach allows
both classical techniques, such as sieves, and flexible use of modern machine
learning methods to estimate $\theta_1(x)$; in our experiments, we demonstrate
how to approximately solve the loss minimization problem~\eqref{eqn:opt} using
gradient boosted decision trees.

\subsection{Nonparametric Estimation with Sieves}
\label{sec:nonparametric-sieves}
To obtain concrete nonparametric guarantees, we consider the method of
sieves~\cite{GemanHw82}, which considers an increasing sequence
$\funcspace_1 \subset \funcspace_2 \subset \cdots \subset \funcspace$ of
spaces of (smooth) functions, where $\funcspace$ denotes all measurable
functions. Here, for a sample size $n$, we take the estimator
$\what{\theta}_1(\cdot)$ solving
\begin{equation}
  \label{eqn:opt-emp}
  \minimize_{ \funcparam \in \funcspace_{n}}
    ~\empE[  \loss_\Gamma\left(\funcparam (\covariaterv), Y(1)\right) \mid Z=1].
\end{equation}
With appropriate choices of the function spaces $\funcspace_n$, it is possible
to provide strong approximation and estimation guarantees. As the loss
$\funcparam \mapsto \loss_\Gamma(\funcparam(\covariate), \outcome)$ is convex,
the empirical optimization problem~\eqref{eqn:opt-emp} is convex when
$\funcspace_{n}$ is a finite dimensional linear space (eg. polynomials,
splines), which facilitates efficient computation~\cite{BoydVa04}.

In Appendix~\ref{sec:sieve-method}, we adapt results for sieve estimators~\cite{Chen07} to show convergence rates for the
solution $\what{\theta}_1(\cdot)$ to the empirical
problem~\eqref{eqn:opt-emp}.  When $\theta_1(X)$ belongs in a
$\holdersmooth$-smooth H\"{o}lder space, in Theorem~\ref{thm:sieve} of the Supplementary Materials, we prove
that the empirical solution $\what{\theta_1}(\cdot)$ is consistent and achieves the following convergence rate (up to logarithmic factors):
\begin{equation*}
  \norm{\what{\theta}_1(\cdot) - \theta_1(\cdot)}_{2, P_1} = O_P\left( \left(\frac{\log n}{n} \right)^{\frac{p}{2p+d}} \right).
\end{equation*}
In the interest of space, we defer a comprehensive treatment to
Appendix~\ref{sec:sieve-method}.

\subsection{Bounding the CATE}

Since $\theta_1(\cdot)$ satisfies $\theta_1(X) \le \E[Y(1) \mid X, Z=0]$ under
the $\Gamma$-\cornfield{} condition, altogether $\mu_1^{-}(\cdot)$ defined
below provides the lower bound
\begin{equation*}
  \mu_1^{-}(X) = \mu_{1,1}(X) e_1(X) + \theta_1(X) e_0(X) \le \mu_1(X).
\end{equation*}
By symmetry, letting
$\mu_0^+(X)=\mu_{0,0}(X) e_0(X) + \theta_0(X) e_1(X)$
where
\begin{align}
  \label{eq:population-mu-0-upper}
  \theta_0(X) & =
  \sup_{L~\textup{measurable}} \E[Y(0) L(Y(0)) \mid Z = 0, X] \\
  & \qquad ~
  \text{s.t.} ~ 0 \le L(y) \le \Gamma L(\tilde{y})~\text{all}~y,\tilde{y},
  ~~~ \E[L(Y(1)) \mid Z=0, X] = 1,
  \nonumber
\end{align}
we have the parallel conclusion that $\mu_0^+(X) \ge \mu_0(X)$ holds under
$\Gamma$-\cornfield{} condition. Similar to the above,
$\theta_0(\cdot)$ is a unique minimizer of
$\E[\loss_{\Gamma^{-1}}(\funcparam(\covariaterv), Y(0)) \mid \treatmentrv=0]$.\footnote{Convergence results for sieve estimators of $\theta_0(\cdot)$ again fall out
of our results in Section~\ref{sec:sieve-method}.}

Thus, under the $\Gamma$-\cornfield{} condition~\eqref{eq:cornfield}, a valid
lower bound on the CATE is simply
\begin{equation}
  \tau^{-}(X) = \mu_1^{-}(X) - \mu_0^{+}(X).
  \label{eq:ate-lower-bound}
\end{equation}  
We summarize our developments in the theorem below.
\begin{restatable}{thm}{lowerboundproperty}
  \label{thm:lower-bound-property}
  Let $\Gamma \ge 1$ and $\{Y(1), Y(0), Z, X, U\}$ satisfy
  condition~\eqref{eq:cornfield} and the conditional independence
  assumption~\eqref{eq:indep}. Let $\tau^{-}(X)$ in \eqref{eq:ate-lower-bound}
  use $\theta_1(X)$ and $\theta_0(X)$ solving the optimization problems
  \eqref{eq:population-theta} and \eqref{eq:population-mu-0-upper} with the
  same $\Gamma$. When $\E[|Y(z)| \mid X]<\infty$ for $z = 0, 1$ and $0 < e_1(X)<1$,
  \begin{equation*}
    \tau^{-}(X) \le \E[Y(1) - Y(0) \mid X].
  \end{equation*}
\end{restatable}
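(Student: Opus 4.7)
The plan is to prove $\tau^{-}(x) \le \E[Y(1)-Y(0) \mid X=x]$ by bounding each potential outcome's conditional mean separately and then subtracting. I will focus on establishing $\mu_1^-(x) \le \mu_1(x)$; the argument $\mu_0^+(x) \ge \mu_0(x)$ is entirely symmetric (swapping the roles of $Z=1$ and $Z=0$ and replacing $\Gamma$ by $\Gamma^{-1}$, $\inf$ by $\sup$).

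First, I would decompose $\mu_1(x)$ according to \eqref{eq:mu-1-decomposition}. The term $\E[Y(1) \mid Z=1, X=x] = \mu_{1,1}(x)$ is identified from the observed data, so the only task is to show that $\theta_1(x) \le \E[Y(1) \mid Z=0, X=x]$; multiplying this inequality by $e_0(x) \ge 0$ and $\mu_{1,1}(x)$'s term by $e_1(x)$ and adding then yields $\mu_1^-(x) \le \mu_1(x)$. Here the assumption $0 < e_1(x) < 1$ ensures both conditional expectations are well-defined and the decomposition is meaningful.

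To show $\theta_1(x) \le \E[Y(1) \mid Z=0, X=x]$, I would invoke Lemma~\ref{lem:bdd-lr-no-cov}: under the $\Gamma$-\cornfield{} condition~\eqref{eq:cornfield} and independence~\eqref{eq:indep}, the true likelihood ratio
\[
L^\star(y, x) = \frac{\dif{P(Y(1)\in \cdot \mid Z=0, X=x)}}{\dif{P(Y(1)\in\cdot \mid Z=1, X=x)}}(y)
\]
is well-defined and satisfies $0 \le L^\star(y,x) \le \Gamma L^\star(\tilde{y},x)$ a.s., so by \eqref{eq:lr-form}
\[
\E[Y(1) \mid Z=0, X=x] = \E[Y(1)\, L^\star(Y(1),x) \mid Z=1, X=x].
\]
Since $L^\star(\cdot,x)$ is a valid conditional likelihood ratio, it integrates to one under $P(Y(1)\in\cdot \mid Z=1, X=x)$, so it is feasible in the optimization problem~\eqref{eq:population-theta} defining $\theta_1(x)$. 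By definition of the infimum, $\theta_1(x) \le \E[Y(1)\mid Z=0, X=x]$, which is the needed inequality.

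The main obstacle, if any, is mild and bookkeeping in nature: verifying that for $P$-almost every $x$ the feasibility and integrability conditions hold so that Lemma~\ref{lem:bdd-lr-no-cov} applies pointwise in $x$, and that $\E|Y(z)| < \infty$ ensures the bounds $\theta_1(x)$ and $\theta_0(x)$ are finite so the infimum/supremum are actually attained (or approximated by feasible ratios). Once both inequalities $\mu_1^-(x) \le \mu_1(x)$ and $\mu_0^+(x) \ge \mu_0(x)$ are in hand, subtracting yields
\[
\tau^-(x) = \mu_1^-(x) - \mu_0^+(x) \le \mu_1(x) - \mu_0(x) = \E[Y(1)-Y(0)\mid X=x],
\]
which completes the proof. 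Crucially, no joint model on $(Y(1), Y(0))$ is needed because the argument bounds each potential-outcome mean using only its own marginal conditional distribution under the shared $\Gamma$-\cornfield{} constraint.
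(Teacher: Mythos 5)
Your proposal is correct and follows essentially the same route as the paper's own proof: both invoke Lemma~\ref{lem:bdd-lr-no-cov} to show the true likelihood ratio $L^\star(\cdot,x)$ is feasible for the variational problem~\eqref{eq:population-theta}, conclude $\theta_1(x) \le \E[Y(1)\mid Z=0, X=x]$ (and symmetrically $\theta_0(x) \ge \E[Y(0)\mid Z=1, X=x]$), and then plug these into the decomposition~\eqref{eq:mu-1-decomposition} and subtract. No gaps.
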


A natural estimator for $\tau^{-}(x)$ is the difference in conditional
expected potential outcomes
\begin{gather*}
\what{\tau}^{-}(x)= \what{\mu}_1^{-}(x)-\what{\mu}_0^+(x),\\
\what\mu_1^-(x) = \what{\mu}_{1, 1}(x)\what{e}_1(x) + \what{\theta}_1(x) \what{e}_0(x), ~~\mbox{and}~~
\what\mu_0^+(x) = \what{\mu}_{0, 0}(x) \what{e}_0(x) + \what{\theta}_0(x) \what{e}_1(x),
\end{gather*}
where $\what{e}_z(\cdot)$ and $\what{\mu}_{z, z}(\cdot)$ are suitable
estimators for the nominal propensity score $e_z(\cdot)$ and the observed
potential outcome's mean function $\mu_{z,z}(\cdot),$ respectively.  A
variety of classical nonparametric methods and machine learning methods can estimate these
regression
functions~\cite{chernozhukov2018double,wager2015adaptive,chen1999improved}.
To understand convergence of $\what{\tau}^{-}(\cdot)$, consider the
convergence of these regression estimates. Specifically,
assume that the estimators $\what{e}_1(\cdot)$
and $\what{\mu}_{z, z}(\cdot)$
satisfy that
\begin{align*}
&\norm{\what{e}_1(\cdot) - e_1(\cdot)}_{2, P} = O_P(r_{n, 1}), \\
  &\norm{\what{\mu}_{1, 1}(\cdot) - \mu_{1, 1}(\cdot)}_{2, P_1} = O_P(r_{n, 2}),
    ~~\norm{\what{\mu}_{0, 0}(\cdot) - \mu_{0, 0}(\cdot)}_{2, P_0} = O_P(r_{n, 3}), \\
  & \norm{\what{\theta}_1(\cdot) - \theta_1(\cdot)}_{2, P_1} = O_p(r_{n, 4}), ~~\norm{\what{\theta}_0(\cdot) - \theta_0(\cdot)}_{2, P_0} = O_p(r_{n, 5}), 
\end{align*}
where $r_{n,j}$ depend on the model assumptions and estimation method. We
assume $0 < \epsilon \le e_1(x) \le 1-\epsilon$, so 
$\|\cdot\|_{2,P_1} \asymp \|\cdot\|_{2,P_0} \asymp \|\cdot\|_{2,P}$.  Then,
$\what{\tau}^-(\cdot)$ is a consistent estimator, and
\begin{equation*}
  \norm{\what{\tau}^-(\cdot) - \tau^-(\cdot)}_{2, P} =
  O_p\left(r_{n, 1} + r_{n, 2} + r_{n, 3} + r_{n, 4} + r_{n, 5}\right).
\end{equation*}

Under assumptions stated in Appendix~\ref{sec:sieve-method}
(\ref{assumption:holder-smooth}--\ref{assumption:lebesgue-equiv}, including
that $\theta_z$ belongs in a $\holdersmooth$-smooth H\"{o}lder space), our
sieve estimators~\eqref{eqn:opt-emp} for $\theta_z$ achieves the asymptotic
convergence rate
\begin{equation*}
  \norm{\what{\theta}_z - \theta_z}_{2, P_z} = \widetilde{O}_P\left( n^{-\frac{p}{2p+d}} \right) ~~z \in \{0, 1\},
\end{equation*}
where the notation $\widetilde{O}_P(\cdot)$ hides logarithmic factors.  Under
similar smoothness and regularity assumptions, \citet{chen1999improved} establish
that sieve estimators $\what{e}_z(\cdot)$ and
$\what{\mu}_{z, z}(\cdot)$ for $e_z$ and $\mu_{z,z}$ can also achieve a convergence rate of
$r_{n,j}=\widetilde{O}(n^{-\frac{\holdersmooth}{2\holdersmooth + \covdim}}).$ Consequently,
$$\norm{\what{\tau}^-(\cdot) - \tau^-(\cdot)}_{2, P} = \widetilde{O}_P\left(n^{-\frac{p}{2p+d}}\right),$$
where the convergence rates reflect typical behavior of (minimax optimal) non-parametric estimators of a regression
function~\cite{newey1997convergence,Stone80}.
These constitute the high order
terms of the approximation error for estimating the CATE $\tau(x)$ without
unobserved confounding~\cite{kunzel2017meta}, if the smoothness of the CATE $\tau(\cdot)$ is of a similar order to the individual parameters $\theta_z(\cdot)$, $\mu_z(\cdot)$ and $e_z(\cdot)$. Interesting future work would be to develop a method that adapts to the complexity of $\tau^{-}(\cdot)$, itself, as done by \citet{nie2017learning} and \citet{kennedy2020optimal}.







\section{Bounds on the Average Treatment Effect}
\label{sec:semiparametric}

Given the bounds developed in Section~\ref{sec:cate-sensitivity} for the
conditional average treatment effect $\tau(\cdot)$, we now turn to bounding
the average treatment effect (ATE) $\tau$ by marginalizing over $X$
\begin{align}
\tau^- & \defeq \E[\tau^-(X)] = \E\left[ \mu_1^-(X) - \mu_0^+(X) \right].
\label{eq:ate-lower-lazy}
\end{align}
Because $\tau^-(x) \le \tau(x)$ for any
$x$, $\tau^- \le \tau$ is a lower bound of the ATE. Rewriting $\tau^-$
as
\begin{equation}
  \tau^- = \mu_1^- - \mu_0^+
  ~~ \mbox{where} ~~
  \Bigg\{\begin{array}{l}
    \mu_1^- = \E[ \mu_1^-(X)] = \E\left[ Z Y(1) + (1-Z)\funcparam_1(X)
      \right] \\
    \mu_0^+ = \E[ \mu_0^+(X)] = \E\left[ (1-Z)Y(0) +
      Z\funcparam_0(X) \right],
  \end{array}
  \label{eq:one-side-functional}
\end{equation}
we estimate $\mu_1^-$ and $\mu_0^+$ separately and combine the resulting
estimators.

In Section~\ref{sec:semiparametric-method}, we construct a
semiparametric estimator for the bound $\tau^-$ that is conceptually similar to the AIPW estimator under unconfoundedness. We show in Section~\ref{sec:semiparametric-convergence} that it achieves
$\sqrt{n}$-consistency even when (nonparametric) estimates of the nuisance
parameters (e.g.  $e_z(\cdot)$, $\mu_{z, z}(\cdot)$, $\funcparam_1(\cdot)$)
only converge at slower rates. We focus on lower bounds for the
potential outcome $Y(1)$ as other cases are symmetric. We conclude our theoretical discussion by complementing our methodological
developments with optimality guarantees (Section~\ref{sec:hypothesis-test}). We show that our approach is
asymptotically unimprovable for testing a null of no treatment effect and
unobserved confounding satisfying the $\Gamma$-\cornfield{} condition against a
positive alternative.

\subsection{Estimation procedure}
\label{sec:semiparametric-method}


We construct a score $T(V; \eta)$ to estimate $\mu_1^-$ 
similar to the AIPW estimator of the ATE in the absence of unobserved confounding,  where $V = (X, Y, Z)$ and $\eta$ represents a set of nuisance parameters defined below. The score $T(V; \eta)$ comes from calculating the semiparametric influence function for $\mu_1^-$ from representation in \eqref{eq:one-side-functional} using the method described by \citet{newey1994asymptotic}, and augmenting the representation with the influence function. To this end, by computing the pathwise derivative of the functional in \eqref{eq:one-side-functional} with respect to a parametric subfamily of the nonparametric model, and matching to the form derived by \citet{newey1994asymptotic}, we see that the remaining term in the influence function is
\begin{equation*}
    \alpha_1(V; \theta_1, e_1, \nu_1) = \treatmentrv \frac{\psi_{\theta_1(X)}(Y)(1-e_1(X))}{\nu_1(X)e_1(X)},
\end{equation*}
which depends on the nuisance parameters $\theta_1(x)$ and $e_1(x)$, and a new nuisance parameter,
\begin{equation}
  \label{eqn:weight-normalization-def-a}
  \nu_1(x)=P(Y \ge \funcparam_1(x) \mid Z=1, X=x) +
  \Gamma P(Y < \funcparam_1(x) \mid Z=1, X=x),
\end{equation}
which serves as a weight
normalization factor. In this, the function $\psi_\theta(y)$ refers to the one defined in Eq.~\eqref{eq:psi-defn}. Adding the term $\alpha_1(V; \eta)$ to the representation in \eqref{eq:one-side-functional} gives the augmented score
\begin{equation}
  T(X, Y, Z; \theta_1, e_1, \nu_1) \defeq  \treatmentrv Y + (1-\treatmentrv)\funcparam_1(X) + \treatmentrv \frac{\psi_{\theta_1(X)}(Y)(1-e_1(X))}{\nu_1(X)e_1(X)},
  \label{eq:aug-score}
\end{equation}
that we use for estimation.
We have $E_P[T(X, Y, Z; \theta_1, e_1, \nu_1)] = \mu_1^-$ since
$\E_P[\psi_{\theta_1(X)}(Y) \mid Z=1, X] = 0.$ 
By virtue of its augmented form, the score $T(\cdot; \cdot)$ is
insensitive to estimates in the nuisance parameters, formalized by the Neyman
orthogonality condition~\citep{Neyman59}:
\begin{definition}
  \label{def:neyman}
  Let $Q, \eta \mapsto \E_Q[S(V; \eta)]$ be a statistical functional with $Q$ a distribution over $V$, and nuisance
  parameter $\eta \in \Lambda$, where we take $\Lambda$ to be a subset of a
  normed vector space containing the true nuisance parameter $\eta_0$. The
  score $S$ is Neyman orthogonal at $P$ if for all $\eta \in \Lambda$,
  the derivative $\frac{d}{dr} S(P; \eta_0 + r(\eta-\eta_0))$
  exists for $r \in [0, 1)$, and is zero at $r = 0$.
\end{definition}
As \citet[Section 2.2.5]{chernozhukov2018double} shows, a score formed by
adding the influence function adjustment $\alpha_1(v)$ from the pathwise derivative as in
\citet{newey1994asymptotic} is Neyman orthogonal. Therefore, we
expect Neyman orthogonality of the functional~\eqref{eq:aug-score} constructed
in this way; we verify this formally in the proof of
Theorem~\ref{thm:semiparametric} in the Supplementary Materials.

We construct a semiparameteric plug-in estimator for the augmented functional,
and show that estimation errors of the nuisance parameters multiply to reduce
their influence on our final estimator. Concretely, we prove that our
augmented esitmator preserves $\sqrt{n}$ consistency provided that our
nuisance estimates converge at a rate of $o_P(n^{-1/4})$ in $\|\cdot\|_{2,P}$
norm.  This draws important connections to the classical doubly-robust
AIPW estimator under no unobserved
confounding. Recalling the definitions~\eqref{eq:mean-function} and
\eqref{eq:prop-score} of $\mu_{z,z}(x)$ and $e_1(x)$ (respectively), the standard AIPW
estimator for $\E[Y(1)]$ is
\begin{equation}
  \label{eq:aipw}
  \what{\mu}_{1,\text{AIPW}} = \frac{1}{n} \sum_{i=1}^{n}\left[
    \what{\mu}_{1,1}(X_i) + \frac{Z_i}{\what{e}_1(X_i)}\left(Y_i -
    \what{\mu}_{1,1}(X_i)\right)\right].
\end{equation}
Assuming all confounding variables are observed \eqref{eq:indepsim}, the
AIPW~\eqref{eq:aipw} is an asymptotically efficient estimator of
$\mu_1$~\cite{HiranoImRe03}.  The AIPW also satisfies the Neyman orthogonality
condition, which \citet{chernozhukov2018double} used to show that the AIPW
estimator~\eqref{eq:aipw} with cross-fitting (described below) enjoys the root $n$ rate so long as the nuisance
parameters can be estimated at the rate $o_p(n^{-1/4})$.  Our approach
generalize the AIPW estimator~\eqref{eq:aipw} under the $\Gamma$-\cornfield{}
condition, and reduces to the AIPW when $\Gamma = 1$.

We use an efficient sample-splitting recipe for constructing an augmented
estimator for $\mu_1^-$ by adapting \citet{chernozhukov2018double}'s cross-fitting
meta-procedure for Neyman-orthogonal functionals to our augmented score $T(\cdot)$. Letting $K \in \naturals$ be the number of
folds for cross-fitting, randomly split the data into $K$ folds of
approximately equal size. With slight abuse of notation, let $\mathcal{I}_k$
be the indices corresponding to the observations in the $k$-th part as well as the
corresponding observation themselves.

For each $k$, using the sample $\mathcal{I}_{-k}$ of observations \emph{not} in the $k$-th fold, we compute
\begin{enumerate}
\item an estimator of $\theta_1(x)$, denoted by $\what\funcparam_{1,k}(x),$ using
  the procedure described in Section~\ref{sec:cate-sensitivity}
\item  an estimator of $e_1(x)$, denoted by $\what{e}_{1,k}(x)$, and
  let $\what{e}_{0, k}(x)=1-\what{e}_{1, k}(x);$
\item an estimator of $\nu_1(\cdot)$, denoted by $\what\nu_{1,k}(\cdot)$, using the procedure described in Section~\ref{sec:nu-est}
\end{enumerate}
Estimating $\nu_{1}(\cdot)$ in the last step is more involved, as it depends
on $\theta_1(\cdot)$, so we defer the construction of $\what\nu_{1,k}(\cdot)$
to Section~\ref{sec:nu-est}. Under appropriate regularity conditions---for
example, sufficient smoothness of $\funcparam_1(x)$, $e_1(x)$, and
$\nu_1(x)$---these estimators attain $o_P(n^{-1/4})$ convergence in
$\norm{\cdot}_{2,P}$.
%
In the end, our proposed cross-fitting estimator of $\mu_1^-$ is
\begin{align}
  \label{eq:orthogonal-estimator}
  \what{\mu}_1^- =
  \frac{1}{n} \sum_{k=1}^K \sum_{i\in\mathcal{I}_k} \left\{\treatmentrv_i Y_i + (1-\treatmentrv_i)\what{\funcparam}_{1,k}(X_i) + \treatmentrv_i \frac{\psi_{\what{\funcparam}_{1,k}(X_i)}(Y_i)\what{e}_{0,k}(X_i)}{\what{\nu}_{1,k}(X_i)\what{e}_{1,k}(X_i)}\right\},
\end{align}
with an estimator $\what{\mu}_0^+$ for $\mu_0^+$ constructed similarly.  This
estimator is natural; when $\Gamma = 1$, we recover the cross-fitting version
of the standard doubly robust AIPW estimator~\eqref{eq:aipw}. While the
estimator satisfies the orthogonality conditions of
\citet{chernozhukov2018double} that imply a form of local robustness for
$\widehat{\theta}_1(\cdot)$ near $\theta_1(\cdot)$, we explain below why it isn't doubly robust.

\subsection{Asymptotic properties and inference}
\label{sec:semiparametric-convergence}

To establish asymptotic normality of $\what{\mu}_1^-$, we require a few
assumptions. Consistency of $\what{\mu}_1^-$ follows from weak regularity
conditions and the consistency of $\what{\theta}_1(\cdot)$, which we address
via Assumption~\ref{assumption:bounded-variance}. Asymptotic normality
requires stronger conditions (Assumptions~\ref{assumption:problem-regularity}
and~\ref{assumption:nuisance-est}), in turn allowing us to establish
Theorem~\ref{thm:semiparametric} on the asymptotic normality of
$\what{\mu}_1$.

\begin{assumption}
  \label{assumption:bounded-variance}
  There exist $\epsilon>0$ and  $0<\constlow <\constup $
  such that (a) $\E[|Y(1)|] < \infty,$ (b) $\|
  \what{\theta}_{1}(\cdot) - \theta_1(\cdot) \|_{1,P} \cp 0$, (c)
  $e_1(X) \in [\epsilon, 1-\epsilon]$ almost surely, (d) $\P([\essinf \what{e}_1(X), \esssup
    \what{e}_1(X)] \subset [\epsilon, 1-\epsilon]) \to 1,$ and (e) $\P\left(\constlow
  \le \what{\nu}_1(x)
  \le \constup ~ \mbox{for~all~}x\right) \to 1$.
\end{assumption}

Assumption~\ref{assumption:bounded-variance}(a-c) are
slightly stronger than the usual assumptions for justifying consistency of
the AIPW estimator for the ATE $\tau$ in the absence of unobserved confounding
\cite{bang2005doublerobust,chernozhukov2018double}.
When $\|\what{e}_1(\cdot) - e_1(\cdot)\|_{\infty,P} \cp 0$,
Assumption~\ref{assumption:bounded-variance}(c) implies
Assumption~\ref{assumption:bounded-variance}(d), and similarly when
$\|\what{\nu}_1(\cdot) - \nu_1(\cdot)\|_{\infty,P} \cp 0$, $\nu_1(x) \in [1,
  \Gamma]$ implies Assumption~\ref{assumption:bounded-variance}(e).



Assumption~\ref{assumption:bounded-variance}(b) is necessary, and cannot be removed by alternatively assuming consistency of the other nuisance parameters. The $\alpha(V; \eta)$ with the true $\theta_1(\cdot)$ plugged in has mean zero regardless of the nominal propensity score used. In this case, the proposed estimator is consistent in estimating $\mu_1^-.$  However, if an incorrect $\theta_1(\cdot)$ is plugged in to $T(V; \eta),$ straightforward computation shows that $\E[T(V; \eta)]$ depends on the $\theta_1(\cdot)$ plugged in, even with the correct $e_1(\cdot)$ and $\nu_1(\cdot)$. Therefore, $\hat{\mu}_1^-$ is not globally doubly robust; the Neyman orthogonality condition only guarantees a local form of robustness.


\begin{restatable}{thm}{thmconsistency}
  \label{thm:consistency}
  Under Assumption~\ref{assumption:bounded-variance}, the
  estimator~\eqref{eq:orthogonal-estimator} satisfies
  $\what{\mu}_1^- \cp \mu_1^-$.
\end{restatable}
\noindent See the Supplementary Materials (Section~\ref{sec:proof-consistency}) for the proof.
We now turn to stronger regularity assumptions for the weak convergence of $\widehat{\mu}_1^-.$

\begin{assumption}
  \label{assumption:problem-regularity}
  (a) There exist $q > 2$, and $C_q < \infty$ such that
  $\E[|Y(1)|^q] \le C_q$, and (b) $Y(1)$ has a conditional density
  $p_{Y(1)}(y \mid X=x, \treatmentrv=1)$ with respect to the Lebesgue measure
  and $\sup_{x,y} p_{Y(1)}(y \mid \treatmentrv=1, X=x) < \infty$.
\end{assumption}

\begin{assumption}
  \label{assumption:nuisance-est}
  $\what{\eta}_1 = (\what{\funcparam}_1, \what{\nu}_1, \what{e}_1)$ is a
  consistent estimator of $\eta_1 \defeq (\funcparam_1, \nu_1, e_1)$ and (a)
  $\| \what{\eta}_1(\cdot) - \eta_1(\cdot) \|_{2,P}= o_P(n^{-1/4})$,
  (b)~$\| \what{\eta}_1(\cdot) - \eta_1(\cdot) \|_{\infty,P} = O_P(1)$.
\end{assumption}

Assumptions~\ref{assumption:problem-regularity} (a) is no stronger than the
standard regularity conditions needed for existence of asymptotically normal estimators of the ATE without
unobserved confounding~\cite{chernozhukov2018double}. Assumption
\ref{assumption:problem-regularity}(b) ensures that the term
$\theta(\cdot) \mapsto \E[ Z\psi_{\theta(X)}\{Y(1)\} \mid X]$
is sufficiently smooth to control fluctuations due to estimating
$\theta_1(\cdot)$.  Inspection of the proof of
Theorem~\ref{thm:semiparametric} to come shows that we may relax
Assumption~\ref{assumption:problem-regularity}(b): if $\theta_1(x)$ and
$\what{\theta}_1(x)$ have range ${\cal A}_1(x)$, we may replace
\ref{assumption:problem-regularity}(b) with
\begin{equation}
  \label{eqn:condition-replacing-density}
  \esssup_{X} \sup_{y \in {\cal A}_1(X)} p_{Y(1)}(y \mid
  \treatmentrv=1, X) < \infty,
\end{equation}
which is satisfied, eg., when the outcome $Y(z)$ is binary and
$P(Y(z) = y \mid \treatmentrv = z, X)<1$ for $y \in \{0,1\}$, because
$\what{\theta}(X) \in (0, 1)$ eventually and $p(y \mid Z=1, X) = 0$ for
$y \not \in \{0,1\}$.

The convergence rate conditions for estimating nuisance parameters in
Assumption~\ref{assumption:nuisance-est} are relatively standard in
semi-parametric estimation
\cite{newey1994asymptotic,chernozhukov2018double}, but nonetheless this theoretical requirement can be restrictive and hard to achieve to certain applications. For example, while for $e_1(\cdot),$ the
conditional mean of observed random variables, a variety of methods can provide
$o_P(n^{-1/4})$ consistency, they still require the data generating distribution to meet appropriate conditions and the sample size to be large relative to the dimension of covariate
\cite{wager2017estimation,chernozhukov2018double}.  The estimators
$\what{\theta}_1(\cdot)$ from Section~\ref{sec:cate-sensitivity} and
$\what{\nu}_1(\cdot)$ from Section~\ref{sec:semiparametric-method} achieve the
convergence rates in Assumption~\ref{assumption:nuisance-est} under
appropriate smoothness conditions on $\theta_1(\cdot)$ and $\nu_1(\cdot).$ For
instance, if Assumptions~\ref{assumption:holder-smooth},
\ref{assumption:bdd-error}, and \ref{assumption:lebesgue-equiv} hold with
$p > d/2$, then Theorem~\ref{thm:sieve} shows that estimating
$\funcparam_1(x)$ as in Section~\ref{sec:cate-sensitivity} with linear sieves
(see Examples~\ref{example:polynomials} and \ref{example:splines}) will
satisfy Assumption~\ref{assumption:nuisance-est}. Section~\ref{sec:nu-est}
provides an efficient enough estimator of $\nu_1(\cdot)$ when $p > d/2$.
Under these assumptions, the following theorem gives the asymptotic
distribution of the estimator $\what{\mu}_1^-$
in~\eqref{eq:orthogonal-estimator}, with asymptotic variance
\begin{equation*}
  \sigma_1^2 \defeq \var\left[
    \treatmentrv Y + (1-\treatmentrv)\funcparam_1(X) + \treatmentrv
    \frac{\psi_{\funcparam_1(X)}(Y)(1-e_1(X))}{\nu_1(X)e_1(X)}
  \right].
\end{equation*}
We use the following consistent estimator of the asymptotic variance
\begin{equation*}
\what\sigma_1^2 \defeq \frac{1}{n}\sum_{k=1}^K\sum_{i\in \mathcal{I}_k}\left[ \treatmentrv_i Y_i + (1-\treatmentrv_i)\what{\funcparam}_{1,k}(X_i) + \treatmentrv_i \frac{\psi_{\what{\funcparam}_{1,k}(X_i)}(Y_i)\what{e}_{0,k}(X_i)}{\what{\nu}_{1,k}(X_i)\what e_{1,k}(X_i)} -\what{\mu}_1^-\right]^2.
  \end{equation*}
\begin{restatable}{thm}{semiparametricnormality}
  \label{thm:semiparametric}
  Let Assumptions~\ref{assumption:bounded-variance},
  \ref{assumption:problem-regularity}, and \ref{assumption:nuisance-est} hold.
  Then, $\what{\mu}_1^-$ given in Eq.~\eqref{eq:orthogonal-estimator} is
  asymptotically normal with $ \sqrt{n}(\what{\mu}_1^- - \mu_1^-)
  \overset{d}{\to} \normal(0, \sigma_1^2)$.
  Furthermore, $\what{\sigma}_1^2 \cp \sigma_1^2$, and
  $ \frac{\sqrt{n}}{\what{\sigma_1}} (\what{\mu}_1^- - \mu_1^-)
  \overset{d}{\to} \normal(0, 1)$.
\end{restatable}
\noindent See Section~\ref{sec:proof-semiparametric} in the Supplementary Materials for a
proof. To bound $\tau$ from below, let
$$\what{\tau}^- = \what\mu_1^- - \what\mu_0^+, $$
where $\tilde{\psi}_\theta(y)=\Gamma (y-\theta)_+-(y-\theta)_-$,
$$\what\mu_0^+=\frac{1}{n}\sum_{k=1}^K \sum_{i \in {\cal I}_k} \left[(1-\treatmentrv_i)Y_i + \treatmentrv_i \what\funcparam_{0,k}(X_i) + (1-\treatmentrv_i) \frac{\tilde{\psi}_{\what{\funcparam}_{0,k}(X_i)}(Y_i)\what{e}_{1,k}(X_i)}{\what{\nu}_{0,k}(X_i)\what{e}_{0,k}(X_i)}\right],$$
 and  $\what{\nu}_{0,k}(\cdot)$ is the nonparametric estimator of 
  $$\nu_0(X)=P(Y \le \funcparam_0(X) \mid Z=0, X) + \Gamma P(Y > \funcparam_0(X) \mid Z=0, X)$$
  based on data in ${\cal I}_{-k}.$ A simple
extension of Theorem~\ref{thm:semiparametric} shows 
$$\sqrt{n}(\what{\tau}^- - \tau^-) \to \normal(0,
\sigma_{\tau^-}^2),$$
as $n \rightarrow \infty,$ where
\begin{equation}
 \sigma_{\tau^-}^2 \defeq \var \begin{aligned}[t] \Bigg[&
\treatmentrv Y + (1-\treatmentrv)\funcparam_1(X) + \treatmentrv \frac{\psi_{\funcparam_1(X)}(Y)e_0(X)}{\nu_1(X)e_1(X)} \\
&-(1-\treatmentrv) Y - \treatmentrv \funcparam_0(X) - (1-\treatmentrv)\frac{\tilde{\psi}_{\funcparam_0(X)}(Y)e_1(X)}{\nu_0(X)e_0(X)}
\Bigg].
 \end{aligned}
 \label{eq:var-est-tau-hat-minus}
\end{equation}
 Furthermore, a consistent estimator of the variance $\sigma_{\tau^-}^2$ is
\begin{align}
\label{eq:est-tau-hat-var}
  \what\sigma_{\tau^-}^2
  =  \frac{1}{n}\sum_{k=1}^{K}\sum_{i\in \mathcal{I}_k}
  \bigg[& 
\treatmentrv_i Y_i + (1-\treatmentrv_i)\what\funcparam_{1,k}(X_i) + \treatmentrv_i \frac{\psi_{\what\funcparam_{1,k}(X_i)}(Y_i)\what{e}_{0,k}(X_i)}{\what{\nu}_{1,k}(X_i)\what{e}_{1,k}(X_i)}  \\
& - (1-\treatmentrv_i)Y_i - \treatmentrv_i \what\funcparam_{0,k}(X_i) - (1-\treatmentrv_i) \frac{\tilde{\psi}_{\what{\funcparam}_{0,k}(X_i)}(Y_i)\what{e}_{1,k}(X_i)}{\what{\nu}_{0,k}(X_i)\what{e}_{0,k}(X_i)}-\what\tau^-
\bigg]^2
  \nonumber
\end{align}
and
$[\what\tau^- - z_{1-\alpha/2}\what\sigma_{\tau^-}/\sqrt{n}, \what\tau^- + z_{1-\alpha/2}\what\sigma_{\tau^-}/\sqrt{n}] $
is a $100(1-\alpha)\%$  asymptotic confidence interval for $\tau^-.$ 
The proof is \emph{mutatis mutandis} identical to that of
Theorem~\ref{thm:semiparametric}.


Importantly, our bounds define a confidence set for $\tau =
\E[Y(1)-Y(0)]$. The same approach as in Section~\ref{sec:cate-sensitivity},
but up-weighting large values of $Y(1)$ and small values of $Y(0)$, provides
an estimate $\what{\tau}^+$ of $\tau^+$ that upper bounds the ATE. The
limiting distribution of $\what{\tau}^+$ is also normal.  With these estimators,
we may construct a confidence interval for the ATE,
\begin{equation}
  \what{\operatorname{CI}}_{\tau} = \left[\what\tau^- - z_{1-\alpha/2}\frac{\what\sigma_{\tau^-}}{\sqrt{n}}, \what\tau^+ + z_{1-\alpha/2}\frac{\what\sigma_{\tau^+}}{\sqrt{n}} \right],
  \label{eq:ci4ate}
\end{equation}
where $\what{\sigma}_{\tau^+}^2$ is a consistent estimator of the variance of $\sqrt{n}(\what{\tau}^+-\tau^+).$
Because $\tau^- \le \tau \le \tau^+$, this confidence interval has
appropriate asymptotic coverage:
\begin{restatable}{cor}{corincludeate}
  \label{cor:include-ate}
  Let $P$ satisfy the $\Gamma$-\cornfield{} condition~\eqref{eq:cornfield},
  conditional independence~\eqref{eq:indep}, and
  Assumptions~\ref{assumption:bounded-variance}--\ref{assumption:nuisance-est}. Let
  $\what{\operatorname{CI}}_{\tau}$ be defined as in \eqref{eq:ci4ate}. For
  $\tau = \E[Y(1) - Y(0)]$, we have
  \begin{equation*}
    \liminf_{n \to \infty} P(\tau \in \what{\operatorname{CI}}_\tau) \ge 1-\alpha.
  \end{equation*}
\end{restatable}

\begin{remark}
  It is possible to extend Theorem~\ref{thm:semiparametric} to provide
  confidence intervals uniform over $\mc{P}$. In other words, the coverage
  probability of the relevant confidence intervals converge to the desired
  level uniformly over all the distributions in $\mc{P}.$ To do so,
  Assumption~\ref{assumption:nuisance-est} must be uniform over a class of
  distributions $\mc{P}$ satisfying
  Assumption~\ref{assumption:problem-regularity}, for instance by assuming
  there exists sequences $\Delta_n \to 0$ and $\delta_n \to 0$ such that
  \begin{equation*}
    \sup_{P \in \mc{P}} P\left(\|\what{\eta}_1(\cdot) - \eta_1(\cdot)\|_{2,P} > n^{-1/4}\delta_n\right) < \Delta_n.
  \end{equation*}
  Previous work~\cite{chen2015optimal} shows that series estimators for
  the conditional regression function (example 1 in
  Section~\ref{sec:cate-sensitivity}) converge uniformly; extending these
  results to the estimation of $\theta_1(\cdot)$ and $\nu_1(\cdot)$ is beyond the
  scope of the present work.
\end{remark}

\subsection[Estimation of nu]{Construction of
  $\what\nu_{1,k}(\cdot)$ and its asymptotic properties}
\label{sec:nu-est}

\newcommand{\lossnu}{\bar{\loss}_{\Gamma}}

The above results assumed access to a well-behaved estimate of the
the weighted probability
$\popprob(X) = 1 + (\Gamma - 1)\P(Y(1) \ge \popfunc(X)
\mid Z=1, X)$. Here, we describe a nonparametric estimator via a
loss function: defining
\begin{equation*}
  \lossnu(\probfunc, \funcparam, \outcome) \defeq \half \left[
    1 + (\Gamma - 1)\indic{\outcome \ge \funcparam} -
    \probfunc  \right]^2,
\end{equation*}
$\popprob$ uniquely solves the optimization problem
\begin{equation}
  \label{eqn:opt-prob}
  \minimize_{\probfunc(\cdot)~\textup{measurable}}
  ~ \E[\lossnu\left\{\probfunc(X), \popfunc(X), Y(1) \right\} \mid Z=1].
\end{equation}
The natural sieve estimator for $\popprob(\cdot)$ minimizes the empirical
version of \eqref{eqn:opt-prob} under finite-dimensional sieves. However, this
requires knowledge of $\popfunc(\cdot)$, which itself must be
estimated. Therefore, consider the following (nested) cross-fitting approach:
\begin{enumerate}[1.]
\item Partition the sample $\mathcal{I}_{-k}$ into two
  \emph{independent} sets, $\mathcal{I}_{-k,1}$ and $\mathcal{I}_{-k,2}$.
\item Let $\what{\theta}_{1k}^{\nu_1}(\cdot)$ be an estimator of
  $\popfunc(\cdot)$ based on the first subset $\mathcal{I}_{-k,1};$
\item For a sequence of sieve parameter spaces $\probspace_1 \subseteq
  \cdots \subseteq \probspace_{n} \subseteq \cdots \subseteq \probspace,$
  estimate $\what{\nu}_{1,k}$ minimizing
  the plug-in version of the
  population problem~\eqref{eqn:opt-prob},
  \begin{equation}
    \label{eqn:opt-emp-prob}
    \minimize_{ \probfunc(\cdot) \in 1 + (\Gamma -1)\probspace_{n}}
    ~ \E_{n,2}^{(k)}\left[\lossnu
      \big(\probfunc(\covariaterv), \what{\theta}_{1k}^{\nu_1}(\covariaterv),
      Y \big) \mid Z = 1 \right],
  \end{equation}
\end{enumerate}
where $\E_{n,2}^{(k)}$ is the empirical expectation with respect to the second subset $\mathcal{I}_{-k,2}$.

When $\nu_1(X)$ belongs in a $\probsmooth$-smooth H\"{o}lder space, in
Proposition~\ref{prop:prob-sieve} in the Supplementary Materials, we prove that the empirical
solution $\what{\nu}_{1,k}(\cdot)$ to the problem~\eqref{eqn:opt-emp-prob}
achieves the minimax optimal nonparametric rate (up to logarithmic factors)
\begin{equation*}
  \norm{\what{\nu}_{1, k}(\cdot) - \nu_1(\cdot)}_{2, P_1} = O_P\left( \left(\frac{\log n}{n} \right)^{\frac{\probsmooth}{2\probsmooth+d}} \right).
\end{equation*}
If $\probsmooth>d/2,$ then
$\norms{\what{\nu}_{1,k} - \nu_1}_{2, P}=o_P(n^{-1/4})$, satisfying the
assumptions in Theorem~\ref{thm:semiparametric}. We defer a rigorous treatment
to Appendix~\ref{sec:nu-sieve} as our results heavily build on the standard
theory of sieve estimation~\cite{Chen07}. In
Proposition~\ref{prop:prob-sieve}, we demonstrate sufficient conditions
for the convergence of $\what{\nu}_{1,k}$ needed for the lower bound
estimator~\eqref{eq:orthogonal-estimator} and its asymptotic normality via
Theorem~\ref{thm:semiparametric}: with sufficient smoothness of $\nu_1$, it is
possible to efficiently estimate lower and upper bounds on the average
treatment effect.

\subsection[Design sensitivity and optimality of the estimator for tau-]{Design sensitivity and optimality of our bound on the ATE}
\label{sec:hypothesis-test}

We complement our methodological development so far with optimality results
for our worst-case bounds. By construction, our approach yields a tight bound on the mean of each unobserved potential outcome. We extend these results to the ATE by constructing an instance where our bound is tight. That is, we construct a family of data generating distributions such that whenever our bounds cannot infer the sign of the ATE, the ability to test whether or not the ATE is positive is intrinsically difficult. To this end, we study a pointwise asymptotic level $\alpha$
hypothesis test for the composite null
\begin{equation}
  H_0(\Gamma) : \E[Y(1)] \le \E[Y(0)] ~~\mbox{and the}~
    \Gamma\text{-\cornfield{} condition~\eqref{eq:cornfield} holds}
  \label{eq:gaussian-null-regular}
\end{equation}
under
Assumptions~\ref{assumption:bounded-variance}--\ref{assumption:nuisance-est},
and analyze its design sensitivity~\cite{Rosenbaum10}. Let $H_1 : Q$ be an
alternative with a positive average treatment effect $\tau = \E_Q[Y(1) - Y(0)] > 0$
and no confounding ($\Gamma=1$ in Eq.~\eqref{eq:cornfield}).  Let
$t_n^\Gamma = t_n^\Gamma\{(Y_i, Z_i, X_i)_{i=1}^n\} \in \{0, 1\}$ be a
pointwise asymptotic level
$\alpha$ test for the null hypothesis~\eqref{eq:gaussian-null-regular}, where $t_n^\Gamma=1$, if the null hypothesis $\tau\le 0$ is rejected.   The
\emph{design sensitivity}~\cite{Rosenbaum10,rosenbaum2011new} of the
sequence $\{t_n^\Gamma\}$ is the threshold $\Gamma_\design$ such
that the power $Q(t_n^\Gamma = 1) \to 0$ for $\Gamma > \Gamma_{\design}$ and
the power $Q(t_n^\Gamma = 1) \to 1$ for $\Gamma <
{\Gamma}_\design$. In other words, if the selection bias satisfies $\Gamma >
\Gamma_{\design}$, the test cannot differentiate the alternative $\tau>0$ from the null $\tau\le 0$ regardless of the sample size;
if $\Gamma < \Gamma_{\design}$, the test always rejects the null under the alternative $Q$ for sufficiently large $n$ (we define $\Gamma_{\design}=\infty$ when no such threshold exists). Given the
confidence interval for $\tau$ described in Section~\ref{sec:semiparametric-convergence},
a natural asymptotic level $\alpha$ test for $H_0(\Gamma)$,
the hypothesis~\eqref{eq:gaussian-null-regular},
is
\begin{equation}
  \label{eq:the-test}
  \psi_n^\Gamma\{(Y_i, Z_i, X_i)_{i=1}^n\}
  \defeq \ind{\what\tau^- > z_{1-\alpha}\frac{\what\sigma_{\tau^-}}{\sqrt{n}} }.
\end{equation}

We consider the design sensitivity of $\psi_n^\Gamma$ in the simplified
setting without covariates, which allows us to demonstrate its optimality. In
this case, $\{Y(0), Y(1)\} \independent Z \mid U$, the simplified
$\Gamma$-\cornfield{} condition \eqref{cond:simple-gamma-selection} holds,
$ \{Y(0), Y(1)\}\independent Z$ under the alternative $Q$ (recall
Eq.~\eqref{eq:indepsim}), and $\theta_1,\theta_0 \in \R$ are constants defined
in Eq.~\eqref{eq:population-theta} and
Eq.~\eqref{eq:population-mu-0-upper}. 



\begin{restatable}{prop}{thmdesignsensitivity}
  \label{prop:design-sensitivity-model}
  Let $\psi_n^\Gamma$ be defined as in Eq.\eqref{eq:the-test}, so that
  $\psi_n^\Gamma$ is asymptotically level $\alpha$ for $H_0(\Gamma)$ in
  \eqref{eq:gaussian-null-regular}. For an alternative $H_1 = \{Q\}$, define
  \begin{equation*}
    \tau^-(\Gamma)
    \defeq
    \E_Q[Z Y(1) + (1 - Z) \theta_1 - (1 - Z) Y(0) - Z \theta_0],
  \end{equation*}
  where $\theta_1,~\theta_0$ solve
  ~\eqref{eq:population-theta} and ~\eqref{eq:population-mu-0-upper}, respectively, at level $\Gamma$
  for the distribution $Q.$
  Then, either the design sensitivity $\Gamma_\design$ of
  $\psi_n^\Gamma$ is infinite or it uniquely solves the equation
  $\tau^-(\Gamma) = 0$.
\end{restatable}

\noindent See Section~\ref{sec:proof-design-sensitivity} in the Supplementary Materials for proof.  While
there is no simplified expression for $\Gamma_\design$ in general, it can be derived explicitly for some special alternatives $Q$. For instance, in
Supp. Materials, Section~\ref{sec:proof-gaussian-design-sensitivity}, we prove the following
result for Gaussian alternatives.
\begin{restatable}{cor}{cordesignsensitivity}
  \label{cor:designsensitivity}
  Let $\psi_n^\Gamma$ be as in Eq.~\eqref{eq:the-test}.  For the alternative $H_1(Q):$
  $$\left\{Y(1) \distas \normal\left(\frac{\tau}{2}, \sigma^2\right),Y(0) \distas
  \normal\left(-\frac{\tau}{2}, \sigma^2\right), Z\distas \bernoulli(\frac{1}{2}) \right\},$$
  $\psi_n^\Gamma$ has design sensitivity
  \begin{equation}
    \label{eq:design-sensitivity-gaussian}
    \Gamma_\design^\gauss
    \defeq
    -\frac{\int_0^\infty y \exp\left(- \tfrac{(y - \tau)^2}{2\sigma^2}\right)
      \dif{y}}{
      \int_{-\infty}^0 y \exp\left(- \tfrac{(y - \tau)^2}{2\sigma^2}\right)
      \dif{y}}
    = \frac{\phi(\frac{\tau}{\sigma}) + \frac{\tau}{\sigma} \Phi(\frac{\tau}{
        \sigma})}{
      \phi(\frac{\tau}{\sigma}) -
      \frac{\tau}{\sigma} \Phi(\frac{\tau}{\sigma})},
  \end{equation}
  where $\Phi$ and $\phi$ denote the standard Gaussian
  CDF and density, respectively.
\end{restatable}
  
The next proposition shows that the test $\psi_n^\Gamma$ is optimal for alternative $H_1(Q)$ given in Corollary \ref{cor:designsensitivity},
as any asymptotic level $\alpha$ test of $H_0(\Gamma)$ has
design sensitivity $\ge \Gamma^\gauss_\design$
(see Supplementary Materials Section~\ref{sec:proof-opt-design-sensitivity} for proof).
\begin{restatable}{prop}{optdesignsensitivity}
  \label{prop:opt-design-sensitivity}
  Let $H_0(\Gamma)$ be as in \eqref{eq:gaussian-null-regular}. There exists
  $a \in [1/(1+\sqrt{\Gamma}), \sqrt{\Gamma}/(1+\sqrt{\Gamma})]$ such that
  for the alternative $H_1 (Q):$
  $$\left\{Y(1) \distas \normal\left(\frac{\tau}{2},
      \sigma^2\right),Y(0) \distas \normal\left(-\frac{\tau}{2},
      \sigma^2\right), Z\distas \bernoulli(a) \right\},$$ if
  $\Gamma \ge \Gamma_\design^\gauss$, there exists a probability measure
  $P \in H_0(\Gamma)$ for $\{Y(1), Y(0), Z, U\},$ such that for all
  $n \in \N$, all tests $t_n$, and $(Y_i, Z_i)$ i.i.d.,
  \begin{equation*}
    P(t_n\{(Y_i, Z_i)_{i=1}^n\} = 1) = Q(t_n\{(Y_i, Z_i)_{i=1}^n\} = 1).
  \end{equation*}
\end{restatable}


\begin{remark}
  Our proof uses a specific choice of $a$ to simplify the algebra;
  solving a system of nonlinear equations for the
  distribution of $P_{Z|U}$ allows for any marginal
  $P(Z=1)$.\vspace{-1em}
\end{remark}

\begin{remark}
  The above optimality results for $\psi_n^\Gamma$ extend to alternatives beyond Gaussian distributions, so long as $Y(0) \overset{d}{=} C(1 - Y(1))$, for some constant
  $C > 0.$ The proof relies on this symmetry in the potential outcomes to
  construct a distribution under $H_0(\Gamma)$ matching $Q$ over the observed
  data, $\{(Y_i(Z_i), Z_i), i=1,\cdots, n\}.$ This symmetry is 
  unnecessary if one is interested in the mean (or conditional mean) of a
  single potential outcome $\E[Y(1)]$ (or $\E[Y(1) \mid X=x]$, in which case
  the test $\psi_n^\Gamma$ achieves the optimal design sensitivity for any
  alternative for which the proposed method is consistent.
\end{remark}


\vspace{-20pt}
\section{Numerical experiments}
\label{sec:experiments}

To complement our theoretical analysis in Section~\ref{sec:semiparametric}, we
examine the performance of the method using Monte-Carlo simulation and a real
dataset from an observational study examining the effect of fish consumption
on blood mercury levels. We evaluate two implementations of the methodology
developed in Sections~\ref{sec:cate-sensitivity}
and~\ref{sec:semiparametric}---one based on the sieve estimators studied in
Section~\ref{sec:nonparametric-sieves} and the other based on gradient boosted
trees fit to minimize the weighted squared loss~\eqref{eqn:opt}.

The Monte-Carlo simulations support the validity of the inference procedure in
realistic settings. We find that the semiparametric approach presented in
Section~\ref{sec:semiparametric} accurately bounds the average treatment
effect under unobserved confounding, when our assumptions about the extent of
confounding $\Gamma$ hold. We show that by using machine learning to optimize
the loss function in \eqref{eq:gamma-loss}, our method can scale to reasonably
high dimensional data. Additionally, we show that the bounds on the ATE are
tight in practice, and empirically compare their conservativeness to that of
the matching-based approach from \citet{rosenbaum2014weighted}. Finally, we
confirm our findings on a real observational study, demonstrating that our
semiparametric approach provides valid yet narrow bounds on the ATE $\tau$.

\subsection{Method Implementations}
\label{sec:implementation}

When implementing an estimator to bound the ATE $\tau$ using the method
developed in Section~\ref{sec:semiparametric}, one must choose estimators of
the nuisance parameters ${e}_z(\cdot),$ ${\theta}_z(\cdot),$ and
${\nu}_z(\cdot)$, and select their hyperparameters. In the first
implementation, we stay close to the estimators used in our theoretical
analysis with formal convergence guarantees: we estimate the propensity score
$\what{e}_1(\cdot)$ by a random forest \cite{athey2019generalized}, and
$\what\theta_z(\cdot)$ (respectively, $\what{\nu}_z(\cdot)$) by the
non-parametric estimator from Section~\ref{sec:cate-sensitivity} (respectively
Section~\ref{sec:nu-est}) using the polynomial (power series) sieve. The sieve
size and regularization were selected via 10-fold cross-validation, and then
used with 10-fold cross-fitting for the semiparametric estimation.
To estimate $\what\nu_z(\cdot)$, we use an iterative, instead of nested, form of cross-fitting that sacrifices some independence between folds to be more computationally efficient, described in Section~\ref{sec:practical-nu} of the Supplement.
Nonparametric estimation of the
propensity score $e_1(\cdot)$ leads to variability that requires weight
clipping to stabilize the semiparametric estimates
\cite{lee2011weight,tsiatis2007comment}. We clipped weights worth more than
$1/20$ of the total weight of the samples.

In one experiment below, we use a variant of this implementation where we fit
$\what{e}_{1, k}(\cdot)$ via a simple logistic regression; the logistic
regression model for the propensity score is misspecified, so the lower-order statistical bias from the Neyman orthogonality will not hold; the statistical bias of the estimator will depend on the convergence rate of the nonparametric
estimator of $\what{\theta}_{z}(\cdot)$, which will not converge
sufficiently quickly. As a result, we expect that the statistical bias will dominate the convergence of $\what\tau^{-}$ to $\tau^-$.

In the second implementation, we use \texttt{xgboost}~\cite{ChenGu16} to fit a
machine learning estimator for all of the nuisance parameters, emphasizing the
generality and scalability of our methods. \texttt{xgboost} is a gradient
boosted tree method that performs well with tabular data, despite having
little formal theory regarding its convergence guarantees. Therefore, we used
the simulations discussed below as a way to assess it's appropriateness as a
nuisance parameter estimator for our semiparametric method from
Section~\ref{sec:semiparametric-method}. In this implementation, we fit the
estimator $\what{\theta}_z(\cdot)$ to minimize the weighted squared
loss~\eqref{eqn:opt}, and fit the remaining nuisance parameters to minimize
the $\log$ loss for predicting a binary target (treatments or
the targets $\ind{Y_i \ge \theta}$ for estimating $\nu_z(\cdot)$). As with the previous
implementation, $\what\nu_{z}(\cdot)$ are fit with the iterative cross-fitting described in Section~\ref{sec:practical-nu} of the Supplement. Similarly, all tuning parameters (boosting iterations, regularization,
subsampling fraction, minimum node size) are selected via 10-fold
cross-validation. We found that when estimating a generic nuisance parameter $\eta(\cdot)$, representing either $\theta_z(\cdot), \nu_{z}(\cdot)$, or
$e_1(\cdot)$, adding an additional intercept term as follows improved performance signficantly: After fitting $\what{\eta}_z(\cdot)$ using
\texttt{xgboost}, we
fit $\beta_0$ in the model $\what{\eta}_z(X) + \beta_0$ using the
appropriate loss function for the nuisance parameter.

\subsection{Simulations}

The purpose of the simulation study is to demonstrate the good coverage of the
proposed confidence intervals for reasonable choices of sample size $n$ and
covariate dimension $d,$ and to understand some of the practical properties of
the proposed methods relative to existing methods for sensitivity analysis,
such as matching methods \cite{rosenbaum2014weighted}. In all of the
simulations, we generate the data as follows for a randomly chosen set of
coefficients $\beta$ and $\mu$: draw $X \distas \uniform[0,1]^d$, and
conditional on $X=x$, draw
\begin{equation*}
  U \distas \normal\left\{0, \left(1 + \tfrac{1}{2}\sin(2.5 x_1)\right)^2
  \right\},
  ~~~
  Y(0) = \beta^\top x + U, ~~~
  Y(1) = \tau + \beta^\top x + U.
\end{equation*}
We draw the treatment assignment according to
\begin{equation*}
  Z \distas \bernoulli\left\{\frac{\exp\left(\alpha_0+x^\top \mu +
    \log(\Gamma_{\rm data}) \ind{u > 0} \right)}{1+\exp\left(\alpha_0+x^\top \mu +
    \log(\Gamma_{\rm data}) \ind{u > 0} \right)}\right\},
\end{equation*}
where $\alpha_0$ is a constant controlling the overall treatment assignment ratio. 
This model satisfies the $\Gamma_{\rm data}$-\cornfield{} condition, since
\begin{equation*}
  \frac{P(Z = 1 \mid X=x, U=u)}{P(Z = 0 \mid X=x, U=u)}
  \frac{P(Z = 0 \mid X=x, U=\tilde{u})}{P(Z = 1 \mid X=x, U=\tilde{u})}
  = \Gamma_{\rm data}^{\ind{u>0}-\ind{\tilde{u}>0}}\in [\Gamma_{\rm data}^{-1},
  \Gamma_{\rm data}]
\end{equation*}
Across all experiments, we set $\tau = 1$ and \smash{$\Gamma_{\rm data} =
\exp(1)$}. Unless otherwise stated, we used the same $\Gamma$ in our
sensitivity analysis as the level of confounding \smash{$\Gamma_{\rm data}$} used to
generate the data. Here, unobserved confounding inflates estimates that assume
unconfoundedness: when $Z=1$, $U$ is more likely to be positive than when
$Z=0$, which inflates the mean of treated units, i.e.,
$\E[Y(1) \mid Z=1, X=x] > \E[Y(1) \mid X=x]$. We expect that the upper bound
from the sensitivity analysis is above the true ATE, while the lower bound is
only slightly below the truth, assuming that we choose $\Gamma \ge \Gamma_{\rm data}$, but not by too much. 

In the first set of simulations, we simulate data with a moderate number of
observed covariates ($d=20$), where we observe the proposed sensitivity
analysis procedure quickly approaches it's asymptotic behavior as sample size
grows. 
For these simulations, we use the \texttt{xgboost} implementation, validating
the performance of our semiparametric method when the nuisance parameters are
estimated well, even if lacking in formal convergence guarantees.

Table~\ref{tab:coverage} summarizes the empirical performance of the \texttt{xgboost} implementation based on 500 simulations. As
expected, the average lower bound estimator $\what\tau^-$ is close
to the true ATE, while the average upper bound estimator $\what\tau^+$ is
higher than the true ATE to account for unmeasured confounding.
The estimators of the standard errors of $\what\tau^-$ and
$\what\tau^+$ are fairly accurate when $n\ge 1000$. When $n$ is small, they slightly
underestimate the true standard errors.  The empirical coverage probability of the
confidence interval of ATE is conservative because of unobserved confounding.
As the unobserved confounding introduces upward bias,
the lower bound $\tau^-\approx \tau,$ and we expect that the coverage
probability of the confidence interval of $\tau$ is close to 97.5\% for
large $n$, which is confirmed by the simulation results in
Table~\ref{tab:coverage}.

\begin{table}[t]
  \caption{Simulation results of the proposed method with $20$ observed
    covariates. $\what{\tau}^-$, the empirical average of $\what{\tau}^-;$
    $\what{\sigma}_{\tau^-}$, the empirical average of
    $\what{\sigma}_{\tau^-};$ SD. of $\what\tau^-$, the empirical standard
    deviation of $\what\tau^-$; $\what{\tau}^+$, the empirical average of
    $\what{\tau}^+;$ $\what{\sigma}_{\tau^+}$, the empirical average of
    $\what{\sigma}_{\tau^+};$ SD. of $\what\tau^+$, the empirical standard
    deviation of $\what\tau^+$; and coverage, the empirical coverage
    probability of the 95\% confidence intervals
    $\what{\operatorname{CI}}_{\tau}.$ ($\mbox{ATE} =\tau = 1$ and
    $\Gamma_{\rm data} = \exp(1)$.)}
\begin{center}
\begin{tabular}{|c | c| c| c| c| c| c| c|}
\hline
     $n$   &  $\what{\tau}^-$ & SD. of $\what\tau^-$ & $\what{\sigma}_{\tau^-}$ &   $\what\tau^+$ &  SD. of $\what\tau^+$ & $\what{\sigma}_{\tau^+}$ &  Coverage \\
     \hline
500.0       &  1.008 &  0.085 &  0.081 &     1.424 &  0.082 &     0.077 &    0.952 \\
1000.0      &  1.000 &  0.059 &  0.057 &     1.404 &  0.058 &     0.053 &    0.978 \\
2000.0      &  0.998 &  0.042 &  0.040 &     1.395 &  0.040 &     0.038 &    0.966 \\
4000.0      &  0.995 &  0.029 &  0.028 &     1.387 &  0.027 &     0.027 &    0.980 \\
\hline
\end{tabular}
\end{center}
\label{tab:coverage}
\end{table}

In the second set of simulations, the dimension $d$ of the covariates, sample
size $n$, and marginal treatment probability $P(Z=1)$ match those from the
real observational study on fish consumption and blood mercury levels in the
next subsection ($d=8$, $n = 1100$, $P(Z=1)=0.21$), so that we can validate
our approach before interpreting the results on real data. We use the
nonparametric sieve implementation for estimating the nuisance parameters in
the real observational study, and so we use this implementation here.  As
estimation with sieves is challenging in this setting due to the eight
covariates and a nonlinear model, in Table~\ref{tab:realistic-coverage} we
observe that the variance estimates $\what{\sigma}^{\pm}$ underestimate the
standard deviation of $\what{\tau}^{\pm}$ by approximately $10\%$.  We also
evaluate the performance when the propensity score estimator is mis-specified,
as discussed in Section~\ref{sec:implementation}.

We compare our semiparametric methods to the $M$-estimator based matching
method \texttt{sensitivitymw}~\cite{rosenbaum2014weighted}. Note that our
simulation uses a constant treatment effect, as assumed by matching
methods. The confidence intervals for the matching approach is conditional on
the design (and assumes exact matched pairs), whereas our intervals are
unconditional. The confidence intervals for the ATE from the matching method
appear conservative, coming from having a lower design sensitivity and larger
standard errors (Table~\ref{tab:realistic-coverage}). The larger standard
errors could potentially be reduced using covariate adjustment in
matching~\citep{Rosenbaum02Cov}. 

\begin{table}[t]
\caption{Simulation results of the proposed method (parametric and nonparametric) and the existing matching method with eight observed covariates. $\what{\tau}^-$, the empirical average of $\what{\tau}^-;$  $\what{\sigma}_{\tau^-}$, the empirical average of $\what{\sigma}_{\tau^-};$ SD. of $\what\tau^-$, the empirical standard deviation of $\what\tau^-$;  $\what{\tau}^+$, the empirical average of $\what{\tau}^+;$  $\what{\sigma}_{\tau^+}$, the empirical average of $\what{\sigma}_{\tau^+};$ SD. of $\what\tau^+$, the empirical standard deviation of $\what\tau^+$; and Coverage, the empirical coverage probability of the 95\% confidence intervals $\what{\operatorname{CI}}_{\tau}.$ ($\mbox{ATE} =\tau = 1$ and
    $\Gamma_{\rm data} = \exp(1)$.)}
\begin{center}
\begin{tabular}{|l | c| c| c| c| c| c| c|}
\hline
     Approach   &  $\what{\tau}^-$ & $\what{\sigma}_{\tau^-}$ & SD. of $\what\tau^-$ &  $\what\tau^+$ & $\what{\sigma}_{\tau^+}$ &  SD. of $\what\tau^+$ & Coverage \\
     \hline
Nonparametric & 0.995 & 0.073 & 0.081 & 1.775 & 0.069 & 0.076 & 0.960 \\
Misspecified & 0.988 & 0.071 & 0.081 & 1.775 & 0.068 & 0.076 & 0.970 \\
Matching & 0.869 & - & 0.097 & 2.125 & - & 0.097 & 0.996 \\
\hline
\end{tabular}
\end{center}
\label{tab:realistic-coverage}
\vspace{-19pt}
\end{table}

In the third set of simulations, we include only a single covariate ($d=1$),
and evaluate the performance of the semiparametric method with the
\texttt{xgboost} implementation, and the matching method described above over
a range of sample sizes. One of the challenges with interpreting the above
simulations is that the results will include a mixture of errors---statistical
error from having finite observations, and population-level uncertainty on the
treatment effect.  With one covariate, the semiparametric and approximate matching methods should have a small statistical bias relative to their standard errors, so the average of the point estimates from
simulations with a large sample size should approximate the asymptotic sensitivity bounds well. This allows us to compare the asymptotic behavior of the semiparametric
method and matching methods, over a variety of values of $\Gamma$ used in
analysis (while holding $\Gamma_{\rm data}$ used in the data-generation fixed). Like previous settings, Table~\ref{tab:low-dim} shows that the
bounds from matching are more conservative than the semiparametric approach.


\begin{table}[t]
  \caption{Simulation results of the proposed method and matching with $1$
    observed covariate. For each method, $0.025$-quantile and average of the
    lower bound, followed by average and $0.975$-quantile of the upper bound,
    and the coverage of the confidence interval are reported. Comparing the
    average bounds for each method shows that the semiparametric method has a
    less conservative lower bound as $\Gamma$ varies, but is still below the
    true ATE when the appropriate $\Gamma$ is used, which is $1$ in this
    simulation; the coverage shows that it still covers the true ATE at the
    appropriate level. Varying the sample size shows that the statistical bias
    of both methods is already negligible with very small sample
    sizes. ($\mbox{ATE} =\tau = 1$ and $\Gamma_{\rm data} = \exp(1)$.)}
\footnotesize
\hspace{-2em}
\begin{tabular}{|c|ccccc|ccccc|}
\hline
& \multicolumn{5}{c|}{Semiparametric Method} & \multicolumn{5}{c|}{Matching Method}\\
\hline
& \makecell{Lower\\$0.025$-\\quantile} & Lower & Upper &  \makecell{Upper\\$0.975$-\\Quantile} &  \makecell{Cover-\\age} & \makecell{Lower\\$0.025$-\\quantile} & Lower & Upper &  \makecell{Upper\\$0.975$-\\Quantile} & \makecell{Cover-\\age}\\
\hline
 $\Gamma$ & \multicolumn{10}{c|}{Fixing $n=1000$}
\\
\hline
$1$  &       1.08 &  1.18 &      1.18 &       1.29 &     0.06 &        1.23 &  1.42 &        1.42 &       1.65 &           0.00 \\
$\exp(0.5)$  &       1.00 &  1.09 &      1.27 &       1.37 &     0.56 &        0.93 &  1.11 &        1.73 &       1.96 &           0.61 \\
$\exp(1)$  &       0.90 &  1.00 &      1.35 &       1.46 &     0.97 &        0.58 &  0.80 &        2.05 &       2.30 &           1.00 \\
$\exp(2)$  &       0.71 &  0.81 &      1.52 &       1.64 &     1.00 &       -0.13 &  0.17 &        2.69 &       3.01 &           1.00 \\
$\exp(3)$ &       0.51 &  0.63 &      1.69 &       1.82 &     1.00 &       -0.90 & -0.48 &        3.35 &       3.75 &           1.00 \\
$\exp(4)$ &       0.30 &  0.46 &      1.85 &       2.01 &     1.00 &       -1.67 & -1.16 &        4.02 &       4.49 &           1.00 \\
\hline
n & \multicolumn{10}{c|}{Fixing $\Gamma = \exp(1)$ as in simulation}
\\
\hline
100.0       &       0.65 &  1.00 &      1.37 &       1.69 &     0.97 &        0.11 &  0.82 &        2.05 &       2.83 &           0.99 \\
1000.0      &       0.90 &  1.00 &      1.35 &       1.46 &     0.97 &        0.58 &  0.80 &        2.05 &       2.30 &           1.00 \\
4000.0      &       0.94 &  1.00 &      1.35 &       1.41 &     0.98 &        0.68 &  0.81 &        2.05 &       2.19 &           1.00 \\
\hline
\end{tabular}
\label{tab:low-dim}
\end{table}

\subsection{Real observational data}
We apply our method to analyzing an observational study to infer
the effect of fish consumption on blood mercury levels and compare our
result to that of a prior analysis based on covariate
matching~\cite{zhao2017sensitivity}. The data consist of observations from
2,512 adults in the United States who participated in a single
cross-sectional wave of the National Health and Nutrition Examination Survey
(2013-2014). All participants answered a questionnaire regarding their
demographics and food consumption and had their blood mercury concentration measured (data available in the R package
CrossScreening).

High fish consumption is defined as individuals who reported $>12$ servings
of fish or shellfish in the previous month per their questionnaire, low fish
consumption as 0 or 1 servings of fish. The outcome of interest is $\log_2$
of total blood mercury concentration (ug/L). The primary objective is to
study if fish consumption causes higher mercury concentration. To match
prior analysis \cite{zhao2017sensitivity}, we excluded one individual with
missing education level and seven individuals with missing smoking status
from the analysis, and imputed missing income data for 175 individuals using
the median income. In addition, we created a supplementary binary covariate
to indicate whether the income data were missing. There are a total of 234
treated individuals (those with high fish consumption), 873 control
individuals (low fish consumption). The data include eight covariates
(gender, age, income, whether income is missing, race, education, ever
smoked, and number of cigarettes smoked last month).  Our approach uses the
same $\Gamma$-\cornfield{} model as the previous matched-pair analysis
in \cite{zhao2017sensitivity}, so results for our proposed method and
the analysis based on these 234 matched pairs are nearly comparable. However, the 
confidence intervals constructed for matching are conditional on the covariates
and choice of matched pairs. As
Table~\ref{tab:obs-study} shows (see also Fig.~\ref{fig:obs-study}), when
$\Gamma>\exp(1)$, our method achieves tighter confidence intervals around
the effect of fish consumption on blood mercury level: our confidence
intervals are nested within those based on the matching
method. For example, when $\Gamma=\exp(3)$ (representing a relatively large
selection bias), the 95\% confidence interval for the increase in average
$\log_2$-transformed blood mercury concentration caused by high fish
consumption is [0.47, 3.29] based on our new method and [-0.24, 4.48] based
on the matching method. While the former excludes zero, suggesting a
significant association in the presence of unknown confounding, the latter
includes the null association and is not statistically significant.
The confidence intervals for our method are always shorter except
when $\Gamma = 1$, ie. under unconfoundedness.

\begin{table}
\caption{Comparison to sensitivity results of \cite{zhao2017sensitivity} using the same data set. Because the same sensitivity model as the matched analysis was used, results can be compared directly. We demonstrate that the method can achieve tighter bounds on the average treatment effect both in point estimates and confidence intervals.}
\vspace{1em}
\hspace{-3.2em}\begin{tabular}{| c| c c c c c|c c c c c|}
\hline
& \multicolumn{5}{c|}{Semiparametric Method} & \multicolumn{5}{c|}{Matching Method} \\
\hline
$\Gamma$ & \makecell{Lower \\95\% CI} & Lower & Upper & \makecell{Upper \\ 95\% CI} & \makecell{Length\\ of CI} & \makecell{Lower \\95\% CI} & Lower & Upper & \makecell{Upper \\ 95\% CI} & \makecell{Length\\ of CI} \\
\hline
1                     & 1.51 & 1.74 & 1.74 & 1.97 & 0.46 & 1.9   & 2.08 & 2.08 & 2.25 & 0.35 \\
$\exp(0.5)$ & 1.31 & 1.53 & 2.03 & 2.26 & 0.95 & 1.57  & 1.75 & 2.41 & 2.59 & 1.02 \\
$\exp(1)$   & 1.07 & 1.27 & 2.27 & 2.47 & 1.4  & 1.25  & 1.45 & 2.74 & 2.94 & 1.89 \\
$\exp(2)$   & 0.74 & 0.91 & 2.77 & 2.89 & 2.15 & 0.58  & 0.87 & 3.36 & 3.65 & 3.07 \\
$\exp(3)$   & 0.47 & 0.6  & 3.19 & 3.29 & 2.82 & -0.23 & 0.28 & 3.97 & 4.48 & 4.71 \\
$\exp(4)$   & 0.18 & 0.29 & 3.55 & 3.63 & 3.45 & -     & -    & -    & -    & -    \\
\hline
\end{tabular}
\label{tab:obs-study}
\end{table}

\begin{figure}[ht]
  \centering
  \caption{Visual comparison to sensitivity results of matching method in
    \cite{zhao2017sensitivity} using the same data set. See numerical
    details in Table~\ref{tab:obs-study}. The filled areas represent the
    estimated bounds on the average treatment effect, whereas the dotted /
    dashed lines represent their confidence intervals. For
    values of $\Gamma$ larger than $\exp(0.5)$, our approach
    produces intervals with shorter length.}
  \label{fig:obs-study}
  \includegraphics[width=4in]{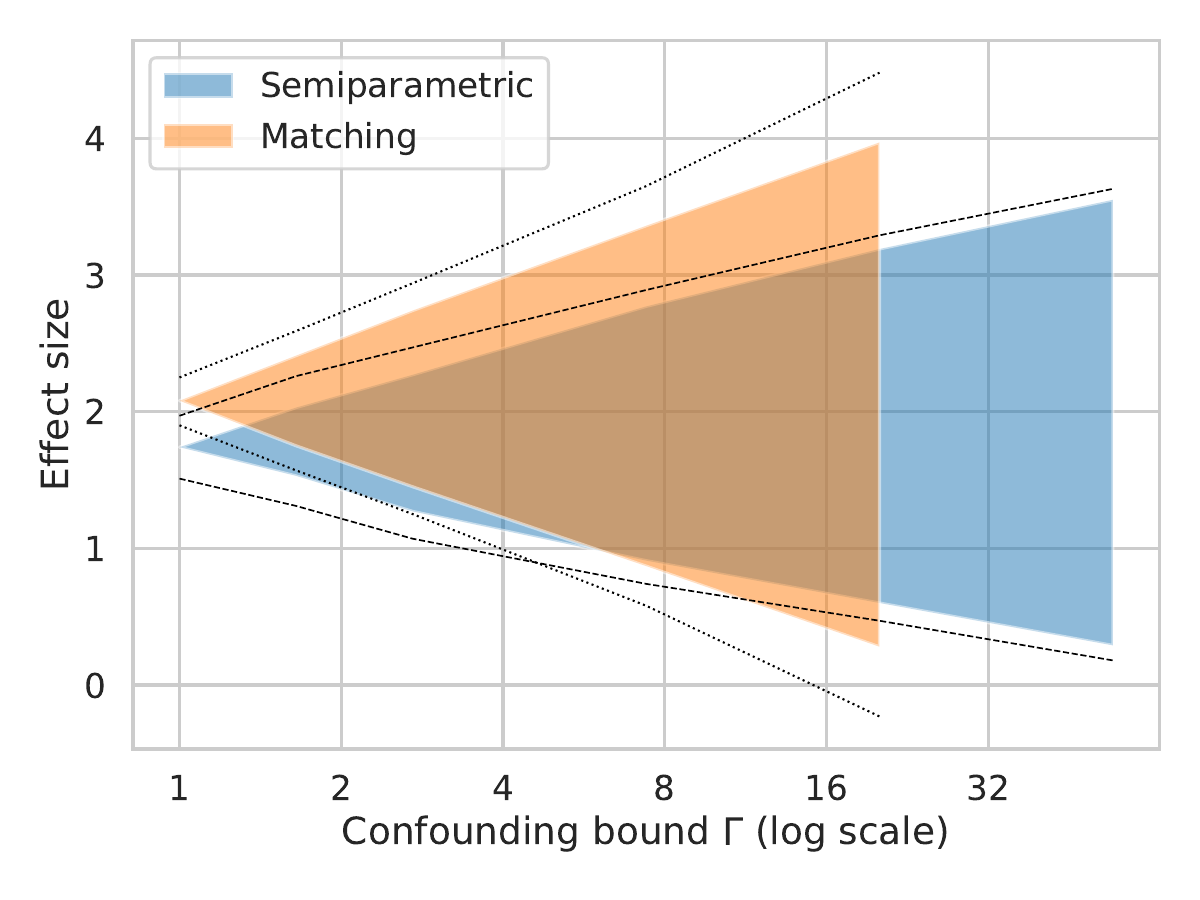}
\end{figure}

\section{Discussion}
\label{section:discussion}

The $\Gamma$-\cornfield{} model \eqref{eq:cornfield} relaxes the
unconfoundedness assumption \eqref{eq:indepsim} required for the
identification of causal treatment effects. We propose estimators
$\what{\tau}^{\pm}(\cdot)$ for upper and lower bounds on the CATE $\tau(x)$
and $\what{\tau}^{\pm}$ for the ATE $\tau$ under the $\Gamma$-\cornfield{}
condition~\eqref{eq:cornfield} and derive their asymptotic properties.  Our
loss minimization approach is practical and scalable, allowing the use of
flexible machine learning methods. Theoretically, we demonstrate the
statistical advantages of our approach, replicating the advantageous $o_{p}(n^{-p/(2p+d)})$ convergence of
series estimation procedures \cite{newey1997convergence} and root $n$ consistency of doubly robust semi-parametric
estimates~\cite{bang2005doublerobust,chernozhukov2018double} in the absence of unobserved
confounding~\eqref{eq:indepsim}. Our simulation studies and experimental
evidence from real observational data confirm these advantages exist in
practical finite sample regimes as well.

Our bounds demonstrate a few important phenomena for understanding the
robustness of causal inference with observational data. First, as we
note in
Section~\ref{sec:semiparametric}, the estimator $\what{\tau}^-$ reduces to
the AIPW estimator~\eqref{eq:aipw}
when $\Gamma=1$. Therefore, for any $\Gamma > 1$, the
confidence interval for $\tau$ estimated in \eqref{eq:ci4ate} includes the
AIPW estimate~\eqref{eq:aipw}, which serves as the center of
the interval bounding the ATE. Second,
the estimator $\what{\theta}_1(\cdot)$ minimizes a weighted squared error loss function \eqref{eqn:opt}, while the estimator
$\what{\mu}_{1,1}(\cdot)$ minimizes a unweighted mean squared error loss.
When the residual noise $Y - \mu_{1,1}(X)$ is small, the difference
between weighted and unweighted loss functions also tends to be small.
Therefore, the effect
of selection bias on the bias of the ATE $\tau$ or CATE $\tau(x)$
estimated under the no unobserved confounding assumption \eqref{eq:indepsim}
depends on the magnitude of these residuals; when these residuals are close to zero,
the risk of unobserved confounding is mitigated.

Our bounds on the ATE $\tau$ and CATE $\tau(x)$ depend on bounding the
conditional mean of the potential outcomes $\mu_1(x) = \E[Y(1) \mid X=x]$ and
$\mu_0(x) = \E[Y(0)\mid X=x]$. The proposed $\what{\tau}^-$ and
$\what{\tau}^-(x)$ employ a worst case re-weighting scheme (such as in
\eqref{eq:population-theta} and \eqref{eqn:opt}) to bound them
separately. Section~\ref{sec:hypothesis-test} establishes the optimality of
this approach under a specific symmetry condition on the distributions of the
potential outcomes. In general, our approach may not be optimal; an optimal
estimator may require worst case treatment assignments that depend on both
potential outcomes simultaneously, consistent with the independence assumption
\eqref{eq:indep} and $\Gamma$-\cornfield{} condition
\eqref{eq:cornfield}. Such joint consideration of $\mu_1(x)$ and $\mu_0(x)$
complicates the estimation procedure but is an important direction of future
research.

In practice, choosing an appropriate level of $\Gamma$ in the sensitivity
analysis is important. \citet[Chp.~6]{Rosenbaum02} discusses using known
relationships between a treatment and an auxiliary measured outcome to
detect the presence and magnitude of hidden bias. For example, suppose a drug is approved with an unbiased estimate of its effect on a primary outcome based on a randomized clinical trial,
and drug surveillance investigates the potential adverse events associated with the drug use in real world. The difference between the estimated treatment effect on the primary outcome based on observational data and that based on a randomized clinical trial can serve as an indication of the magnitude of $\Gamma$, the hidden bias in the observational data. It may then be appropriate to
perform a sensitivity analysis for adverse events with the same level of
$\Gamma$. However, in many settings, there is no such surrogate for
estimating $\Gamma$. In discussions with clinicians who often conduct
biomedical studies, we find it helpful to provide results for a number of
different values of $\Gamma$ 
to help
contextualize the strength of evidence, rather than present a single bound
with undue certainty.
While our
result is valid for each fixed $\Gamma$, providing uniform inference results over a set of $\Gamma$ would allow estimation of the smallest value of $\Gamma$ consistent with zero treatment effect in the data (a \emph{sensitivity value} analogous to the \textit{E value} for risk ratios from \citet{vanderweele2017sensitivity}).

\bibliographystyle{abbrvnat}
\setlength{\bibsep}{2pt}
\bibliography{bib_long,bib_short}

\newpage \appendix

\section{Proofs for bounds on the CATE}
\label{sec:proofs-no-cov}

\subsection{Proof of absolute continuity in Lemma~\ref{lem:bdd-lr-no-cov}}
\label{sec:proof-bounded-likelihood}


\begin{proof}
  Here, we only prove the absolute continuity result. The rest of the proof is in Section~\ref{sec:cate-sensitivity} after the statement of Lemma~\ref{lem:bdd-lr-no-cov}. Let $U \in \mc{U}$ be the unobserved confounder satisfying
  $Y(1) \indep Z \mid X, U$ and~\eqref{eq:cornfield}. Then for any set
  $A \subset \mc{U}$
  \begin{align}
    \frac{P(U \in A \mid Z = 0, X = x)}{P(U \in A \mid Z = 1, X = x)}
    & = \frac{P(Z = 0 \mid X = x, U \in A)}{P(Z = 1 \mid X = x, U \in A)}
    \cdot \frac{P(Z = 1 \mid X = x)}{P(Z = 0 \mid X = x)}
    \in [\Gamma^{-1}, \Gamma]
    \label{eqn:remember-cornfield}
  \end{align}
  by condition~\eqref{eq:cornfield} and the quasi-convexity of the ratio
  mapping
  $(a, b) \mapsto a/b$. Letting $q_z$ denote the density of $U$ (with
  respect to a base measure $\mu$) conditional on $Z = z$, we then have
  $q_0(u \mid x) / q_1(u \mid x) \in [\Gamma^{-1}, \Gamma]$, and for
  any measurable set $A \subset \R$
  \begin{align*}
    \frac{P(Y(1) \in A \mid Z = 0, X = x)}{P(Y(1) \in A \mid Z = 1, X = x)}
    & = \frac{\int P(Y(1) \in A \mid Z = 0, U = u, X = x) q_0(u \mid x) d\mu(u)}{
      \int P(Y(1) \in A \mid Z = 1, U = u, X = x) q_1(u \mid x) d\mu(u)} \\
    & \stackrel{(i)}{=}
    \frac{\int P(Y(1) \in A \mid U = u, X = x) q_0(u \mid x) d\mu(u)}{
      \int P(Y(1) \in A \mid U = u, X = x) q_1(u \mid x) d\mu(u)}
    \stackrel{(ii)}{\in} [\Gamma^{-1}, \Gamma]
  \end{align*}
  where equality~$(i)$ is a consequence of $Y(1) \indep Z \mid X, U$, and
  inequality~$(ii)$ follows again from the quasi-convexity of the ratio. This
  yields the absolute continuity claim.
\end{proof}

\subsection{Proof of Lemma~\ref{lemma:duality}}
\label{sec:proof-duality}

\begin{proof}
  As everything is conditional on $x$, we it without
  loss of generality, letting $\E_1[\cdot] = \E[\cdot \mid Z = 1]$ for
  shorthand.
  We first develop a simple duality argument. The set
  \begin{equation*}
    \mc{L}_\Gamma \defeq \{L : \mc{Y} \to \R_+, L ~ \mbox{measurable}, ~
    L(y) \le \Gamma L(\wt{y}) ~ \mbox{for~all~} y, \wt{y} \}
  \end{equation*}
  is convex, contains the constant function $L \equiv 1$ in its interior,
  and for $L \equiv 1$ we have $\E_1[L(Y(1))] = 1$. Thus,
  strong duality~\cite[Thm.~8.6.1 and Problem 8.7]{Luenberger69}
  implies
  \begin{equation}
    \inf_{L \in \mc{L}_\Gamma}
    \left\{\E_1[L(Y(1))] \mid \E_1[L(Y(1))] = 1 \right\}
    = \sup_{\mu \in \R}
    \inf_{L \in \mc{L}_\Gamma} \left\{\E_1[(Y(1) - \mu) L(Y(1))]
    + \mu \right\}
    \label{eq:lagrange-ver}.
  \end{equation}

  Now, we show that for each $\mu \in \R$, 
  \begin{equation}
    L^\ast(y) \propto \Gamma \ind{y - \mu \le 0} + \ind{y - \mu > 0}
    \label{eq:best-lr}
  \end{equation}
  attains the minimum value of $\inf_{L \in \mc{L}_\Gamma} \E_1[(Y(1) - \mu)
    L(Y(1))]$. That is, the minimizer takes on only the values $L^*(y) \in
  \{c, c\Gamma\}$ for some $c \ge 0$. The constraint
  $L \in \mc{L}_\Gamma$ guarantees that $L^*(y) \in [c, c \Gamma]$ for some
  $c \ge 0$.
  Assume that $c \le L(y) \le c \Gamma$,
  but $L(y) \not\in \{c, c \Gamma\}$. Then
  letting $L\opt(y) = c$ if $(y - \mu) > 0$ and $L\opt(y) = c \Gamma$
  if $(y - \mu) \le 0$, we have
  $(y - \mu) L\opt(y) \le (y - \mu) L(y)$, with strict inequality
  if $y \neq \mu$. Thus, any function $L \in \mc{L}_\Gamma$ can be modified
  to be of the form~\eqref{eq:best-lr} without increasing the objective
  $\E_1[(Y(1) - \mu) L(Y(1))]$.

  Substituting the minimizer~\eqref{eq:best-lr} into
  the right objective~\eqref{eq:lagrange-ver}, we recall
  that $\psi_t(y) = \hinge{y - t} - \Gamma \neghinge{y-t}$ to obtain
  \begin{equation*}
    \theta_1(x) =
    \sup_{\mu} \inf_{c \ge 0}
    \left\{\E_1\left[ c \psi_\mu(Y(1)) \mid X=x\right] + \mu\right\}.
  \end{equation*}
  This gives the final result~\eqref{eq:dual-constrained}, as
  \begin{equation*}
    \inf_{c \ge 0}
    \E_1\left[ c \psi_\mu(Y(1)) \mid X=x\right]
    = \begin{cases} -\infty & \mbox{if~} \E_1[\psi_\mu(Y(1)) \mid X = x]
      < 0 \\
      0 & \mbox{otherwise}.
    \end{cases}
  \end{equation*}
  Since
$\theta \mapsto \E[ \psi_\theta\left(Y(1)\right)\mid Z = 1, X]$ is a
decreasing function, $\theta_1(X)$ is the only zero crossing of the function
for almost every $X$.
\end{proof}
\vspace{-1em}

\subsection{Proof of Lemma~\ref{lemma:opt-is-good}}
\label{section:proof-of-opt-is-good}

\lemoptisgood*

\begin{proof}
  Let $\wb{\R} = \R \cup \{+\infty\}$.
  Normal integrand theory~\cite[Section 14.D]{RockafellarWe98} allows
  swapping integrals and infimum over measurable mappings. A map
  $f: \R \times \covspace \to \wb{\R}$ is a normal integrand if its
  epigraphical mapping $x \mapsto S_f(x) \defeq
  \epi f(\cdot; \covariate) = \{ (t,
  \alpha) \in \R \times \R: f(t; \covariate) \le \alpha\}$ is closed-valued
  and
  measurable, that is, for $\sigalg$ the Borel sigma-algebra on $\R$,
  $S_f^{-1}(O) \in \sigalg$ for all open $O \subset \R^2$. We have
  \begin{lemma}[{\citet[Theorem 14.60]{RockafellarWe98}}]
    \label{lemma:inf-int-interchange}
    If $f: \R \times \covspace \to \wb{\R}$ is a normal integrand, and
    $\int_{\covspace} f(\funcparam_1(\covariate); \covariate) ~
    dP(\covariate) < \infty$ for some measurable $\funcparam_1$, then
    \begin{equation*}
      \inf_{\funcparam} \left\{
      \int_{\covspace} f(\funcparam(\covariate); \covariate) ~d P(\covariate)
      \mid
      \funcparam: \covspace \to \R~\mbox{measurable}
      \right\}
      = \int_{\covspace} \inf_{\funcscalar \in \R} f(\funcscalar; \covariate) ~ dP(\covariate).
    \end{equation*}
    If this common value is not $-\infty$, a measurable function $
    \funcparam^*: \covspace \to \R$ attains the minimum of the left-hand side
    iff $ \funcparam^*(\covariate) \in \argmin_{\funcscalar \in \R}
    f(\funcscalar; \covariate)$ for $P$-almost every $\covariate
    \in \covspace$.
  \end{lemma}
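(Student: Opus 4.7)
The plan is to prove both the identity and the attainment characterization by leveraging the normal integrand structure to invoke a measurable selection theorem. The easy inequality $\ge$ is pointwise: for any measurable $\theta: \covspace \to \R$, one has $f(\theta(x); x) \ge m(x) \defeq \inf_{t \in \R} f(t; x)$, so integrating and taking the infimum over measurable $\theta$ yields $\inf_\theta \int f(\theta(x);x) \, dP(x) \ge \int m(x) \, dP(x)$. The real work is the reverse inequality together with characterizing when equality is attained by a specific $\funcparam^*$.

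For the reverse $\le$, I would first establish that $m$ is measurable: the normal integrand hypothesis gives the epigraphical multifunction $x \mapsto S_f(x) = \{(t, \alpha) : f(t; x) \le \alpha\}$ is closed-valued and measurable, and $\{x : m(x) < c\}$ is the projection of $S_f \cap (\R \times (-\infty, c))$ onto $\covspace$, which is measurable by the projection theorem for measurable multifunctions. Next, for each $n \in \mathbb{N}$ define the approximate-argmin multifunction
\begin{equation*}
\Sigma_n(x) \defeq \left\{ t \in \R : f(t; x) \le \max\{m(x), -n\} + \tfrac{1}{n} \right\}.
\end{equation*}
By construction $\Sigma_n(x)$ is nonempty, closed in $\R$, and measurable as a level set of a normal integrand shifted by a measurable function. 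The Kuratowski--Ryll-Nardzewski measurable selection theorem then produces a measurable $\theta_n : \covspace \to \R$ with $\theta_n(x) \in \Sigma_n(x)$ for every $x$, so that $f(\theta_n(x); x) \to m(x)$ pointwise.

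To pass to the limit under the integral I would truncate using the integrable envelope supplied by $\funcparam_1$: set $\wt{\theta}_n(x) = \theta_n(x)$ when $f(\theta_n(x); x) \le f(\funcparam_1(x); x)$ and $\wt{\theta}_n(x) = \funcparam_1(x)$ otherwise. Then $\wt{\theta}_n$ is measurable, $f(\wt{\theta}_n(x); x) \le f(\funcparam_1(x); x)$ provides an integrable upper envelope, and $f(\wt{\theta}_n(x); x) \to m(x)$ pointwise. Applying the reverse Fatou lemma to the nonnegative sequence $f(\funcparam_1(x); x) - f(\wt{\theta}_n(x); x)$ (which gracefully handles the case where $m$ takes the value $-\infty$ on a nontrivial set) yields $\int f(\wt{\theta}_n; \cdot) \, dP \to \int m \, dP$, which together with the $\ge$ direction closes the identity.

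For the attainment clause, suppose the common value is finite and $\funcparam^*$ achieves the infimum on the left-hand side. The pointwise bound $f(\funcparam^*(x); x) \ge m(x)$ together with equality of the integrals and integrability of $m$ forces the nonnegative function $f(\funcparam^*(x); x) - m(x)$ to vanish $P$-almost everywhere, which is precisely the $\argmin$ characterization; the converse direction is immediate from integration. The main obstacle throughout is the measurable selection step: one must carefully verify that $\Sigma_n$ inherits measurability from $S_f$ and from the measurability of $m$ (including the $\{m=-\infty\}$ piece of the domain), which is exactly where the normal integrand hypothesis does its decisive work before Kuratowski--Ryll-Nardzewski can be invoked.
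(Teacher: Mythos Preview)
The paper does not prove this lemma; it is quoted as Theorem~14.60 of Rockafellar and Wets and invoked as a black box inside the proof of Lemma~\ref{lemma:opt-is-good}, so there is no in-paper argument to compare against. Your sketch is the standard route to this interchange result and is correct: the pointwise bound gives the $\ge$ direction; the normal integrand hypothesis yields measurability of $m(x)=\inf_t f(t;x)$ and of the approximate-argmin level sets $\Sigma_n$; Kuratowski--Ryll-Nardzewski then supplies measurable near-minimizers; and your truncation against $\funcparam_1$ furnishes exactly the integrable upper envelope needed to pass to the limit via Fatou (correctly handling the $m=-\infty$ case). The attainment clause follows, as you note, from integrating the nonnegative gap $f(\funcparam^*(x);x)-m(x)$ once the common value is finite. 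One small point worth tightening in a full write-up: the hypothesis $\int f(\funcparam_1;\cdot)\,dP<\infty$ should be split into the trivial case where this integral is $-\infty$ (both sides are then $-\infty$) and the case where it is real, since only in the latter can you subtract it cleanly inside the Fatou step.
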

  \noindent
  Let $f(t, \covariate) \defeq \half \treatedE[ \hinge{Y(1) - t}^2 + \Gamma
    \neghinge{Y(1) - t}^2 \mid \covariaterv = \covariate]$.  Since
  $(t, x) \mapsto f(t, x)$ is
  continuous by assumption, $f$ is a normal integrand~\cite[Example
    14.31]{RockafellarWe98}.  Rewrite the minimization
  problem~\eqref{eqn:opt} using the tower property
  \begin{equation*}
    \inf_{\funcparam} \left\{
    \treatedE\left[ \treatedE[
        \loss_\Gamma(\funcparam; (\covariaterv, Y(1))) \big| \covariaterv]
      \right] = \treatedE[f\left\{\funcparam(\covariaterv), \covariaterv\right\}]
    \mid
    \funcparam: \covspace \to \R~\mbox{measurable}
    \right\}.
  \end{equation*}
  Apply Lemma~\ref{lemma:inf-int-interchange} to obtain
  $\funcparam_1(\covariate)
  = \argmin_{\funcscalar \in \R} f(\funcscalar; \covariate)$. Since $t
  \mapsto f(t, \covariate)$ is convex, the first order condition
  $\frac{d}{dt} f(t; \covariate)=0$ shows that $\funcparam_1(x)$
  solves $\E_1[\psi_{\theta(x)}(Y(1)) \mid X=x]=0$. The uniqueness (up
  to measure-zero transformations) of $\theta_1$ is immediate by
  the strong convexity of $t \mapsto \loss_\Gamma(t, y)$.
\end{proof}

\section{Sieve estimation}
\label{sec:sieve-method}

\subsection{Convergence rates for $\what{\theta_1}$, the empirical
  minimizer~\eqref{eqn:opt-emp}}

In this section, we establish asymptotic convergence rates for minimizers
$\what{\theta}_1(\cdot)$ of \eqref{eqn:opt-emp}.  We consider two examples to
make this concrete.

\begin{example}[Polynomials]
  \label{example:polynomials}
  Let $\pol{\comp}$ be the space of $\comp$-th order polynomials on
  $[0, 1]$,
  \begin{equation*}
    \pol{\comp} \defeq
    \left\{ [0,1] \ni x \mapsto \sum_{k=0}^{\comp} a_k x^k
    : a_k \in \R \right\}.
  \end{equation*}
Define the sieve $\funcspace_n \defeq \left\{ x \mapsto \Pi_{k=1}^d f_k(x_k) \mid f_k \in \pol{J_n}, k = 1,\dots,d \right\},$ for $J_n \to \infty$.
\end{example}

\begin{example}[Splines]
  \label{example:splines}
  Let $0 = t_0 < \ldots < t_{\comp+1} = 1$ be knots that satisfy
  \begin{equation*}
    \frac{\max_{0 \le j \le \comp} (t_{j+1} - t_j)}
    {\min_{0  \le j \le \comp} (t_{j+1} - t_j)}
    \le c
  \end{equation*}
  for some $c > 0$. Then, the space of $r$-th order splines with $\comp$
  knots is
  \begin{equation*}
    \spl{r}{\comp} \defeq
    \left\{ x \mapsto
    \sum_{k=0}^{r-1} a_k x^k + \sum_{j=1}^\comp b_j \hinge{x-t_j}^{r-1}, x\in [0, 1]:
    a_k, b_k \in \R
    \right\}.
  \end{equation*}
  Define the sieves $\funcspace_n \defeq \left\{ x \mapsto
  f_1(x_1)f_2(x_2)\dots f_d(x) \mid f_k \in \spl{r}{J_n}, k = 1,\dots,d
  \right\}$ for some integer $r \ge \floor{\holdersmooth} + 1$ and $J_n \to
  \infty.$
\end{example}

We require (standard) regularity conditions. Let
$\holderball{\holdersmooth}{\holderradius}$ denote the H\"{o}lder class of
$\holdersmooth$-smooth functions, defined for
$\holdersmooth_1 = \ceil{\holdersmooth} - 1$ and
$\holdersmooth_2 = \holdersmooth - \holdersmooth_1$ by
\begin{equation*}
  \holderball{\holdersmooth}{\holderradius}
  \defeq \left\{ h \in C^{\holdersmooth_1}(\covspace):
  \sup_{\tiny \begin{array}{c}x\in \mc{X} \\ \sum_{l=1}^d \alpha_l<p_1 \end{array}}|D^\alpha h(x)| +
  \sup_{\tiny \begin{array}{c} x \neq x' \in \mc{X} \\ \sum_{l=1}^d \beta_l=p_1\end{array}}\frac{|D^\beta h(x) - D^\beta h(x')|}{\norm{x - x'}^{\holdersmooth_2}}
  \le \holderradius
  \right\},
\end{equation*}
where $C^{\holdersmooth_1}(\covspace)$ denotes the space of
$\holdersmooth_1$-times continuously differentiable functions on $\covspace,$ and
$D^{\alpha} = \frac{\partial^{\alpha}}
{\partial^{\alpha_1} \ldots \partial^{\alpha_\covdim}}$,
for any $\covdim$-tuple of nonnegative integers $\alpha = (\alpha_1, \ldots,
\alpha_\covdim)$.
We make a few concrete assumptions on smoothness and other properties of
parameters of interest.
\begin{assumption}
  \label{assumption:holder-smooth}
  Let
$\covspace = \covspace_1 \times \cdots \times \covspace_\covdim$ be the
Cartesian product of compact intervals
$\covspace_1, \ldots, \covspace_\covdim$, and assume
  $\popfunc \in \holderball{\holdersmooth}{\holderradius} \eqdef \funcspace$
  for some $\holderradius > 0$.
\end{assumption}

\newcommand{\condmoment}{\sigma^2_{\rm shift}}

\begin{assumption}
  \label{assumption:bdd-error}
  There exists $\condmoment < \infty$ such that for all $\covariate
  \in \covspace$, $\E[\{Y(1) - \popfunc(\covariaterv)\}^2 \mid Z=1, X = x] \le
  \condmoment$.
\end{assumption}

\begin{assumption}
  \label{assumption:lebesgue-equiv}
  $P_{\covariaterv | \treatmentrv = 1}$ has a density 
  $\treatedp(x)$ with respect to the Lebesgue measure and $0 < \inf_{\covariate \in
    \covspace} \treatedp(\covariate) \le \sup_{\covariate \in \covspace}
  \treatedp(\covariate) < \infty$.
\end{assumption}

Assumption~\ref{assumption:holder-smooth} assumes that $\popfunc(\cdot)$ is
in a $\holdersmooth$-smooth H\"{o}lder space.
Sufficient conditions for satisfying this assumption include when the conditional mean function $\mu_z(x) = \E[Y(z)  \mid X=x]$ is in a $\holdersmooth$-smooth H\"older space, and the residuals $Y(1) - \mu_1(x)$ are homoskedastic or $Y(1)$ is binary: in these cases, $\theta_1(x)$ is a simple affine transformation of $\mu_1(x)$, preserving its smoothness. Assumption~\ref{assumption:holder-smooth} allows for more general models where the residuals may be heteroskedastic but $\theta_1$ is still smooth. 
Assumption~\ref{assumption:bdd-error} is a standard condition
to ensure convergence of the empirical loss function by bounding the
second moment.  Finally, Assumption~\ref{assumption:lebesgue-equiv} asserts
that $P_{\covariaterv | \treatmentrv = 1}$ has upper- and lower-bounded
density, so that it is equivalent to the Lebesgue measure on $\mc{X}$. This assumption, along with the symmetric assumption on $P_{\covariaterv | \treatmentrv = 0}$ needed to estimate $\theta_0(\cdot)$, implies strong ignorability, as well as bounds on the marginal density of $P_{\covariaterv}$ under the Lebesgue measure. Assumption~\ref{assumption:lebesgue-equiv}
allows us to relate the $L^2(P)$ norm, $\|\cdot\|_{2,P},$ to the supremum norm, $\|\cdot\|_{\infty, P},$ of
$\what{\theta}_1 - \theta_1 \in \holderball{\holdersmooth}{\holderradius}$,
which is important for proving the convergence of sieve
estimators~\cite{Chen07}. Although outside the scope of this paper, adapting other nonparametric estimators such as the partitioning estimates described in \citet{GyorfiKoKrWa02} may admit good $\|\cdot\|_{2,P}$ convergence rates without this assumption.


The tradeoff between the random estimation error and approximation precision
of the sieve space $\funcspace_n$ (see Lemma~\ref{lemma:sieve} in the
Appendix) dictates the accuracy of $\what\theta_1(\cdot)$.  The following
theorem guarantees that finite dimensional linear sieves considered
yield standard non-parametric rates for estimating $\popfunc(\cdot)$ by
balancing different sources of error.
\begin{restatable}{thm}{thmsieve}
  \label{thm:sieve}
  For $\covspace = [0, 1]^d$, let $\funcspace_n$ be given by the finite
  dimensional linear sieves in Example~\ref{example:polynomials}
  or~\ref{example:splines} with $\comp_n \asymp n^{\frac{1}{2 \holdersmooth
      + \covdim}}$. Define $\epsilon_n = (\frac{\log
    n}{n})^\frac{\holdersmooth}{2\holdersmooth + \covdim}$.  Let
  Assumptions~\ref{assumption:holder-smooth}, \ref{assumption:bdd-error},
  and~\ref{assumption:lebesgue-equiv} hold, and let $\empfunc$ satisfy
  \begin{equation*}
    \empE\left[
      \loss_\Gamma\big(\empfunc (\covariaterv), Y(1)\big) \mid Z=1 \right]
    \le   \inf_{ \funcparam \in \funcspace_n}
    \empE\left[
      \loss_\Gamma\left(\funcparam(\covariaterv), Y(1)\right) \mid Z=1\right]
    + O_P(\epsilon_n^2).
  \end{equation*}
  Then
  $\norms{\empfunc - \popfunc}_{2, P_1} = O_{P}(\epsilon_n)$
  and $\norms{\empfunc - \popfunc}_{\infty,P_1} =
  O_{P}(\epsilon_n^\frac{2 \holdersmooth}{2 \holdersmooth + \covdim})$.
\end{restatable}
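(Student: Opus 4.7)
The plan is to follow the standard sieve M-estimation framework. The argument has four ingredients: (a) local strong convexity of the population criterion around $\popfunc$; (b) a sieve approximation bound on the H\"older ball $\holderball{\holdersmooth}{\holderradius}$; (c) a uniform empirical process bound over the linear sieve $\funcspace_n$; and (d) an interpolation inequality converting the resulting $L^2(P_1)$ rate into a sup-norm rate.

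For (a), the loss $\loss_\Gamma(\cdot,y)$ is $1$-strongly convex in its first argument, since its second derivative equals $1$ on $\{y>\theta\}$ and $\Gamma\ge 1$ on $\{y<\theta\}$. Combined with the first-order optimality of $\popfunc$ from Lemma~\ref{lemma:opt-is-good}, a second-order Taylor expansion of the population risk yields $\E[\loss_\Gamma(\funcparam(\covariaterv),Y(1))-\loss_\Gamma(\popfunc(\covariaterv),Y(1))\mid Z=1] \ge \tfrac12\|\funcparam-\popfunc\|_{2,P_1}^2$ for every measurable $\funcparam$. For (b), classical Jackson-type approximation results for tensor-product polynomials and splines show that any $h \in \holderball{\holdersmooth}{\holderradius}$ on $[0,1]^\covdim$ admits an approximant $h_n \in \funcspace_n$ with $\|h_n-h\|_\infty \lesssim J_n^{-\holdersmooth}$ and $h_n$ lying in a slightly enlarged H\"older ball. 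With the choice $J_n \asymp n^{1/(2\holdersmooth+\covdim)}$ of Examples~\ref{example:polynomials}--\ref{example:splines}, this gives an approximation error of order $\epsilon_n$ in $L^2(P_1)$.

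For (c), I would re-center at a best approximant $\theta_n^\star \in \funcspace_n$ of $\popfunc$ and bound the empirical process increment $(\empE-\E)[\loss_\Gamma(\funcparam(\covariaterv),Y(1)) - \loss_\Gamma(\theta_n^\star(\covariaterv),Y(1))\mid Z=1]$ uniformly over $\funcparam \in \funcspace_n$ in $L^2$-balls of radius $r$ around $\theta_n^\star$. The pointwise loss increment is Lipschitz in $\funcparam$ with envelope bounded by $|Y(1)-\popfunc(\covariaterv)| + O(1)$, whose second moment is controlled by Assumption~\ref{assumption:bdd-error}. Since $\funcspace_n$ is a linear space of dimension $\asymp J_n^\covdim$, standard bracketing / local-maximal-inequality arguments give a supremum of order $r\sqrt{J_n^\covdim\log n / n}$; combining this with the quadratic lower bound from (a) via peeling gives $\|\empfunc-\theta_n^\star\|_{2,P_1} = O_P(\sqrt{J_n^\covdim\log n /n})$, and adding the approximation bound from (b) via the triangle inequality yields $\|\empfunc-\popfunc\|_{2,P_1}=O_P(\epsilon_n)$. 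Step (d) then uses that $\empfunc-\popfunc$ has H\"older norm bounded in probability (the sieve part inherits bounded H\"older norm from the ambient ball up to the approximation error), so a Nikol'skii-type interpolation inequality valid on H\"older balls, combined with Assumption~\ref{assumption:lebesgue-equiv} to relate $L^2(P_1)$ and Lebesgue norms, converts the $L^2$ rate $\epsilon_n$ into the sup-norm rate $\epsilon_n^{2\holdersmooth/(2\holdersmooth+\covdim)}$.

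The main obstacle is the empirical-process step (c): the residuals $Y(1)-\popfunc(\covariaterv)$ are only second-moment-bounded under Assumption~\ref{assumption:bdd-error}, so bounded-envelope chaining does not apply directly and one needs either a truncation combined with Bernstein-type tail control or a variance-based $L^2$-chaining exploiting the Lipschitz structure of $\loss_\Gamma$. This is also where the $\log n$ factor in $\epsilon_n$ enters. Once the uniform bound is in place, combining it with strong convexity and Jackson approximation is routine.
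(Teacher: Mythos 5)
Your proposal is correct and follows essentially the same route as the paper: the paper verifies the quadratic excess-risk bound from strong convexity of $\loss_\Gamma$, the Jackson-type approximation rate $J_n^{-\holdersmooth}$, and the variance/envelope modulus conditions, then invokes the Chen--Shen sieve M-estimation theorem (whose internals are exactly your localization-and-peeling step (c)) and finishes with the same $L^2$-to-sup-norm interpolation on H\"older balls (Lemma~\ref{lemma:two-sup-norm}). The technical obstacle you flag --- the merely second-moment-bounded envelope $|Y(1)-\popfunc(X)|$ --- is precisely what the paper's conditions \eqref{eqn:var-modulus} and \eqref{eqn:loss-modulus} are designed to handle.
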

\noindent See the Supplementary Materials Section~\ref{section:proof-of-sieve} for proof.  The key
property of the function spaces $\funcspace_n$ in
Examples~\ref{example:polynomials} and~\ref{example:splines} is that $\inf_{\theta \in \funcspace_n} \linf{\theta -
  \popfunc} = O(\comp_n^{-\holdersmooth})$ (cf.~\cite[Sec.~5.3.1]{Timan63}
or \cite[Thm.~12.8]{Schumaker07}), which allows appropriate balance between
approximation and estimation error. Similar guarantees hold for wavelet
bases and other finite-dimensional sieves~\cite{Daubechies92, Chen07},
allowing generalization of Theorem~\ref{thm:sieve} beyond the
explicit examples provided.

\subsection{Convergence rates for $\what{\nu}_{1, k}$, the empirical
  minimizer~\eqref{eqn:opt-emp-prob}}
\label{sec:nu-sieve}

To show convergence of the empirical minimizer~\eqref{eqn:opt-emp-prob},
$\what{\nu}_{1, k}$, we need two assumptions.
\begin{assumption}
  \label{assumption:prob-smooth}
  There exist $\probsmooth, \probradius > 0$, and a set
  $S \subset \holderball{\holdersmooth}{\holderradius}$ with
  $\popfunc \in S$
  such that
  (a) $P(\what{\theta}_{1k}^{\nu_1} \in S \mid \treatmentrv=1) \to 1$ as $n \to \infty$,
  (b) for all $\funcparam \in S$,
    $x \mapsto P(Y(1) \ge \funcparam(x) \mid  \treatmentrv=1, X=x)$
    belongs to
    $ \probspace \defeq \holderball{\probsmooth}{\probradius} \cap \left\{
      \probfunc: \covspace \to [0, 1] \right\}$.
\end{assumption}
\begin{assumption}
  \label{assumption:lip-cdf}
  Let $S$ be as in Assumption~\ref{assumption:prob-smooth}.
  There is a constant $L_{\probfunc} <\infty$
  such that for $f, g \in S$,
  \begin{equation*}
    \int \left[P(Y(1) \ge f(X) \mid Z=1, \covariaterv=x)-
    P(Y(1) \ge g(X) \mid Z=1, \covariaterv=x)\right]^2 \dif{P_1}(x)
    \le L_{\probfunc}^2 \|f - g\|_{2,P_1}^2.
  \end{equation*}
\end{assumption}
\noindent Assumption \ref{assumption:prob-smooth} ensures that the map $x
\mapsto P(Y(1) \ge \funcparam(x) \mid \treatmentrv=1, X=x)$ is
sufficiently smooth for functions $\funcparam(\cdot)$ close to
$\popfunc(\cdot).$ In the current case, since $\|\what{\theta}_{1k}^{\nu_1} - \popfunc\|_{2,P_1}
= o_p(1)$, $S$ in Assumption \ref{assumption:prob-smooth} can be a
$\|\cdot\|_{2,P_1}$-neighborhood of $\popfunc \in
\holderball{\holdersmooth}{\holderradius}.$ This condition is necessary, as
$\hat{\nu}_{1,k}(x)$ solves an empirical version of the optimization
problem~\eqref{eqn:opt-prob} using the estimator
$\what{\theta}_{1k}^{\nu_1}(\cdot)$ instead of the true
$\funcparam_1(\cdot).$ Assumption \ref{assumption:lip-cdf} guarantees that
the map
$\funcparam \mapsto P\left(Y(1) \ge \funcparam(x) \mid \treatmentrv=1, X=x\right)$
is also sufficiently smooth. A simple sufficient condition for Assumption
\ref{assumption:lip-cdf} is that $Y(1) \mid \treatmentrv = 1, X=x$ has a
bounded density for almost every $x\in \mathcal{X}.$ If the density in certain
regions is not bounded, but we know a priori that $\theta(x) \not= y$ in these
regions, then we choose $S$ so that $\what{\theta}_{1k}^{\nu_1} \not= y$ in
these regions, as well. For instance, if $Y(1) \in \{0, 1\}$, then unless it
is deterministic, $\theta_1(x) \in (0,1)$, so
$\theta_1 \in S = \holderball{\probsmooth}{\probradius} \cap \{ f: \covspace
\to (0,1) \}$, and $P_{Y(1) \mid X=x, Z=1}$ has a density
$p_{Y(1) \mid X=x, Z=1}(\theta_1(x)) = 0$, which implies
Assumption~\ref{assumption:lip-cdf}.

Under these additional assumptions, the following proposition gives the
convergence rate of the proposed sieve estimator. See the proof
in the Supplementary Materials Section~\ref{section:proof-of-prob-sieve}.
\begin{restatable}{prop}{propprobsieve}
  \label{prop:prob-sieve}
  For $\covspace = [0, 1]^d$, let $\probspace_n$ be the finite dimensional
  linear sieves considered in Examples~\ref{example:polynomials} or
  \ref{example:splines}. Let
  $\epsilon_n = (\frac{\log n}{n})^\frac{q}{2q + d}$ and
  $\comp_n \asymp \epsilon_n^{-1/q}$. Let
  Assumptions~\ref{assumption:lebesgue-equiv},~\ref{assumption:prob-smooth},
  and~\ref{assumption:lip-cdf} hold.  Assume that
  $\norms{\what\theta_{1k}^{\nu_1} - \popfunc}_{2,P_1}=O_p(\epsilon_n)$, and
  let $\what{\nu}_{1,k}$
  satisfy 
   \begin{equation*}
    \mathbb{E}_{n,2}^{(k)} \left[  \lossnu\left(\what{\nu}_{1,k}(\covariaterv), \what{\theta}_{1k}^{\nu_1}(\covariaterv), Y(1)\right)\right]
    \le   \inf_{ \probfunc \in 1 + (\Gamma-1)\probspace_n}
    \mathbb{E}_{n,2}^{(k)}\left[ \lossnu\left(\probfunc(\covariaterv), \what{\theta}_{1k}^{\nu_1}(\covariaterv), Y(1)\right)\right]
    + O_p(\epsilon_n^2).
  \end{equation*}
  Then $\norms{\what{\nu}_{1,k} - \popprob}_{2, P} = O_p(\epsilon_n)$.
\end{restatable}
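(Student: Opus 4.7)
The plan is to decompose the error via an oracle intermediate target that isolates the effect of plugging in $\what{\theta}_{1k}^{\nu_1}$ from the sieve estimation error. Define
\begin{equation*}
\tilde{\nu}(x) \defeq 1 + (\Gamma - 1)\, P\bigl(Y(1) \ge \what{\theta}_{1k}^{\nu_1}(x) \mid Z=1, X=x\bigr),
\end{equation*}
which is the pointwise minimizer of $\probfunc \mapsto \E[\bar{\loss}_\Gamma(\probfunc(X), \what{\theta}_{1k}^{\nu_1}(X), Y(1)) \mid Z=1]$ conditional on $\what{\theta}_{1k}^{\nu_1}$. The triangle inequality gives
\begin{equation*}
\bigl\|\what{\nu}_{1,k} - \nu_1\bigr\|_{2,P_1} \le \bigl\|\what{\nu}_{1,k} - \tilde{\nu}\bigr\|_{2,P_1} + \bigl\|\tilde{\nu} - \nu_1\bigr\|_{2,P_1},
\end{equation*}
and I would bound each piece at rate $O_p(\epsilon_n)$, transferring at the end to the stated $\|\cdot\|_{2,P}$ norm via Assumption~\ref{assumption:lebesgue-equiv}.

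The ``bias'' piece is immediate from Assumption~\ref{assumption:lip-cdf}: applied with $f = \what{\theta}_{1k}^{\nu_1}$ and $g = \theta_1$, it gives
\begin{equation*}
\bigl\|\tilde{\nu} - \nu_1\bigr\|_{2,P_1} \le (\Gamma - 1)\, L_\probfunc \,\bigl\|\what{\theta}_{1k}^{\nu_1} - \theta_1\bigr\|_{2,P_1} = O_p(\epsilon_n),
\end{equation*}
using the assumed convergence rate of the threshold estimator.

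For the ``variance'' piece, the nested sample splitting is crucial: because $\what{\theta}_{1k}^{\nu_1}$ is fit on $\mathcal{I}_{-k,1}$ while \eqref{eqn:opt-emp-prob} is formed from the independent subset $\mathcal{I}_{-k,2}$, I would condition on $\what{\theta}_{1k}^{\nu_1}$ and treat it as a deterministic function. The empirical objective then becomes a standard bounded-response squared-loss nonparametric regression: the response $1 + (\Gamma - 1)\indic{Y \ge \what{\theta}_{1k}^{\nu_1}(X)}$ lies in $[1,\Gamma]$ and the true regression function (conditional on $\what{\theta}_{1k}^{\nu_1}$) is exactly $\tilde{\nu}$. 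On the event $\{\what{\theta}_{1k}^{\nu_1} \in S\}$, which by Assumption~\ref{assumption:prob-smooth}(a) has probability tending to one, Assumption~\ref{assumption:prob-smooth}(b) places $\tilde{\nu}$ in the shifted H\"older ball $1 + (\Gamma - 1)\holderball{q}{\probradius}$. With the polynomial or spline sieve of order $J_n \asymp \epsilon_n^{-1/q}$, classical approximation theory (\cite{Timan63,Schumaker07}) supplies approximation error $O(J_n^{-q})$, while a uniform-entropy argument on $\probspace_n$ controls the estimation error at rate $\sqrt{J_n (\log n)/n}$; these balance at $\epsilon_n$. This is a simpler version of the argument used in Theorem~\ref{thm:sieve}: no moment condition analogous to Assumption~\ref{assumption:bdd-error} is needed because the response is already bounded in $[1,\Gamma]$.

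The main obstacle that the nested split avoids is the joint stochastic dependence of the response $\indic{Y \ge \what{\theta}_{1k}^{\nu_1}(X)}$ on both the outcome and the plug-in threshold; without independence, one would have to bound an empirical process indexed simultaneously by the sieve $\probspace_n$ and by a function class containing $\what{\theta}_{1k}^{\nu_1}$, typically forcing either an $L^\infty$ rate on the threshold estimator or a delicate two-level chaining argument. With independence in hand, the problem reduces cleanly to the standard sieve regression analysis, and combining the two pieces (together with the norm conversion from Assumption~\ref{assumption:lebesgue-equiv}) yields the claimed $\|\what{\nu}_{1,k} - \nu_1\|_{2,P} = O_p(\epsilon_n)$.
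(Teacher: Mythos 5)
Your proposal is correct and follows essentially the same route as the paper: the paper's proof (Lemma~\ref{lemma:prob-sieve}) uses exactly your decomposition through the oracle target $\nu_1(\cdot \mid \what{\theta}_{1k}^{\nu_1})$, bounds the bias term via Assumption~\ref{assumption:lip-cdf}, and handles the sieve-estimation term by conditioning on the independent first split and on the event $\{\what{\theta}_{1k}^{\nu_1} \in S\}$, then invoking a high-probability Chen--Shen sieve rate with the bounded-response structure you describe. The only detail you elide is the explicit verification of the variance- and loss-modulus conditions of that sieve lemma, which is routine given the uniform boundedness of $\lossnu$.
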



\section{A Practical Procedure for Estimating \lowercase{$\nu_{z}(\cdot)$}}
\label{sec:practical-nu}

The nested cross-fitting procedure proposed in Section~\ref{sec:nu-est} enjoys
strong theoretical properties, but it may be computationally expensive in
practice. For instance, when using $10$-fold cross-fitting, this requires
fitting the nonparametric sieve estimator $100$ times on different
subsamples. To reduce the computational complexity, we use an iterative
cross-fitting procedure. We describe our algorithm for the treated units
$z=1$, as the case for control units $z=0$ is symmetric.
\begin{enumerate}
\item Select hyperparameters for estimating $\theta_{1}(\cdot)$
  using cross-validation on the weighted squared
  loss~\eqref{eq:gamma-loss}.
\item Fit models $\what\theta_{1,k}(\cdot),~k=1,\dots,K$ using cross-fitting
  with the selected hyperparamters.
\item Compute the binary targets $V_i = 1\{Y_i \le \what\theta_{1,k}(X_i)\}$
  for each observation $i$ based on the corresponding $k$ such that
  $i \in \mathcal{I}_k$.
\item Using the targets $V_i$, choose hyperparameters for estimating
  $\nu_1(\cdot)$ using cross-validation; use the loss described in
  Section~\ref{sec:nu-est} with observations $(V_i, X_i)$.
\item With the selected hyperparameters, fit final estimators
  $\what\nu_{1,k}(\cdot)$ using the observations $(V_i, X_i)$ for
  $i \in \mathcal{I}_{-k}$.
\item Using these nuisance parameter estimates alongside estimates of the
  propensity score, $\what{e}_{1,k}(\cdot)$, calculate the cross-fitted
  semiparametric estimate $\what\tau^{-}$
\end{enumerate}
Due to the cross-fitting construction, the $\what\theta_{1,k}(\cdot)$ used to
construct the targets $V_i$ are independent of the $i$-th observation,
capturing the key property of the nested cross-fitting needed for estimation
of $\nu_1(\cdot)$ as described in Section~\ref{sec:nu-est}. Additionally,
$\what{\theta}_{1,k}$ is independent of the observations $(X_i, Y_i, Z_i)$ for
$i \in I_k$ when plugged in to the cross-fit estimate $\what\tau^{-}$.

However, the iterative construction does not guarantee the independence of
$\what\nu_1(\cdot)$ and the observations $(X_i, Y_i, Z_i)$ for $i \in I_k$.
To see this, let $k$ and $k'$ be two distinct fold indices between $1$ and
$K$, and let $i$ be an observation in $\mathcal{I}_k$ and $i'$ be an
observation in $\mathcal{I}_{k'}$. Observation $i$ is used to fit
$\what\theta_{1, k'}(\cdot)$, which is then used to compute
$V_{i'}$. Therefore, the $i$-th observation is not independent of $V_{i'}$,
which is used to estimate $\what\nu_{1,k}(\cdot)$. In summary, the $i$-th
observation will generally not be independent of $\what\nu_{1,k}(\cdot)$. In
all of our numerical experiments, this dependence does not have a
noticeable effect on the distribution of $\what{\tau}^{-}$.



\section{Proofs for Sieve Estimation}
\subsection{Proof of Theorem~\ref{thm:sieve}}
\label{section:proof-of-sieve}

We require a few notions of complexity to give this proof.
Let $\mc{V}$ be a vector space with (semi)norm
$\norm{\cdot}$ on $\mc{V}$, and let $V \subset \mc{V}$. A collection $v_1,
\ldots, v_\covnum \subset V$ is an \emph{$\epsilon$-cover} of $V$ if for
each $v \in V$, there exists $v_i$ such that $\norm{v - v_i} \le
\epsilon$. The \emph{covering number} of $V$ with respect to $\norm{\cdot}$
is then $\covnum(V, \epsilon, \norm{\cdot}) \defeq \inf\left\{\covnum \in \N
: ~ \mbox{there~is~an~} \epsilon \mbox{-cover~of~} V ~
\mbox{with~respect~to~} \norm{\cdot} \right\}$. For some fixed
$b > 0$, define the sequence
\begin{equation}
  \label{eqn:covering-mod}
  \covmod_n \defeq
  \inf\left\{ \delta \in (0, 1):
    \frac{1}{\sqrt{n} \delta^2}
    \int_{b\delta^2}^{\delta} \sqrt{\log \covnum\left( 
        \epsilon^{1 + \covdim / 2\holdersmooth }, \funcspace_n, \norm{\cdot}_{2, \treatedP}
      \right)} d\epsilon
    \le 1 \right\}.
\end{equation}
The following convergence result is a consequence of general results on
sieve estimators~\cite{ChenSh98, Huang98, Chen07} adapted for the
optimization problem~\eqref{eqn:opt}.
\begin{restatable}{lemma}{lemsieve}
  \label{lemma:sieve}
  Let
  Assumptions~\ref{assumption:holder-smooth}, \ref{assumption:bdd-error},
  and~\ref{assumption:lebesgue-equiv}
  hold, and let $\empfunc$ minimize the empirical risk~\eqref{eqn:opt-emp}
  to accuracy
  \begin{equation*}
    \E_{1,n}\left[  \loss_\Gamma\left\{\empfunc (\covariaterv), Y(1)\right\}\right]
    \le   \inf_{ \funcparam \in \funcspace_n}
    \E_{1,n}\left[  \loss_\Gamma\left\{\funcparam (\covariaterv), Y(1)\right\} \right]
    + O_p\left(\epsilon_n^2\right)
  \end{equation*}
  where
  $\epsilon_n \defeq \max\{ \covmod_n, \inf_{\funcparam \in \funcspace_n}
    \norm{\popfunc - \funcparam}_{2, \treatedP}\}$.  Then,
  $\norms{\empfunc - \popfunc}_{2, \treatedP} = O_p( \epsilon_n)$.
\end{restatable}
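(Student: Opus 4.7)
The plan is to invoke the standard sieve M-estimation machinery (e.g.\ \citet{Chen07}, \citet{ChenSh98}), for which we must verify three ingredients: (i) a local quadratic curvature lower bound on the excess population risk, (ii) a Lipschitz-in-norm control on the loss increments so that covering numbers translate to stochastic error, and (iii) a balancing argument between approximation and estimation error.

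For (i), since $t \mapsto \loss_\Gamma(t, y)$ has second derivative bounded below by $\min(1,\Gamma)=1$ almost everywhere, a second-order Taylor expansion at $\popfunc(x)$ combined with the first-order condition $\E[\psi_{\popfunc(X)}(Y(1)) \mid X, Z=1] = 0$ from Lemma~\ref{lemma:opt-is-good} yields, for any measurable $\funcparam$,
\begin{equation*}
  \E_1[\loss_\Gamma(\funcparam(X), Y(1)) - \loss_\Gamma(\popfunc(X), Y(1))] \ge \tfrac{1}{2}\norm{\funcparam - \popfunc}_{2,P_1}^2.
\end{equation*}
For (ii), a direct computation gives the pointwise Lipschitz bound
$|\loss_\Gamma(\funcparam(x), y) - \loss_\Gamma(\popfunc(x), y)| \le \Gamma(|\popfunc(x)-y| + |\funcparam(x)-\popfunc(x)|)|\funcparam(x)-\popfunc(x)|$, and combined with Assumption~\ref{assumption:bdd-error} and Cauchy--Schwarz, this yields a variance bound
$\var_{P_1}(\loss_\Gamma(\funcparam, Y(1)) - \loss_\Gamma(\popfunc, Y(1))) \lesssim \norm{\funcparam - \popfunc}_{2,P_1}^2$ for all $\funcparam$ in a bounded H\"older ball containing $\funcspace_n$ for large $n$ (since sieve iterates can be assumed uniformly bounded, using Assumption~\ref{assumption:holder-smooth} and the structure of polynomial/spline sieves).

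For (iii), pick $\funcparam_n^\ast \in \funcspace_n$ with $\norm{\funcparam_n^\ast - \popfunc}_{2,P_1} \le 2\inf_{\funcparam \in \funcspace_n}\norm{\funcparam - \popfunc}_{2, P_1}$. The basic inequality (using that $\empfunc$ is an approximate empirical minimizer) combined with (i) gives
\begin{equation*}
  \tfrac{1}{2}\norm{\empfunc - \popfunc}_{2,P_1}^2 \le \norm{\funcparam_n^\ast - \popfunc}_{2,P_1}^2 + (\E_1 - \E_{1,n})\bigl[\loss_\Gamma(\empfunc, Y(1)) - \loss_\Gamma(\funcparam_n^\ast, Y(1))\bigr] + O_p(\epsilon_n^2).
\end{equation*}
A peeling argument applied to the empirical process term, using the variance control from (ii) and the uniform entropy integral in~\eqref{eqn:covering-mod}, gives that on the event $\{\norm{\empfunc - \popfunc}_{2,P_1} \le \delta\}$ the centered process is $O_p(\covmod_n \delta + \covmod_n^2)$. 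Solving the resulting quadratic inequality in $\norm{\empfunc - \popfunc}_{2,P_1}$ and absorbing constants yields $\norm{\empfunc - \popfunc}_{2,P_1} = O_p(\epsilon_n)$.

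The main obstacle I anticipate is the empirical process step: I must translate covering numbers of $\funcspace_n$ in $\norm{\cdot}_{2, P_1}$ (which control the geometry of the parameter class) into covering numbers of the centered loss class in $L^2(P_{1,n})$ so that the entropy integral defining $\covmod_n$ actually governs the stochastic fluctuation. The Lipschitz bound in (ii) handles this reduction, but some care is needed because the Lipschitz constant depends on $|Y(1) - \popfunc(X)|$, which is random. A standard truncation (or a weighted-bracketing) argument, together with the uniform boundedness of $\funcparam_n^\ast$ and $\empfunc$ on $\covspace$ eventually (which follows from equivalence of $\norm{\cdot}_{2,P_1}$ and the Lebesgue norm by Assumption~\ref{assumption:lebesgue-equiv}, plus inclusion of $\empfunc$ in a bounded H\"older ball), resolves this. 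The rest is bookkeeping that mirrors Theorem~3.2 in \citet{Chen07}.
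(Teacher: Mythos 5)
Your proposal is correct and follows essentially the same route as the paper: the paper also verifies (via the $1$-strong convexity and $\Gamma$-smoothness of $\loss_\Gamma$ together with the first-order condition from Lemma~\ref{lemma:opt-is-good}) that the excess population risk is $\asymp \norm{\funcparam - \popfunc}_{2,P_1}^2$, establishes the same variance and Lipschitz moduli using Assumption~\ref{assumption:bdd-error} and the H\"older-ball $L^2$-to-sup-norm interpolation (Lemma~\ref{lemma:two-sup-norm} plus Assumption~\ref{assumption:lebesgue-equiv}), and then invokes \citet[Theorem 3.2]{Chen07} rather than re-deriving the peeling argument by hand. The only cosmetic difference is that the paper controls the envelope by $\delta^{2p/(2p+d)} M(x,y)$ via the interpolation inequality where you assert uniform boundedness of the iterates; both suffice for the claimed rate.
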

\begin{proof}
  To prove the result, it is sufficient to verifying the assumptions of the
  following general result for sieve estimation due to~\citet{ChenSh98}
  (see also~\cite{Chen07, Huang98}).
\begin{lemma}[{\citet[Theorem 3.2]{Chen07}}]
  \label{lemma:sieve-chen}
  Let $\popfunc \in \holderball{\holdersmooth}{\holderradius}$ for some
  $\holdersmooth > 0, \holderradius < \infty$, and for $\funcparam$ in some
  neighborhood of $\popfunc$ assume that
  \begin{equation*}
    \treatedE[ \loss_\Gamma(\funcparam(\covariaterv), Y(1))] - \treatedE[
  \loss_\Gamma(\popfunc(\covariaterv), Y(1))] \asymp \norm{\funcparam -
    \popfunc}_{2, \treatedP}^2.
  \end{equation*}
  For $\delta$ small enough, assume there exists
  a function $M : \mc{X} \times \R \to \R_+$ such that
  \begin{align}
    & \sup_{\funcparam \in \funcspace_n:
    \norm{\funcparam - \popfunc}_{2, \treatedP} \le \delta}
    \var_{\treatedP} \left(
      \loss_\Gamma(\funcparam(\covariaterv), Y(1))
      - \loss_\Gamma(\popfunc(\covariaterv), Y(1))
    \right) \lesssim \delta^2 \label{eqn:var-modulus} \\
    & \sup_{\funcparam \in \funcspace_n:
    \norm{\funcparam - \popfunc}_{2, \treatedP} \le \delta}
      \left|
      \loss_\Gamma(\funcparam(\covariaterv), Y(1))
      - \loss_\Gamma(\popfunc(\covariaterv), Y(1))
      \right|
      \le \delta^s M(\covariaterv, Y(1))
      \label{eqn:loss-modulus}
  \end{align}
  for some $s \in (0, 2)$ where
  $\treatedE[M(\covariaterv, Y(1))^2] < \infty$. Then
  $\norms{\empfunc - \popfunc}_{2, \treatedP} = O_p( \epsilon_n)$.
\end{lemma}

To verify these assumptions, first we check that
\begin{equation}
  \treatedE\left[ \loss_\Gamma \left\{\funcparam (\covariaterv), Y(1)\right\}\right] - \treatedE \left[
  \loss_\Gamma\left\{\popfunc(\covariaterv), Y(1)\right\} \right] \asymp \norm{\funcparam -
    \popfunc}_{2, \treatedP}^2.
  \label{eqn:loss-differences-like-ltwo}
\end{equation}
Indeed, $\loss_\Gamma(\cdot, \outcome)$ is $1$-strongly
convex and has $\Gamma$-Lipschitz derivative, so for any $t, t' \in \R$,
\begin{equation}
  \half (t - t')^2
  + \psi_{t'}(\outcome) (t - t')
  \le \loss_\Gamma(\funcscalar, \outcome) -   \loss_\Gamma(\funcscalar', \outcome)
  \le \psi_{\funcscalar'}(\outcome) (\funcscalar - \funcscalar')
  + \frac{\Gamma}{2} (\funcscalar - \funcscalar')^2,
  \label{eqn:loss-upper-lower}
\end{equation}
where we have used that $\psi_{\funcscalar}(\outcome) =\hinge{\outcome -
  \funcscalar} - \Gamma \neghinge{\outcome-\funcscalar} =
\frac{\partial}{\partial \funcscalar}\loss_\Gamma(\funcscalar,
y)$. Recalling that $\treatedE[ \psi_{\popfunc(\covariaterv)}(Y(1)) \mid
  \covariaterv] = 0$ almost surely, taking expectations
yields~\eqref{eqn:loss-differences-like-ltwo} as
\begin{equation*}
  \half \norm{\theta - \theta_1}_{2,\treatedP}^2
  \le \treatedE\left[ \loss_\Gamma\left\{\funcparam (\covariaterv), Y(1)\right\}\right] - \treatedE \left[
  \loss_\Gamma\left\{\popfunc(\covariaterv), Y(1)\right\} \right]
  \le \frac{\Gamma}{2} \ltwotp{\funcparam - \popfunc}^2.
\end{equation*}

Next, we verify~\eqref{eqn:var-modulus} and \eqref{eqn:loss-modulus}.
By substituting in inequality~\eqref{eqn:loss-upper-lower}, we have
\begin{equation}
  \label{eqn:modulus-bound}
  |\loss_\Gamma(\funcparam(\covariate), \outcome)
  - \loss_\Gamma(\popfunc(\covariate), \outcome) |
  \le \Gamma |\outcome - \popfunc(\covariate)| |\funcparam(\covariate) - \popfunc(\covariate)|
  + \Gamma |\funcparam(\covariate) - \popfunc(\covariate)|^2.
\end{equation}
The following lemma~\cite{ChenSh98, Gabushin67} connects the
$L^2(\lambda)$-norm of
$\funcparam \in \holderball{\holdersmooth}{\holderradius}$ to its supremum
norm (where $\lambda$ denotes the Lebesgue measure).
\begin{lemma}[{\citet[Lemma 2]{ChenSh98}}]
  \label{lemma:two-sup-norm}
  For $\funcparam \in \holderball{\holdersmooth}{\holderradius}$, we have
  $\linf{\funcparam} \le 2 c^{1-\frac{2\holdersmooth}{2\holdersmooth + \covdim}} \norm{\funcparam}_{2,
    \lambda}^{\frac{2\holdersmooth}{2\holdersmooth + \covdim}}$.
\end{lemma}
\noindent Note that
$\norm{\cdot}_{2, \lambda} \asymp \norm{\cdot}_{2, \treatedP}$ by
Assumption~\ref{assumption:lebesgue-equiv}, and so
$\linf{\funcparam} \lesssim \norm{\funcparam}_{2, \treatedP}^{\frac{2\holdersmooth}{2\holdersmooth + \covdim}}$.
Taking squares on both sides in the inequality~\eqref{eqn:modulus-bound} and
using convexity of $t \mapsto t^{2}$ gives
\begin{equation*}
  |\loss_\Gamma(\funcparam(\covariate), \outcome))
  - \loss_\Gamma(\popfunc(\covariate), \outcome)) |^2
  \le 2\Gamma^2 |\outcome - \popfunc(\covariate)|^2 |\funcparam(\covariate) -
  \popfunc(\covariate)|^2
  + 2\Gamma^2 |\funcparam(\covariate) - \popfunc(\covariate)|^4.
\end{equation*}
Recalling from Assumption~\ref{assumption:bdd-error} that $\treatedE[(Y(1) -
  \popfunc(\covariaterv))^2 \mid \covariaterv] \le \condmoment$ for some
$\condmoment < \infty$, Lemma~\ref{lemma:two-sup-norm} implies that
\begin{align*}
  \sup_{\funcparam \in \funcspace_n:
    \norm{\funcparam - \popfunc}_{2, \treatedP} \le \delta}
    \var_{\treatedP} \left(
      \loss_\Gamma(\funcparam(\covariaterv), Y(1))
      - \loss_\Gamma(\popfunc(\covariaterv), Y(1))
  \right)  \lesssim \Gamma^2 M \delta^2 + 
  \Gamma^2 \delta^{2+ \frac{4\holdersmooth}{2\holdersmooth + \covdim}}
  \lesssim \delta^2
\end{align*}
whenever $\delta \in (0, 1)$. This verifies the condition~\eqref{eqn:var-modulus}.
Similarly, for $\delta$ small
\begin{equation*}
  \sup_{\funcparam \in \funcspace_n:
    \norm{\funcparam - \popfunc}_{2, \treatedP} \le \delta}
  \left|
    \loss_\Gamma(\funcparam(\covariaterv), Y(1))
    - \loss_\Gamma(\popfunc(\covariaterv), Y(1))
  \right|
  \lesssim \Gamma  \delta^{\frac{2\holdersmooth}{2\holdersmooth + \covdim}} c^{1-\frac{2\holdersmooth}{2\holdersmooth+\covdim}} 
  \left(|Y(1) - \popfunc(X)| + c^{1-\frac{2\holdersmooth}{2\holdersmooth+\covdim}}\right).
\end{equation*}
Noting $\treatedE (Y(1) - \popfunc(X))^2 <\infty$
verifies the condition~\eqref{eqn:loss-modulus} with
$s = 2\holdersmooth / (2\holdersmooth + \covdim).$
\end{proof}

  It now suffices to bound $\covmod_n$ and the approximation error
  $\inf_{\funcparam \in \funcspace_n} \norm{\popfunc - \funcparam}_{2,
    \treatedP}$ in Lemma ~\ref{lemma:sieve}.  First, note
  from~\citet{ChenSh98} and \citet{vandeGeer00} that
\begin{equation*}
  \log\covnum\left(
    \epsilon, \funcspace_n, \ltwotp{\cdot}
  \right)
  \lesssim \mbox{dim}(\funcspace_n) \log \frac{1}{\epsilon},
\end{equation*}
where $\mbox{dim}(\funcspace_n) = \comp_n^d$.
Then
\begin{equation*}
  \frac{1}{\sqrt{n} \delta^2}
  \int_{b\delta^2}^{\delta} \sqrt{\log \covnum\left( 
      \epsilon^{1 + \covdim / 2\holdersmooth }, \funcspace_n, \norm{\cdot}_{2, \treatedP}
    \right)} d\epsilon
  \lesssim \frac{1}{\delta} \sqrt{\frac{\mbox{dim}(\funcspace_n)}{n}\log \frac{1}{\delta}}, 
\end{equation*}
which implies that
$$\covmod_n \asymp \sqrt{\frac{\mbox{dim}(\funcspace_n) \log n}{n}} =
\sqrt{\frac{\comp_n^d \log n}{n}}.$$

For $\funcspace_n$ defined as in Examples~\ref{example:polynomials} or
\ref{example:splines} with $\comp=\comp_n$, standard function approximation
results yield $\inf_{\funcparam \in \funcspace_n} \linf{\funcparam -
  \popfunc} = O(\comp_n^{-\holdersmooth})$. (See
\citet[Section 5.3.1]{Timan63} and \citet[Theorem
  12.8]{Schumaker07}, respectively.)
Therefore, for any of these choices of approximating functions,
\begin{equation*}
  \inf_{\funcparam \in \funcspace_n} \norm{\popfunc - \funcparam}_{2,
  \treatedP} = O(\comp_n^{-p}).
\end{equation*}
Set $\comp_n \asymp n^{\frac{1}{2\holdersmooth + \covdim}} (\log
n)^{-\frac{1}{2\holdersmooth + \covdim}}$ in Lemma~\ref{lemma:sieve}, so
that $\|\what\theta_1-\theta_1\|_{2,P_1}=O_p((\frac{\log n}{n}
)^{\frac{\holdersmooth}{2\holdersmooth + \covdim}})$.

Finally, Lemma~\ref{lemma:two-sup-norm} gives the comparison between
$\norm{\cdot}_{2, \treatedP}$ and $\norm{\cdot}_{\infty, \treatedP}$.

\subsection{Proof of Proposition~\ref{prop:prob-sieve}}
\label{section:proof-of-prob-sieve}

To simplify notation, we drop the dependence on cross-fit folds, and write as
if $\what{\nu}_{1,k}$ is estimated using a sample of size $n$, with empirical
expecatation $\E_n$, and $\what{\theta}_{1k}^{\nu_1}$ is estimated using an
independent sample.

For
some fixed $b > 0$, let
\begin{equation}
  \label{eqn:prob-covering-mod}
  \bar{\covmod}_n \defeq
  \inf\left\{ \delta \in (0, 1):
    \frac{1}{\sqrt{n} \delta^2}
    \int_{b\delta^2}^{\delta} \sqrt{\log \covnum\left( 
        \epsilon^{1 + \covdim / 2\holdersmooth }, \probspace_n,
        \norm{\cdot}_{2, \treatedP}
      \right)} d\epsilon
    \le 1 \right\}.
\end{equation}
The following lemma shows that $\bar{\covmod}_n$ quantifies the trade-off
between estimation error, approximation error, and
$\norms{\what{\theta}_{1k}^{\nu_1} - \popfunc}_{2,P_1}$ when approximating
$\popprob.$ Proposition~\ref{prop:prob-sieve} follows directly from
the lemma.
\begin{restatable}{lemma}{lemprobsieve}
  \label{lemma:prob-sieve}
  Let Assumptions~\ref{assumption:lebesgue-equiv},
  \ref{assumption:prob-smooth}, and~\ref{assumption:lip-cdf} hold, and let
  $\what{\theta}_{1k}^{\nu_1}$ be a consistent estimator of $\theta_1$ based on
  an independent external data. Let $\what{\nu}_{1,k}$ minimize the
  empirical loss to accuracy
  \begin{equation*}
    \E_{n}\left[\lossnu\left(\what{\nu}_{1,k}(\covariaterv), \what{\theta}_{1k}^{\nu_1}(\covariaterv), Y(1)\right)\right]
    \le   \inf_{ \probfunc \in 1+(\Gamma-1)\probspace_n}
    \E_{n}\left[ \lossnu\left(\probfunc(\covariaterv), \what{\theta}_{1k}^{\nu_1}(\covariaterv), Y(1)\right)\right]
    + O_p\left(\epsilon_n^2\right)
  \end{equation*}
  where
  $\epsilon_n \defeq \max\{ \bar{\covmod}_n, \inf_{\probfunc \in
      1+(\Gamma-1)\probspace_n} \norm{\popprob - \probfunc}_{2, \treatedP}
  \}$.  If $n \epsilon_n^2 \to \infty$, then
  $$\norm{\what{\nu}_{1,k} - \popprob}_{2, \treatedP} = O_p \left( \epsilon_n +
    \ltwotp{\what{\theta}_{1k}^{\nu_1} - \popfunc} \right).$$
\end{restatable}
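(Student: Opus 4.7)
The plan is to reduce the result to the abstract sieve convergence theorem of Chen--Shen (Lemma~\ref{lemma:sieve-chen}), applied conditionally on $\what{\theta}_{1k}^{\nu_1}$, combined with an $L^2$ Lipschitz bias bound coming from Assumption~\ref{assumption:lip-cdf}. Because $\what{\theta}_{1k}^{\nu_1}$ is constructed from a sample independent of the one used to form $\E_n$, I would condition on it throughout. Introduce the conditional population minimizer of the plug-in loss,
\[
  \tilde{\nu}(x) \defeq 1 + (\Gamma-1)P\left(Y(1) \ge \what{\theta}_{1k}^{\nu_1}(x) \mid Z=1, X=x\right),
\]
which coincides with $\popprob$ when $\what{\theta}_{1k}^{\nu_1}=\popfunc$. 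The triangle inequality yields $\|\what{\nu}_{1,k} - \popprob\|_{2,P_1} \le \|\what{\nu}_{1,k} - \tilde{\nu}\|_{2,P_1} + \|\tilde{\nu} - \popprob\|_{2,P_1}$, and Assumption~\ref{assumption:lebesgue-equiv} lets me move freely between $\|\cdot\|_{2,P}$ and $\|\cdot\|_{2,P_1}$.

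The bias term is immediate: on the event $\{\what{\theta}_{1k}^{\nu_1}\in S\}$, which has probability tending to one by Assumption~\ref{assumption:prob-smooth}(a), Assumption~\ref{assumption:lip-cdf} gives
\[
  \|\tilde{\nu} - \popprob\|_{2,P_1} \le (\Gamma-1) L_{\varpi} \|\what{\theta}_{1k}^{\nu_1} - \popfunc\|_{2,P_1},
\]
so this piece contributes $O_p(\|\what{\theta}_{1k}^{\nu_1} - \popfunc\|_{2,P_1})$.

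For the estimation term I would verify the hypotheses of Lemma~\ref{lemma:sieve-chen} for the loss $\varpi \mapsto \lossnu(\varpi, \what{\theta}_{1k}^{\nu_1}, y)$ with target $\tilde{\nu}$ and sieves $1 + (\Gamma-1)\probspace_n$. Pointwise in $x$, the map $\varpi \mapsto \half(1 + (\Gamma-1)\indic{y\ge\theta} - \varpi)^2$ is $1$-strongly convex with $1$-Lipschitz derivative, and $\tilde{\nu}(x)$ is its unconstrained conditional-mean minimizer, so
\[
  \E\left[\lossnu(\varpi(X), \what{\theta}_{1k}^{\nu_1}(X), Y(1)) - \lossnu(\tilde{\nu}(X), \what{\theta}_{1k}^{\nu_1}(X), Y(1)) \mid Z=1\right] = \half \|\varpi - \tilde{\nu}\|_{2,P_1}^2,
\]
which supplies the required quadratic curvature. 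Since every candidate $\varpi$ lies in $[1,\Gamma]$ and the dependence on $Y(1)$ enters only through the bounded indicator $\indic{Y(1)\ge\what{\theta}_{1k}^{\nu_1}(X)}$, the variance and modulus-of-continuity conditions analogous to~\eqref{eqn:var-modulus} and~\eqref{eqn:loss-modulus} are elementary and hold uniformly in $\what{\theta}_{1k}^{\nu_1}$. Assumption~\ref{assumption:prob-smooth}(b) further ensures $(\tilde{\nu}-1)/(\Gamma-1)\in\probspace \subseteq \holderball{\probsmooth}{\probradius}$ on the high-probability event, so the approximation-theoretic results of~\cite{Timan63,Schumaker07} give $\inf_{\varpi \in 1 + (\Gamma-1)\probspace_n} \|\varpi - \tilde{\nu}\|_{2,P_1} = O(\comp_n^{-\probsmooth}) = O(\epsilon_n)$, and the covering-number estimate for finite-dimensional linear sieves used in the proof of Theorem~\ref{thm:sieve} gives the complexity modulus $\bar{\covmod}_n = O(\epsilon_n)$. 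Lemma~\ref{lemma:sieve-chen} then produces $\|\what{\nu}_{1,k} - \tilde{\nu}\|_{2,P_1} = O_p(\epsilon_n)$.

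The main subtlety, and the only real obstacle, is that $\tilde{\nu}$ is itself random through its dependence on $\what{\theta}_{1k}^{\nu_1}$, so Lemma~\ref{lemma:sieve-chen} (stated for a fixed population target) cannot be invoked naively. Assumption~\ref{assumption:prob-smooth} is engineered precisely to handle this: on $\{\what{\theta}_{1k}^{\nu_1}\in S\}$ the approximation-error bound depends on $\tilde{\nu}$ only through its membership in the fixed H\"older class $\probspace$, the curvature identity above holds for any realization of $\what{\theta}_{1k}^{\nu_1}$, and the sieve complexity is nonrandom. Conditioning on $\what{\theta}_{1k}^{\nu_1}$ and on this event reduces the analysis to a deterministic sieve problem, with unconditional statements following because $P(\what{\theta}_{1k}^{\nu_1}\in S)\to 1$ and both error terms are $O_p$-controlled. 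Combining the bias and estimation bounds gives the claimed $O_p(\epsilon_n + \|\what{\theta}_{1k}^{\nu_1} - \popfunc\|_{2,P_1})$ rate.
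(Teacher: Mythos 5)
Your proposal is correct and follows essentially the same route as the paper: the same triangle-inequality decomposition through the plug-in target $\nu_1(\cdot\mid\what{\theta}_{1k}^{\nu_1})$, the same Lipschitz bound on the bias via Assumption~\ref{assumption:lip-cdf}, and the same verification of the Chen--Shen sieve conditions (quadratic curvature from the conditional-mean identity, bounded loss, and the variance/modulus conditions) on the event $\{\what{\theta}_{1k}^{\nu_1}\in S\}$. The only cosmetic difference is that the paper invokes the high-probability (exponential-tail) version of the sieve result with constants independent of $\theta\in S$ to handle the randomness of the target, whereas you argue the same uniformity directly for the $O_p$ version after conditioning; you correctly identify this as the one genuine subtlety.
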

\begin{proof}
  \newcommand{\nugiven}[1]{\nu_1(\cdot \mid {#1})}
  \newcommand{\nuxgiven}[1]{\nu_1(x \mid {#1})}
  \newcommand{\nuXgiven}[1]{\nu_1(X \mid {#1})}
  For $\theta : \mc{X} \to \R$,
  let $\nugiven{\theta}$ be the solution to the optimization
  problem~\eqref{eqn:opt-prob} using $\theta$ in place
  of $\popfunc$, i.e.
  \begin{equation*}
    \nuxgiven{\theta}
    \defeq 1 + (\Gamma-1)
    \treatedP(Y(1) \ge \theta(\covariaterv) \mid
    \covariaterv=x, \theta).
  \end{equation*}
  By the triangle inequality,
  \begin{equation}
    \label{eqn:triangle-prob}
    \ltwotp{\what{\nu}_{1,k} - \popprob}
    \le \ltwotp{\what{\nu}_{1,k} - \nugiven{\what{\theta}_{1k}^{\nu_1}}}
    + \ltwotp{\nugiven{\what{\theta}_{1k}^{\nu_1}} - \popprob}.
  \end{equation}

  We bound the second term in \eqref{eqn:triangle-prob}
  by using Assumption~\ref{assumption:lip-cdf} to obtain
  \begin{align*}
    \ltwotp{\nugiven{\what{\theta}_1} - \popprob}^2
    & = (\Gamma-1)^2 \E_1\left[\left(
      \treatedP(Y(1) \ge \what{\theta}_1 (\covariaterv)
      \mid \covariaterv)
      -
      \treatedP(Y(1) \ge \popfunc(\covariaterv) \mid \covariaterv)
      \right)^2\right] \\
    & \le (\Gamma-1)^2 L_{\probfunc}^2
    \ltwotp{\what{\theta}_{1k}^{\nu_1} - \popfunc}^2.
  \end{align*}

  It remains to show that
  $\norms{\what{\nu}_{1,k} - \nugiven{\what{\theta}_{1k}^{\nu_1}}}_{2,P_1}
  = O_p(\epsilon_n)$. To this end, we use the
  fact that $\nugiven{\what{\theta}_{1k}^{\nu_1}}$ uniquely minimizes
  $\treatedE[  \lossnu(\probfunc(\covariaterv),
    \what{\theta}_{1k}^{\nu_1}(\covariaterv), Y(1))]$
  and apply a general result for sieve estimators.
  In this case, a variant of
  Lemma~\ref{lemma:sieve} with high-probability guarantees
  establishes appropriate convergence rates.
  \begin{lemma}[{\citet[Corollary 1]{ChenSh98}}]
    \label{lemma:sieve-concentration}
    Let $\theta : \mc{X} \to \R$ be such that $\nugiven{\theta} \in 1 +
    (\Gamma-1)\holderball{\probsmooth}{\probradius}$ for some $\probsmooth,
    \probradius > 0$. Assume that
    for any $\nu : \mc{X} \to \R$,
    \begin{equation}
      \label{eq:approximate}
      \treatedE[  \lossnu(\nu(\covariaterv), \theta(\covariaterv), Y(1))]
      - \treatedE[  \lossnu(\nuXgiven{\theta}, \theta(\covariaterv),
        Y(1))]
      \asymp \norm{\nu -  \nugiven{\theta}}_{2, \treatedP}^2.
    \end{equation}
    Let $\mc{V}_n(\delta) = \{\nu \in 1 + (\Gamma - 1) \probspace_n:
    \norms{\probfunc - \nugiven{\theta}}_{2,\treatedP} \le \delta\}$ for
    shorthand, and assume that for small enough $\delta > 0$ and some $s \in (0,
    2)$,
    \begin{align}
      & \sup_{\probfunc \in \mc{V}_n(\delta)}
      \var\left(
      \lossnu\left(\probfunc(\covariaterv), \theta(\covariaterv), Y(1)\right)
      - \lossnu\left(\nuXgiven{\theta},
      \theta(\covariaterv), Y(1)\right) \mid Z=1
      \right) \le C \delta^2 \label{eqn:var-modulus-prob} \\
      & \sup_{\probfunc \in \mc{V}_n(\delta)}
      \left|
      \lossnu\left(\probfunc(x), \theta(x), y\right)
      - \lossnu\left(\nuxgiven{\theta}, \theta(x), y\right)
      \right|
      \le C\delta^s
      \label{eqn:loss-modulus-prob}
    \end{align}
    for all $x, y$,
    where $C < \infty$ is independent of $\theta$.  If the loss is uniformly
    bounded over $\Pi$ so that $\sup_{\nu \in 1+ (\Gamma-1)\Pi, \theta \in
      \R, y \in \R} \lossnu(\nu(x), \theta, y) < \infty$, then there exist
    constants $0 < c, C < \infty$ (independent of
    $\theta$) such that for any $t > 0$,
    if $\what{\nu}_{1,k}(\cdot \mid \theta)$ is the minimizer to the problem
    \begin{equation*}
    \minimize_{ \probfunc(\cdot) \in 1 + (\Gamma -1)\probspace_{n}}
    ~ \E_{n}\left[\lossnu
      \big(\probfunc(\covariaterv), \theta(\covariaterv),
      Y \big) \mid Z = 1 \right],
  \end{equation*}
  then we have
    \begin{equation*}
      \treatedP\left(
      \ltwotp{\what{\nu}_{1,k}(\cdot \mid \theta)
        - \nugiven{\theta}} \ge t \epsilon_n
      \right)
      \le C \exp\left( -c n \epsilon_n^2 t^2 \right).
    \end{equation*}
  \end{lemma}

  On the event $ \event_n \defeq \{ \what{\theta}_{1k}^{\nu_1} \in S \}$,
  Assumption~\ref{assumption:prob-smooth} implies
  $\nugiven{\what{\theta}_{1k}^{\nu_1}} \in 1 + (\Gamma-1)\probspace$.  Since
  the event $\event_n$ is independent of the samples used in the sieve
  procedure~\eqref{eqn:opt-emp-prob} for computing $\what{\nu}_{1,k}$ and
  $\P_1(\event_n) \to 1$, applying Lemma~\ref{lemma:sieve-concentration}
  conditioned on this event gives the desired result. We now verify each of
  the hypotheses of Lemma~\ref{lemma:sieve-concentration} for any $\theta \in S$.

  Recall the definition $\lossnu(\nu, \theta, y)
  = \half (1 + (\Gamma - 1) \indic{y \ge \theta} - \nu)^2$. We
  first demonstrate boundedness:
  we have
  \begin{equation*}
    \sup_{\nu \in 1+ (\Gamma-1)\probspace, x, \theta \in \R, y \in \R}
    \lossnu(\nu(x), \theta, y) \le \half \Gamma^2,
  \end{equation*}
  as $\probspace \subset \{\mc{X} \to [0, 1]\}$.
  Since
  $\treatedE[\nuXgiven{\theta} - 1 - (\Gamma-1)\indic{Y(1) \ge
      \theta(\covariaterv)} \mid \covariaterv] = 0$ almost surely,
  for any $\nu, \theta : \mc{X} \to \R$ we have
  \begin{align*}
    \treatedE[  \lossnu(\nu(\covariaterv), \theta(\covariaterv), Y(1))]
    - \treatedE[ \lossnu(\nuXgiven{\theta}, \theta(\covariaterv), Y(1))]
    = \half \treatedE [(\nu(\covariaterv) - \nuXgiven{\theta})^2],
  \end{align*}
  and condition~\eqref{eq:approximate} holds. To verify
  conditions~\eqref{eqn:var-modulus-prob} and~\eqref{eqn:loss-modulus-prob},
  we note that
  \begin{align*}
    \lefteqn{
      \lossnu(\probfunc(x), \theta(x), y)
      -   \lossnu\left(\nuxgiven{\theta}, \theta(x), y\right)}
    \\
    & = \half \left( \probfunc(\covariate)^2 - \nuxgiven{\theta}^2\right)
    - \left( 1 + (\Gamma-1) \indic{\outcome \ge (\covariate)}\right)
    \left(
    \probfunc(\covariate) - \nuxgiven{\theta}    \right).
  \end{align*}
  By squaring both sides, taking expectations, and using
  Lemma~\ref{lemma:two-sup-norm} to see that $\linf{\nu},
  \linf{\nugiven{\theta}} \le C$ for elements $\nu \in \probspace_n$ (and
  applying Assumption~\ref{assumption:lebesgue-equiv}), we obtain
  \begin{equation*}
    \var_{\treatedP} \left(
    \lossnu(\probfunc(\covariaterv), \theta(\covariaterv), Y(1))
    -   \lossnu(\nuXgiven{\theta}, \theta(\covariaterv), Y(1))
    \right)
    \le C \ltwotp{\probfunc - \nugiven{\theta}}^2
    \le C \delta^2
  \end{equation*}
  for any $\probfunc \in \probspace_n$ such that $\norm{\probfunc -
    \nugiven{\theta}}_{2, \treatedP} \le
  \delta$. Condition~\eqref{eqn:var-modulus-prob} holds.  For
  condition~\eqref{eqn:loss-modulus-prob}, we similarly note that
  \begin{align*}
    \left|
    \lossnu(\probfunc(\covariate), \theta(\covariate), y)
    -   \lossnu(\nuxgiven{\theta}, \theta(x), y)\right|
    \le C \linf{\probfunc - \nugiven{\theta}}.
  \end{align*}
  Again applying Lemma~\ref{lemma:two-sup-norm} and
  Assumption~\ref{assumption:lebesgue-equiv}, we
  see
  condition~\eqref{eqn:loss-modulus-prob} holds with
  $s = 2\probsmooth / (2\probsmooth+ \covdim)$.
\end{proof}


\section{Proofs for properties of bounds on the ATE}
\subsection{Proof of the consistency of the ATE estimator}
\label{sec:proof-consistency}

\thmconsistency*
\begin{proof}
  To establish the convergence of $\what{\mu}_1^-$, we use the cross-fitting
  construction to split $\what{\mu}_1^-$ into $K$ estimators and establish
  the convergence of each separately. Because $K$ is fixed, $\what{\mu}_1^-$
  will converge if each does. For the estimator in the $k$-th fold, the
  samples used to estimate $(\what{\theta}_{1,k}, \what{\nu}_{1,k},
  \what{e}_{1,k})$ are independent of the samples in $\mc{I}_k$ and
  hence the sum
  \begin{equation*}
    \what{\mu}_{1,k}^- \defeq
    \frac{1}{|\mathcal{I}_k|} \sum_{i \in \mathcal{I}_k} Z_i Y_i
    + (1-Z_i)\what{\theta}_{1,k}(X_i) +
    Z_i \frac{\psi_{\what{\theta}_{1,k}(X_i)}(Y_i)(1-\what{e}_{1,k}(X_i))}{
      \what{\nu}_{1,k}(X_i)\what{e}_{1,k}(X_i)}.
  \end{equation*}
  For notational simplicity, let $m = |\mathcal{I}_k|$. By
  construction, $m \approx n/K$, so establishing convergence as $m \to
  \infty$ among the samples in $\mathcal{I}_k$ is equivalent to establishing
  the convergence as $n \to \infty$.

  The argument for establishing consistency is relatively
  standard~\cite{chernozhukov2018double}. The only challenge is that
  $\what{\nu}_{1,k}$ (recall the
  definition~\eqref{eqn:weight-normalization-def-a} and
  estimator~\eqref{eqn:opt-emp-prob}) changes as a function of $m$, so that
  for $i\in \mathcal{I}_k$
  \begin{equation*}
    Z_i Y_i + (1-Z_i)\what{\theta}_{1,k}(X_i) + Z_i \frac{\psi_{\what{\theta}_{1,k}(X_i)}(Y_i)(1-\what{e}_{1,k}(X_i))}{\what{\nu}_{1,k}(X_i)\what{e}_{1,k}(X_i)}
  \end{equation*}
  are only i.i.d.\ conditional on $\what{\nu}_{1,k}$. Define the
  $\sigma$-algebra $\mc{F}_{\infty, k}$ generated by the samples in the set
  $\mc{I}_{-k}$ as $n \to \infty$, so that the elements in the preceding
  display are i.i.d.\ conditional on $\mc{I}_{-k}$.

  Because $\P(|\what{\mu}_1^- - \mu_1^-|> \epsilon \mid \mc{F}_{\infty, k})
  \cas 0$ implies $\P(|\what{\mu}_1^- - \mu_1^-|> \epsilon ) \to 0$ by
  dominated convergence, it suffices to show that $\what{\mu}_1^- \cp
  \mu_1^-$ conditionally almost surely on $\mc{F}_{\infty, k}$. To check
  convergence conditional on $\mc{F}_{\infty, k}$, note that the elements in
  the preceeding display form a triangular array for which the weak law of
  large numbers still holds. Then, we apply the following weak
  law for triangular arrays.
  \begin{thm}[Dembo~\cite{Dembo16}, Corollary 2.1.14]
    \label{thm:wlln-tri}
    Suppose that for each $m$, the random variables
    $\xi_{m,i},\,i=1,\dots,m$ are pairwise independent, identically
    distributed for each $m$, and $\E[|\xi_{m,1}|] < \infty$.  Then setting
    $S_m = \sum_{i = 1}^m \xi_{m,i}$ and $a_m = \sum_{i=1}^m
    \E\overline{\xi}_{m,i}$,
    \begin{equation*}
      m^{-1}(S_m - a_m) \cp 0~\mbox{as}~m \to \infty.
    \end{equation*} 
  \end{thm}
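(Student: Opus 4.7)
The plan is to prove this WLLN via the classical truncation argument of Kolmogorov, using only Chebyshev's inequality together with the fact that \emph{pairwise} independence is enough for the variance of a sum to equal the sum of variances. The notation $\overline{\xi}_{m,i}$ in the statement itself points toward the natural truncation $\overline{\xi}_{m,i} \defeq \xi_{m,i}\ind{|\xi_{m,i}| \le m}$; with this choice $a_m = \E \overline{S}_m$ where $\overline{S}_m \defeq \sum_{i=1}^m \overline{\xi}_{m,i}$, and the argument splits into controlling the truncation error $m^{-1}(S_m - \overline{S}_m)$ and the centered truncated sum $m^{-1}(\overline{S}_m - a_m)$.

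For the first piece, a union bound combined with identical distribution within row $m$ gives
\begin{equation*}
P(S_m \neq \overline{S}_m) \;\le\; m\, P(|\xi_{m,1}| > m) \;\le\; \E\bigl[|\xi_{m,1}|\,\ind{|\xi_{m,1}| > m}\bigr],
\end{equation*}
which vanishes under the integrability hypothesis (immediate by dominated convergence when the row distributions share a common envelope, as is implicitly the case in the application to Theorem~\ref{thm:consistency}, where consistency of the nuisance estimators makes the conditional distributions stabilize as $n \to \infty$). Hence $m^{-1}(S_m - \overline{S}_m) \cp 0$.

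For the second piece, pairwise independence within row $m$ yields the sum-of-variances identity
\begin{equation*}
\var(\overline{S}_m) \;=\; \sum_{i=1}^m \var(\overline{\xi}_{m,i}) \;\le\; m\, \E[\overline{\xi}_{m,1}^2],
\end{equation*}
so Chebyshev's inequality gives $P(m^{-1}|\overline{S}_m - a_m| > \epsilon) \le \E[\overline{\xi}_{m,1}^2]/(m\epsilon^2)$. The key quantitative step is then to show $\E[\overline{\xi}_{m,1}^2] = o(m)$. By the layer-cake identity,
\begin{equation*}
\E[\overline{\xi}_{m,1}^2] \;=\; 2\int_0^m t\, P(|\xi_{m,1}| > t)\, dt,
\end{equation*}
and since $t\, P(|\xi_{m,1}| > t) \le \E[|\xi_{m,1}|\,\ind{|\xi_{m,1}| > t}]$ vanishes as $t\to\infty$ under $\E|\xi_{m,1}| < \infty$, a Cesaro-style averaging yields $m^{-1} \E[\overline{\xi}_{m,1}^2] \to 0$. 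Combining the two pieces through the triangle inequality gives $m^{-1}(S_m - a_m) \cp 0$.

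The main obstacle is ensuring the integrability is uniform enough across the rows to justify the $o(1)$ bounds above, since the distribution of $\xi_{m,1}$ is allowed to change with $m$. With full mutual independence one could invoke a characteristic-function proof, but under the weaker pairwise hypothesis Chebyshev is essentially forced; fortunately, the variance-sum identity it relies on requires nothing stronger than pairwise independence, so the mild loss is harmless for the stated conclusion.
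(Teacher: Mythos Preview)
The paper does not prove this statement; it is cited as Corollary~2.1.14 of Dembo's lecture notes and invoked as a black box inside the proof of Theorem~\ref{thm:consistency}. There is thus no in-paper proof to compare against.

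Your truncation-plus-Chebyshev argument is exactly the standard route (and is how the result is proved in the cited reference). You are also right to flag the uniformity issue, and in fact it is a genuine gap in the statement as the paper records it: the hypothesis ``$\E|\xi_{m,1}|<\infty$ for each $m$'' is too weak for a triangular array whose row distribution varies with $m$. For a concrete failure, take $\xi_{m,i}=2m$ with probability $1/(2m)$ and $0$ otherwise, pairwise (indeed fully) independent within each row; then $\E|\xi_{m,1}|=1$ for all $m$, yet $\overline{\xi}_{m,i}\equiv 0$ so $a_m=0$, while $m^{-1}S_m=2\cdot\mathrm{Binomial}(m,1/(2m))$ converges in distribution to $2\cdot\mathrm{Poisson}(1/2)$, not to zero. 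Your proof needs, and implicitly uses, a condition such as $\sup_m \E[|\xi_{m,1}|\ind{|\xi_{m,1}|>c}]\to 0$ as $c\to\infty$ (equivalently, domination by a fixed integrable variable) to push through both the $P(S_m\neq\overline S_m)\to 0$ step and the $m^{-1}\E[\overline\xi_{m,1}^{\,2}]\to 0$ step. Dembo's actual Corollary~2.1.14 is the single-sequence WLLN, where the law of $\xi_{m,1}$ does not change with $m$ and your argument is complete as written; the paper has tacitly promoted it to a triangular-array statement without supplying the extra uniform-integrability hypothesis. Your caveat is therefore exactly on point, and your parenthetical remark that the application in Theorem~\ref{thm:consistency} supplies the needed envelope (via the event $A$ and the bounds in terms of $|Y(1)|$ and $|\theta_1(X)|$) is the right way to close the circle.
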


  \newcommand{\expectinfty}{\E_{\infty,k}}
  
  For the constant $\constlow > 0$ in
  Assumption~\ref{assumption:bounded-variance}. Define the event
  \begin{equation*}
    A \defeq \left\{ \inf_{x} \what{e}_{1,k}(x) \ge \epsilon,
    \inf_{x} \what{\nu}_{1,k}(x) \ge \constlow,
    ~\mbox{and}~
    \norms{\theta_1 - \what{\theta}_{1,k}}_{1,P} \le 1
    \right\},
  \end{equation*}
  which is $\mc{F}_{\infty,k}$-measurable. Let
  \begin{equation*}
    \xi_{m,i} \defeq \ind{A} \left( Z_i Y_i + (1-Z_i)\what{\theta}_{1,k}(X_i)
    + Z_i \frac{\psi_{\what{\theta}_{1,k}(X_i)}(Y_i)(1-\what{e}_{1,k}(X_i))}{
      \what{\nu}_{1,k}(X_i)\what{e}_{1,k}(X_i)}\right),
  \end{equation*}
  so that $\what{\mu}_{1,k}^- \ind{A} = \frac{1}{m} \sum_{i=1}^m \xi_{m,i}$.
  Assume without loss of generality that $\mathcal{I}_k = \{1,\dots,m\}$.
  Defining the conditional expectation $\expectinfty[\cdot] \defeq \E[\cdot
    \mid \mc{F}_{\infty, k}]$ for shorthand, we have
  \begin{align*}
    \expectinfty[|\xi_{m,i}|]
    & = \ind{A}\expectinfty\left[\bigg| Z Y(1) + (1-Z)\what{\theta}_{1,k}(X)
      + Z \frac{\psi_{\what{\theta}_{1,k}(X)}(Y(1))(1-\what{e}_{1,k}(X))}{
        \what{\nu}_{1,k}(X)\what{e}_{1,k}(X)}
      \bigg|\right] \\
    & \le \ind{A}\left(\expectinfty\left| Y(1) \right| +
    \expectinfty\left|\what{\theta}_{1,k}(X)\right| + \expectinfty\left|
    \frac{\psi_{\what{\theta}_{1,k}(X)}(Y(1))(1-\what{e}_{1,k}(X))}{\what{\nu}_{1,k}(X)\what{e}_{1,k}(X)}
    \right| \right) \\
    & \le \ind{A} \left( \expectinfty\left| Y(1) \right| +
    \expectinfty\left|\what{\theta}_{1,k}(X)\right| + \frac{1}{\epsilon}\expectinfty\left|
    \psi_{\what{\theta}_{1,k}(X)}(Y(1)) \right|\right) \\
    & \le \ind{A} \left(
    \expectinfty\left| Y(1) \right| + \expectinfty\left|\what{\theta}_{1,k}(X)\right| +
    \frac{\Gamma}{\epsilon}\expectinfty\left| Y(1) \right| +
    \frac{\Gamma}{\epsilon}\expectinfty\left|\what{\theta}_{1,k}(X)\right|\right) \\
    & < \infty,
  \end{align*}
  because $\indic{A} \expectinfty|\what{\theta}_{1,k}(X)| \le \indic{A}
  (\E|\theta_1(X)| + 1) < \infty$ and $\expectinfty|\theta_1(X)| \le \Gamma
  \expectinfty[|Y(1)|]
  = \E[|Y(1)|] < \infty$, all
  condtitionally almost surely on $\mc{F}_{\infty,k}$.

  Adopt the notation $S_m = \sum_{i \in \mathcal{I}_k} \xi_{m,i}$ and
  $a_m = \expectinfty[S_m]$.
  Theorem~\ref{thm:wlln-tri} implies that $\frac{1}{m}(S_m - a_m) \cp 0$
  (conditionally a.s.). Next, we show that $\frac{1}{m}a_m -
  \expectinfty[\xi_{m,1}] = \expectinfty[\overline{\xi}_{m,1}] -
  \expectinfty[\xi_{m,1}] \cp 0$. Recall that $|\xi_{m,1}|\ind{|\xi_{m,1}| >
    m} \to 0$ a.s.\ and
  $\expectinfty[|\xi_{m,1}|\ind{|\xi_{m,i}| > m}] \le
  \expectinfty[|\xi_{m,1}|] < \infty$, so dominated convergence
  implies $\expectinfty[|\xi_{m,1}|\ind{|\xi_{m,i}| > m}] \to
  0$. Together, these imply that
  \begin{align*}
    \lefteqn{\ind{A} \what{\mu}_{1,k}^- - \expectinfty[\xi_{m,i}]} \\
    & = \ind{A}\frac{1}{m} \sum_{i=1}^m
    \left(Z_i Y_i + (1-Z_i)\what{\theta}_{1,k}(X_i) + Z_i \frac{\psi_{\what{\theta}_{1,k}(X_i)}(Y_i)(1-\what{e}_{1,k}(X_i))}{\what{\nu}_{1,k}(X_i)\what{e}_{1,k}(X_i)}
    \right)
    - \expectinfty[\xi_{m,i}] \cp  0
  \end{align*}
  conditionally almost surely, and
  $\ind{A} \what{\mu}_{1,k}^- - \what{\mu}_{1,k} \cp 0$ conditionally
  almost surely, as $\P(A) \to 1$.
  Jensen's inequality,
  Assumption~\ref{assumption:bounded-variance}(d,e), and that
  $\E[\psi_{\theta(X)}(Y(1)) \mid Z = 1, X] = 0$ almost surely
  allow us to bound the error
  \begin{align*}
    \lefteqn{|\expectinfty[\xi_{m,i}] - \mu_1^-|} \\
    & = \left|\ind{A}
    \expectinfty\left[(1 - Z) \left(\what{\theta}_{1,k}(X)
      - \theta_1(X)\right) + Z \frac{\psi_{\what{\theta}_{1,k}}(Y(1))
        (1 - \what{e}_{1,k}(X))}{\what{\nu}_{1,k}(X) \what{e}_{1,k}(X)}\right]
    \right| + o_P(1) \\
    &=
    \begin{aligned}[t]
      &\Bigg|\ind{A}\expectinfty\Bigg[ (1-Z)\left(\what{\theta}_{1,k}(X)
        - \theta_{1}(X)\right) + Z \frac{\left(\psi_{\what{\theta}_{1,k}(X)}(Y(1))-\psi_{{\theta}_{1}(X)}(Y(1))\right)(1-\what{e}_{1,k}(X))}{\what{\nu}_{1,k}(X)\what{e}_{1,k}(X)}  \\
        &- Z \frac{\psi_{{\theta}_{1}(X)}(Y(1))(1-\what{e}_{1,k}(X))}{\what{\nu}_{1,k}(X)\what{e}_{1,k}(X)} \Bigg] \Bigg| + o_P(1)
    \end{aligned} \\
    & \le
    \expectinfty\left[
      \left|\what{\theta}_{1,k}(X)- \theta_1(X)\right| \right]
    + \expectinfty\left[ \frac{\Gamma}{c \epsilon}\left|\psi_{\what{\theta}_{1,k}(X)}(Y(1))-\psi_{{\theta}_{1}(X)}(Y(1))\right| \right]
    + o_P(1),
  \end{align*}
  where the $o_P(1)$ term comes from the fact that
  Assumption~\ref{assumption:bounded-variance}(b,d,e) imply $\P(A) \to
  1$. Finally, the Lipschitz continuity $|\psi_t(y) - \psi_s(y)| \le
  \Gamma|s - t|$
  and Assumption~\ref{assumption:bounded-variance}(b)
  that $\norms{\what{\theta}_{1,k} - \theta_1}_{1,P} \cp 0$ give that
  $|\expectinfty[\xi_{m,i}] - \mu_1^-| \cp 0$.
\end{proof}

\subsection{Proof of Theorem~\ref{thm:semiparametric}}
\label{sec:proof-semiparametric}

\newcommand{\subopt}{_\star}
\newcommand{\score}{\dot{\ell}}

\semiparametricnormality*

The proof depends heavily on Theorems 3.1 and 3.2 and Corollary 3.1 of
\citet{chernozhukov2018double}; primarily, we check their
Assumptions 3.1 and 3.2 for our proposed estimator. Stating their
results requires a bit of notation, which we
introduce here briefly.
We begin with their general assumptions
about a score function. Their result allows multi-dimensional
estimates, so we allow a $d$-dimensional score in the assumptions.
\begin{assumption}[\citet{chernozhukov2018double}, Assumption 3.1]
  \label{assumption:dml-score}
  Let $\mc{W}$ be a measurable space and
  $\mc{T}$ be a collection of (nuisance) functions mapping
  $\mc{W} \to \R^{d_0}$. Let $d \in \N$ and
  $\Theta \subset \R^d$, and let
  $m : \mc{W} \times \Theta \times \mc{T} \to \R^d$ be a score function.
  Let $\theta\subopt \in \Theta$ be the true parameter of interest
  and $\eta\subopt \in \mc{T}$ be the true nuisance.
  
  There exist $0 < c_0 \le c_1 < \infty$ such that for all $n \ge 3$, the
  following conditions hold.
  \begin{enumerate}[(a)]
  \item \label{item:mean-true-param}
    The true parameter $\theta\subopt$ obeys
    \begin{equation*}
      \E\left[\score(W, \theta\subopt, \eta\subopt) \right] = 0.
    \end{equation*}
    
  \item \label{item:linearization-score}
    There exist functions $\score^a$ and $\score^b$ such that
    the score $m$ is linear in $\theta$, satisfying
    \begin{equation*}
      \score(w, \theta, \eta) = \score^a(w; \eta)\theta + \score^b(w; \eta),
    \end{equation*}
    for all $w \in \mathcal{W}, \theta \in \Theta, \eta \in \mathcal{T}$.

  \item \label{item:twice-gateaux}
    The map $\eta \mapsto \E[\score(W, \theta, \eta)]$ is twice
    continuously Gateaux-differentiable on $\mathcal{T}$.

  \item \label{item:neyman-orthogonality}
    The score function $\score$ obeys the Neyman orthogonality
    condition~\cite[Def.~2.1]{chernozhukov2018double}, i.e.\ the
    derivative $\frac{\dif{}}{\dif{r}} \E[ \score(W, \theta\subopt, \eta\subopt
      + r(\eta - \eta\subopt)]$ exists for all $\eta \in \mathcal{T}$ and
    $r$ near 0, and
    \begin{equation}
      \label{eqn:neyman-orthogonality}
      \frac{\dif{}}{\dif{r}} \E\left[ \score(W, \theta\subopt, \eta\subopt +
        r(\eta - \eta\subopt)) \right]
      \bigg|_{r=0} = 0.
    \end{equation}
  \item\label{item:linear-score-matrix}
    Let $J_0 \defeq \E[\score^a(W; \eta\subopt)]$. The singular values of
    $J_0$ lie in $[c_0, c_1]$.
  \end{enumerate}
\end{assumption}
\noindent
We also require assumptions on properties of the nuisance variables.
\begin{assumption}[\citet{chernozhukov2018double}, Assumption 3.2]
  \label{assumption:dml-nuisance}
  Let the notation of Assumption~\ref{assumption:dml-score} hold.  Let
  $\delta_n$ and $\Delta_n$ be sequences with $\delta_n \ge n^{-1/2}$ and
  $\lim_{n\to\infty} \delta_n = \lim_{n\to\infty} \Delta_n = 0$.  There
  exist $0 < \epsilon$, $0 < c_0, c_1 < \infty$ and a set $\mc{T}_n$ with
  the following properties.
  \begin{enumerate}[(a)]
  \item \label{item:def-of-tn}
    Given a random subset $I$ of $[n]$ of size $ n / K$, the nuisance
    parameter $\what{\eta}$
    estimated on $\{W_i\}_{i \in I^c}$ belongs to $\mathcal{T}_n$
    with probability at least $1-\Delta_n$. $\mc{T}_n$ contains
    $\eta\subopt$.
  \item \label{item:moment-conditions}
    The following moment conditions hold:
    \begin{align*}
      m_n & = \sup_{\eta \in \mathcal{T}_n}\left( \E[\|\score(W, \theta\subopt,
        \eta)\|^q]
      \right)^{1/q} \le c_1
      ~~ \mbox{and} ~~
      m_n' = \sup_{\eta \in \mathcal{T}_n}\left( \E[\|\score^a(W; \eta)\|^q]
      \right)^{1/q} \le c_1.
    \end{align*}
  \item \label{item:score-rates}
    Define the rates
    \begin{align*}
      r_n & \defeq \sup_{\eta \in \mathcal{T}_n} \E[\|\score^a(W; \eta)
        - \score^a(W; \eta\subopt)\|], ~~
      r_n' \defeq \sup_{\eta \in \mathcal{T}_n} \E[\|\score(W, \theta\subopt, \eta)
        - \score(W, \theta\subopt, \eta\subopt)\|^2]^{\half}
      \\
      \lambda_n' & \defeq \sup_{0 < r < 1, \eta \in \mathcal{T}_n}
      \left\|\E\left[\frac{d^2}{dr^2}
        \score(W, \theta\subopt, \eta\subopt + r(\eta -\eta\subopt)) \right]
      \right\|.
    \end{align*}
    Then $r_n \le \delta_n$, $r_n' \le \delta_n$, and $\lambda_n'
    \le \delta_n / \sqrt{n}$.
  \item \label{item:score-variance}
    The score has positive variance:
    $\lambda_{\min}(\E[ \score(W, \theta\subopt, \eta\subopt)
      \score(W, \theta\subopt, \eta\subopt)^\top]) \ge c_0$.
  \end{enumerate}
\end{assumption}

It is possible to relax Assumption~\ref{assumption:dml-score}(d), the
equality~\eqref{eqn:neyman-orthogonality}, to require only that
$|\frac{\partial}{\partial r} \E[\score(W, \theta\subopt, \eta\subopt + r(\eta -
  \eta\subopt))]|_{r = 0}| \le \lambda_n$ for $\lambda_n \to 0$ sufficiently
quickly; we shall not need such generality.  With these assumptions in
place, we have the following theorem.

\begin{thm}[Chernozhukov et al.~\cite{chernozhukov2018double}, Theorems 3.1
  and 3.2]
  \label{thm:dml-score}
  \newcommand{\dmlrest}{\what{\theta}_{\textup{DML}}} Let
  Assumptions~\ref{assumption:dml-score} and~\ref{assumption:dml-nuisance}
  hold. In addition, suppose that $\delta_n \ge n^{-1/2}$ for all $n \ge
  1$. Let the sets $\{\mc{I}_k\}_{k=1}^K$ equally (at random) partition
  $n$. Define the empirical estimators $\what{\eta}_k$ based on $i \not
  \in \mc{I}_k$ and the double-robust-double-machine-learning (DMLR2)
  estimator $\dmlrest$ as the root of
  \begin{equation*}
    \frac{1}{n} \sum_{k = 1}^K
    \sum_{i \in \mc{I}_k} \score(W_i, \dmlrest, \what{\eta}_k)
    = 0.
  \end{equation*}
  Then
  $\dmlrest$ concentrates in a $1/\sqrt{n}$ neighborhood of
  $\theta\subopt$ and is approximately linear and centered Gaussian:
  \begin{equation*}
    \sqrt{n}(\dmlrest - \theta\subopt)
    = \frac{1}{\sqrt{n}} \sum_{i=1}^n \score(W_i, \theta\subopt, \eta\subopt)
    + O_P(\rho_n) \cd \normal(0, \Sigma),
  \end{equation*}
  where the
  remainder term $\rho_n = n^{-1/2} + r_n + r_n' + n^{1/2}\lambda_n +
  n^{1/2}\lambda_n'$ and the variance $\Sigma = J_0^{-1}\E[ \score(W,
    \theta\subopt, \eta\subopt) \score(W, \theta\subopt,
    \eta\subopt)^\top] J_0^{-1 \top}$, where $J_0$ is as in
  Assumption~\ref{assumption:dml-score}(\ref{item:linear-score-matrix}).

  If additionally $\delta_n \ge n^{-(1 - 2/q) \wedge 1/2}$
  and
  \begin{equation*}
    \what{\Sigma}
    = \frac{1}{n}
    \sum_{k = 1}^K \sum_{i \in \mc{I}_k} \what{J}_0^{-1}
    \score(W_i, \dmlrest, \what{\eta}_k) \score(W_i,
    \dmlrest, \what{\eta}_k)^\top \what{J}_0^{-\top}
    ~~ \mbox{and} ~~
    \what{J}_0 = \frac{1}{n} \sum_{k = 1}^K \sum_{i \in \mc{I}_k}
    \score^a(W_i; \what{\eta}_k)
  \end{equation*}
  then
  \begin{equation}
    \sqrt{n} \what{\Sigma}^{-\half}
    (\dmlrest - \theta\subopt)
    = -\frac{1}{\sqrt{n}} \sum_{i = 1}^n
    \Sigma^{-\half} J_0^{-1} \score(W_i, \theta\subopt, \eta\subopt)
    + O_P(\rho_n) \cd \normal(0, I_d)
  \end{equation}
  where $\rho_n = n^{-(1 - 2/q)\wedge 1/2} + r_n + r_n'
  + n^{1/2} \lambda_n + n^{1/2} \lambda_n'$.
\end{thm}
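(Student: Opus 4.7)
The plan is to exploit the linearity of the score in $\theta$. Since $\score(w,\theta,\eta) = \score^a(w;\eta)\theta + \score^b(w;\eta)$ by Assumption~\ref{assumption:dml-score}(\ref{item:linearization-score}), solving $\frac{1}{n}\sum_{k=1}^K\sum_{i\in\mc{I}_k}\score(W_i,\dmlrest,\what{\eta}_k) = 0$ gives the closed form $\dmlrest = -\what{J}_0^{-1}\bar{B}_n$, where $\bar{B}_n = \frac{1}{n}\sum_i \score^b(W_i;\what{\eta}_{k(i)})$ and $k(i)$ denotes the fold containing $i$. Adding and subtracting $\what{J}_0\theta\subopt$ produces the key identity
\[
\sqrt{n}(\dmlrest - \theta\subopt) = -\what{J}_0^{-1}\cdot \frac{1}{\sqrt{n}}\sum_{i=1}^n \score(W_i,\theta\subopt,\what{\eta}_{k(i)}).
\]
The proof then reduces to three ingredients: (a) a CLT for the oracle sum $S_n \defeq n^{-1/2}\sum_i\score(W_i,\theta\subopt,\eta\subopt)$; (b) control of the plug-in remainder $R_n \defeq n^{-1/2}\sum_i[\score(W_i,\theta\subopt,\what{\eta}_{k(i)}) - \score(W_i,\theta\subopt,\eta\subopt)]$; and (c) consistency $\what{J}_0 \cp J_0$. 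I then assemble them via Slutsky's theorem and continuous mapping.

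Ingredient (a) follows from the Lindeberg--Feller CLT: the oracle summands are i.i.d.\ mean zero by Assumption~\ref{assumption:dml-score}(\ref{item:mean-true-param}), with covariance $S_0 = \E[\score(W,\theta\subopt,\eta\subopt)\score(W,\theta\subopt,\eta\subopt)^\top]$ whose smallest eigenvalue is at least $c_0$ by Assumption~\ref{assumption:dml-nuisance}(\ref{item:score-variance}), and the $q$-th moment bound in Assumption~\ref{assumption:dml-nuisance}(\ref{item:moment-conditions}) with $q > 2$ verifies Lindeberg's condition. Ingredient (c) is a cross-fitted law of large numbers: decompose $\what{J}_0 - J_0$ into an oracle LLN part of order $O_P(n^{-1/2})$ and a nuisance-gap part controlled in $L^1$ by $r_n \le \delta_n$ via Assumption~\ref{assumption:dml-nuisance}(\ref{item:score-rates}), using that conditional on $\what{\eta}_k$ the in-fold samples are i.i.d.\ from the original distribution. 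The main obstacle is ingredient (b). Write $R_n = \sum_{k=1}^K R_n^{(k)}$ with $R_n^{(k)} = n^{-1/2}\sum_{i\in\mc{I}_k}[\score(W_i,\theta\subopt,\what{\eta}_k) - \score(W_i,\theta\subopt,\eta\subopt)]$. On the event $\{\what{\eta}_k \in \mc{T}_n\}$ (which holds with probability at least $1-\Delta_n$ by Assumption~\ref{assumption:dml-nuisance}(\ref{item:def-of-tn})), I condition on the out-of-fold data. The conditional mean of $R_n^{(k)}$ equals $\sqrt{|\mc{I}_k|}\,g(1)$, where $g(r) \defeq \E[\score(W,\theta\subopt,\eta\subopt + r(\what{\eta}_k - \eta\subopt))] - \E[\score(W,\theta\subopt,\eta\subopt)]$. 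By Assumption~\ref{assumption:dml-score}(\ref{item:twice-gateaux}) this map is twice Gateaux-differentiable; $g(0) = 0$ by Assumption~\ref{assumption:dml-score}(\ref{item:mean-true-param}) and $g'(0) = 0$ by the Neyman orthogonality condition~\eqref{eqn:neyman-orthogonality}, so a mean-value Taylor expansion gives $|g(1)| \le \tfrac{1}{2}\lambda_n' \le \tfrac{1}{2}\delta_n/\sqrt{n}$ via Assumption~\ref{assumption:dml-nuisance}(\ref{item:score-rates}). The conditional variance of $R_n^{(k)}$ is at most $(r_n')^2 \le \delta_n^2$ from the same assumption, and Chebyshev's inequality then yields $R_n^{(k)} = O_P(\delta_n)$ and hence $R_n = O_P(\delta_n) = O_P(\rho_n)$. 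The Neyman-orthogonality step is the essential move: without it the first-order Taylor term would contribute $\sqrt{n}\|\what{\eta}_k - \eta\subopt\|$, which is not negligible at the nonparametric rate $o_P(n^{-1/4})$ permitted for the nuisances.

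For the feasible self-normalized statement the remaining task is $\what{\Sigma} \cp \Sigma$. Since $J_0$ has singular values in $[c_0,c_1]$ by Assumption~\ref{assumption:dml-score}(\ref{item:linear-score-matrix}) and $\what{J}_0 \cp J_0$, continuous mapping handles the outer $\what{J}_0^{-1}$ factors. For the middle factor $\what{S}_n \defeq \frac{1}{n}\sum_i \score(W_i,\dmlrest,\what{\eta}_{k(i)})\score(W_i,\dmlrest,\what{\eta}_{k(i)})^\top$, I combine consistency $\dmlrest \cp \theta\subopt$ (from the first half of the theorem), the within-$\mc{T}_n$ structure of the nuisances, and the $q$-th moment bound of Assumption~\ref{assumption:dml-nuisance}(\ref{item:moment-conditions}) with $q > 2$ to secure uniform integrability; a cross-fitted triangular-array WLLN (analogous to Theorem~\ref{thm:wlln-tri}) then yields $\what{S}_n \cp S_0$, so $\what{\Sigma} \cp \Sigma$. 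The strengthened remainder rate $\rho_n = n^{-(1-2/q)\wedge 1/2} + r_n + r_n' + \sqrt{n}\lambda_n + \sqrt{n}\lambda_n'$ reflects the slower LLN rate of second-moment estimators under only a $q$-th moment hypothesis, and Slutsky's theorem delivers the standard Gaussian limit.
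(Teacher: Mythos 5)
This statement is not proved in the paper at all: it is quoted wholesale from \citet{chernozhukov2018double} (their Theorems 3.1 and 3.2), and the paper's actual work consists of verifying Assumptions~\ref{assumption:dml-score} and~\ref{assumption:dml-nuisance} for the specific score used there. Your proposal therefore cannot match a proof in the paper, but it correctly reconstructs the standard argument from the cited source: the closed form of the estimator from linearity of the score, the decomposition into an oracle CLT term, a cross-fitted remainder, and the Jacobian estimate; conditioning on the out-of-fold data so that in-fold observations are i.i.d.; Neyman orthogonality annihilating the first-order Taylor term so that only the second-order term, bounded by $\sqrt{n}\,\lambda_n'\le \delta_n$, and the variance term, bounded via $r_n'$ and Chebyshev, survive; then Slutsky and continuous mapping, with the same moment-based argument for consistency of the variance estimator and the weakened rate $n^{-(1-2/q)\wedge 1/2}$. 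Two cosmetic remarks: your conditional-mean factor should be $n^{-1/2}|\mathcal{I}_k| = \sqrt{n}/K$ rather than $\sqrt{|\mathcal{I}_k|}$, which changes nothing since $K$ is fixed; and your linear expansion carries the $-J_0^{-1}$ prefactor, which is the correct general influence-function form consistent with $\Sigma = J_0^{-1}\,\E[\dot{\ell}\,\dot{\ell}^\top]\,J_0^{-\top}$ (the paper's quoted first display drops it, harmless in this application where $J_0=-1$, and here exact orthogonality makes $\lambda_n=0$). Your sketch is essentially the proof of the cited theorem and contains no substantive gap at this level of detail.
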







We now construct a score function for our estimator $\what{\mu}_1^-$ and
demonstrate that it satisfies the conditions necessary
to apply Theorem~\ref{thm:dml-score}.
Let $Y = Y(Z)$ be the observed potential outcome.  Define the
triple $W_i = (Y_i,
X_i, \treatmentrv_i)$ as the $d+2$ dimensional random vector containing
all the observed random variables. Similarly let $w = (y, x, \treatment)$
for a fixed variable; we define our score
\begin{align}
  \label{eq:orthogonal-score}
  \score(w, \mu, \eta) \defeq  z y + (1-z)\funcparam(x) - \mu
  + z\frac{\hinge{y-\funcparam(x)} - \Gamma\neghinge{y - \funcparam(x)}}{
    \nu(x)}\frac{1-e(x)}{e(x)},
\end{align}
where the nuisance parameter $\eta \defeq (\funcparam, \nu, e)$.  First,
it is clear by definition~\eqref{eq:orthogonal-estimator} of
$\what{\mu}_1^-$ it is the root (in $\mu$) of $\sum_{k = 1}^K \sum_{i \in
  \mc{I}_k} \score(W_i, \mu, \what{\eta}_{1,k}) = 0$, so that the
definitional part of Theorem~\ref{thm:dml-score} holds. Additionally,
by construction $\what{\eta}_{1,k}$ is estimated based on
$[n] \setminus \mc{I}_k$.

We now turn to verifying that Assumptions~\ref{assumption:dml-score}
and \ref{assumption:dml-nuisance} hold for the estimator $\what{\mu}_1^-$.
To that end, define the space
\begin{equation}
  \label{eqn:funcparam-space}
  \mc{T} \defeq \{ \eta = (\theta, \nu, e) \mid
  \eta~\text{measurable},~1 \le \nu(x) \le \Gamma,
  \epsilon \le
  e(x) \le 1-\epsilon ~ \mbox{for~} x \in \mc{X}\},
\end{equation}
so that $\eta_1 \defeq (\theta_1, \nu_1, e_1) \in \mathcal{T}$
(recall that
$\theta_1(x) = \E[Y(1) \mid X = x, Z = 1]$,
$e_1(x) = \P(Z = 1 \mid X = x)$, and 
$\nu_1(x) = P(Y(1) \ge \theta_1(x) \mid Z = 1, X= x)
+ \Gamma P(Y(1) < \theta_1(x) \mid Z = 1, X=x)$).
We verify the assumptions in Sections~\ref{sec:verify-dml-score}
and \ref{sec:verify-dml-nuisance}, respectively.
Once we verify these, the proof of Theorem~\ref{thm:semiparametric}
is complete, as the confidence interval statements are immediate.

\subsubsection{Verifying Assumption~\ref{assumption:dml-score}}
\label{sec:verify-dml-score}

By construction, the score satisfies
\begin{align*}
  \E[\score(W, \mulower, \eta_1)] = & \E[ZY(1) + (1-Z)\funcparam_1(X)] - \mu_1^- \\
  &  + \E\left[\frac{\E[Z(\hinges{Y(1) - \funcparam_1(X)}-
        \Gamma \neghinges{Y(1) - \funcparam_1(X)}) \mid X]}{\nu_1(X)}
    \frac{e_0(X)}{e_1(X)}\right] = 0
\end{align*}
because $\E[Z(\hinges{Y(1) - \funcparam_1(X)} - \Gamma \neghinges{Y(1) -
    \funcparam_1(X)}\mid X] =
\E[Z \mid X]
\E[\psi_{\funcparam_1(X)}(Y(1)) \mid X = x, Z = 1] = 0$ almost everywhere
by Lemma~\ref{lemma:opt-is-good}, $\nu_1(x) \ge 1$, and
$e_1(x) > \epsilon$. Thus Assumption
\ref{assumption:dml-score}\eqref{item:mean-true-param} holds. The
score~\eqref{eq:orthogonal-score} is linear in $\mu$, so
Assumption~\ref{assumption:dml-score}\eqref{item:linearization-score}
holds with $\score^a\equiv - 1$.

We turn to verifying the twice Gateaux differentiability with respect to
$\funcparam$ in
Assumption~\ref{assumption:dml-score}\eqref{item:twice-gateaux}.  Because
$\psi_t(Y(1)) \le \Gamma Y(1)$, and $\E[|Y(1)|^q] < \infty$, dominated convergence
will hold for all interchanges of $\E$ and differentiation in what
follows.  First note that $\nu(x)$ and $e(x)$ are bounded below and above
by definition~\eqref{eqn:funcparam-space} of $\mc{T}$. Therefore, it
suffices to prove that for an arbitrary measurable function $g$ with
values in $[-\epsilon^{-1}, \epsilon^{-1}]$,
\begin{equation*}
  \funcparam \mapsto \left[
    \E\left[ g(x)Z\psi_{\theta(x)}(Y(1))  \mid X=x  \right]\right]_{x \in \mc{X}}
\end{equation*}
is twice Gateaux differentiable for $\funcparam \in \mc{T}$. When $Y(1)|Z=1,
X=x$ has a density $p_{Y(1)}(y\mid X=x, Z=1),$ for almost every $x$ and $P(Y(1)=y
\mid X=x,Z=1)=0$,
\begin{equation}
  \label{eqn:first-psi-derivative}
  \frac{\dif}{\dif{t}} \E[ Z\psi_t(Y) \mid Z=1, X=x] = -P_1(Y(1) > t \mid X=x) - \Gamma P_1(Y(1) < t \mid X=x),
\end{equation}
and so
\begin{align*}
  \frac{\dif{}^2}{\dif{t}^2} \E[ Z\psi_t(Y) \mid Z=1, X=x ]
  & =
  (1-\Gamma)p_{Y(1)}(t \mid X=x,Z=1).
\end{align*}
Coupled with the boundedness of $g(\cdot)$, this implies twice Gateaux
differentiability, satisfying
Assumption~\ref{assumption:dml-score}\eqref{item:twice-gateaux} for
$\theta_1(x)$. As $(\nu, e) \in \mc{T}$ are uniformly bounded, twice
Gateaux differentiability with respect to the remaining terms is immediate
by~\citet[Proof of Thms.~5.1 \& 5.2, step 1]{chernozhukov2018double}.

Next, we verify Neyman orthogonality,
Assumption~\ref{assumption:dml-score}\eqref{item:neyman-orthogonality}.
Note that if $y \neq \theta_1(x)$, then
\begin{align}
  \lefteqn{\left.\frac{\partial}{\partial r}
    \score(w, \mu, \eta_1 + r(\eta - \eta_1))\right|_{r = 0}} \nonumber \\
  & = (1 - z) (\theta(x) - \theta_1(x))
  - z \frac{\indic{y > \theta_1(x)}
    + \Gamma \indic{y < \theta_1(x)}}{\nu_1(x)}
  \cdot (\theta(x) - \theta_1(x))
  \cdot \frac{1 - e_1(x)}{e_1(x)} \nonumber \\
  & \qquad + z \frac{\psi_{\theta_1(x)}(y)}{
    \nu_1(x)^2 e_1(x)^2}(1 - e_1(x))
  \left.\frac{\partial}{\partial r}
  (\nu_1(x) + r (\nu(x) - \nu_1(x)))(
  (e_1(x) + r (e(x) - e_1(x)))\right|_{r = 0} \nonumber \\
  & \qquad ~ - z \frac{\psi_{\theta_1(x)}(y)}{\nu_1(x) e_1(x)}
  (e(x) - e_1(x)) \nonumber \\
  & = (1 - z) (\theta(x) - \theta_1(x))
  - z \frac{\indic{y > \theta_1(x)}
    + \Gamma \indic{y < \theta_1(x)}}{\nu_1(x)}
  \cdot (\theta(x) - \theta_1(x))
  \cdot \frac{1 - e_1(x)}{e_1(x)} \nonumber \\
  & \qquad + z \frac{\psi_{\theta_1(x)}(y)}{
    \nu_1(x)^2 e_1(x)^2}(1 - e_1(x))
  \left[e_1(x) (\nu(x) - \nu_1(x))
    + \nu_1(x) (e(x) - e_1(x))\right]
  \label{eqn:big-gateaux-expansion}
  \\
  & \qquad ~ - z \frac{\psi_{\theta_1(x)}(y)}{\nu_1(x) e_1(x)}
  (e(x) - e_1(x)).
  \nonumber
\end{align}
As $\nu, e \in \mc{T}$ are uniformly bounded, $\psi_t(\cdot)$ is Lipschitz,
and $Y(1)$ has $q > 2$ moments,
the dominated convergence theorem implies
\begin{equation*}
  \left.\frac{\partial}{\partial r}
  \E[\score(W, \mu, \eta_1 + r (\eta - \eta_1))]\right|_{r = 0}
  = \E\left[\left.\frac{\partial}{\partial r}
    \score(W, \mu, \eta_1 + r (\eta - \eta_1))\right|_{r = 0}
    \right].
\end{equation*}
Now, note that
by definition~\eqref{eqn:weight-normalization-def-a}, we have
\begin{align*}
  \lefteqn{\E\left[Z \left(\indic{Y(1) > \funcparam_1(X)}
      + \Gamma \indic{Y(1) < \funcparam_1(X)} \right) \mid X \right]} \\
  & =
  \E\left[Z \left(P(Y(1) > \funcparam_1(X) \mid X, Z = 1)
    + \Gamma P(Y(1) < \funcparam_1(X) \mid X, Z = 1)\right)
    \mid X \right]
  = \E[Z \mid X] \nu_1(X).
\end{align*}
Substituting Eq.~\eqref{eqn:big-gateaux-expansion} for the derivative
of the score and applying iterated expectations
with $\E[\cdot \mid X = x]$,
we thus obtain
\begin{align*}
  \lefteqn{\E\left[\left.\frac{\partial}{\partial r}
      \score(W, \mu, \eta_1 + r (\eta - \eta_1))\right|_{r = 0}
      \right]} \\
  & = \E\left[e_0(X) (\theta(X) - \theta_1(X))\right]
  - \E\left[\frac{e_1(X) \nu_1(X) (\theta(X) - \theta_1(X))}{\nu_1(X)}
    \frac{e_0(X)}{e_1(X)}\right] \\
  & \quad
  + \E\left[\frac{\E[Z \psi_{\funcparam_1(X)}(Y(1)) \mid X]}{
      \nu_1(X)^2 e_1(X)^2}
    e_0(X)
    \left[e_1(X) (\nu(X) - \nu_1(X))
      + \nu_1(X) (e(X) - e_1(X))\right]\right] \\
  & \quad
  - \E\left[\frac{\E[Z \psi_{\funcparam_1(X)}(Y(1))
        \mid X]}{e_1(X) \nu_1(X)} (e(X) - e_1(X))\right] \\
  & = \E\left[e_0(X) (\theta(X) - \theta_1(X))\right]
  - \E\left[e_0(X) (\theta(X) - \theta_1(X))\right]
  = 0,
\end{align*}
where we have used that $\E[Z \psi_{\funcparam_1(X)}(Y(1)) \mid X] = 0$ with
probability 1 by Lemma~\ref{lemma:opt-is-good} and that $Y(1) = \theta_1(X)$ with probability 0.





Finally, Assumption~\ref{assumption:dml-score}\eqref{item:linear-score-matrix}
clearly holds, as $J_0 = \score^a = -1$.

\subsubsection{Verifying Assumption~\ref{assumption:dml-nuisance}}
\label{sec:verify-dml-nuisance}

Assumption~\ref{assumption:nuisance-est} establishes that
$\what\funcparam$, $\what{\nu}$ and $\what{e}$ satisfy $\|
\what\funcparam(\cdot) - \funcparam_1 (\cdot)\|_{2,P} = o_P(n^{-1/4})$, $\|
\what\nu(\cdot) - \nu_1(\cdot) \|_{2,P} = o_P(n^{-1/4})$, and $\|
\what{e}(\cdot) - e_1(\cdot) \|_{2,P} = o_P(n^{-1/4})$. Therefore, there
exists sequences $a_n \to 0$ and $\Delta_n' \to 0$ such that for each $n$,
\begin{equation*}
  \| \what{\funcparam}(\cdot) - \funcparam_1(\cdot) \|_{2,P} \le a_n n^{-1/4},
  ~~
  \| \what{\nu}(\cdot) - \nu_1(\cdot) \|_{2,P} \le a_n n^{-1/4},
  ~~ \mbox{and} ~~
  \| \what{e}(\cdot) - e_1(\cdot) \|_{2,P} \le a_n n^{-1/4}
\end{equation*}
with probability $1-\Delta_n'/2$. We may choose $a_n$ so that these hold when
$\what{\eta}$ is estimated using only $(1- \frac{1}{K})n$ (as opposed to
$n$) samples. Similarly, Assumption~\ref{assumption:nuisance-est} implies
that there exists a constant $C_1$ such that
\begin{equation*}
  \| \what{\funcparam}(\cdot) - \funcparam_1(\cdot) \|_{\infty,P} \le C_1,
  ~~
  \| \what{\nu}(\cdot) - \nu_1(\cdot) \|_{\infty,P} \le C_1,
  ~~ \mbox{and} ~~
  \| \what{e}(\cdot) - e_1(\cdot) \|_{\infty,P} \le C_1
\end{equation*}
with probability $1-\Delta_n'/2$.
Now, for constants
$\constlow$ and $\constup$ to be chosen,
we define the relevant
set $\mc{T}_n$ by
\begin{equation}
  \label{eqn:def-Tn}
  \begin{split}
    \mathcal{T}_n
    \defeq
    \big\{ \eta = (\funcparam, e, \nu) & :
    \norm{\eta - \eta_1}_{2,P} \le a_n n^{-1/4},
    ~ \norm{\eta - \eta_1}_{\infty,P} \le C_1,
    \\
    &
    ~~ \epsilon \le e \le 1 - \epsilon,
    ~ \constlow \le \nu \le \constup \big\}.
  \end{split}
\end{equation}
By Assumption~\ref{assumption:bounded-variance}, there exists $\Delta_n''$ such that $P(\epsilon \le \inf_x \what{e}(x) \le \sup_x \what{e}(x) \le 1-\epsilon) \ge 1- \Delta_n'' / 2$ such that $P(\constlow \le \inf_x \what{\nu}(x) \le \sup_x \what{\nu}(x) \le \constup \ge 1- \Delta_n''/2$, and $\Delta_n'' \to 0$.
Then $P(\what\eta \in \mathcal{T}_n)\ge 1-\Delta_n'/2-\Delta_n'/2-\Delta_n''=1-\Delta_n$, for $\Delta_n = \Delta_n' + \Delta_n''$
by Assumptions~\ref{assumption:bounded-variance} and
\ref{assumption:nuisance-est}, so
Assumption~\ref{assumption:dml-nuisance}\eqref{item:def-of-tn} holds.

To bound the moments in
Assumption~\ref{assumption:dml-nuisance}\eqref{item:moment-conditions},
first we bound the score at the true nuisance parameter $\eta_1$, 
applying the triangle inequality to bound the difference. By
the equality~\eqref{eq:population-theta} defining $\funcparam_1$,
for any $x$ and $\delta > 0$, there exists
$L^\delta_x(y)$ such that
\begin{equation*}
  \theta_1(x) \in \left[ \E[L^\delta_x(Y(1)) Y(1) \mid X=x] - \delta,
    \E[L^\delta_x(Y(1)) Y(1) \mid X = x]\right]
\end{equation*}
where $\E[L^\delta_x(Y(1))] = 1$, and $0\le L_x^\delta(y) \le \Gamma
L_x^\delta(\wt{y})$ for all $y,\wt{y}$.  Together, these imply
$\Gamma^{-1} \le L^\delta_x(y) \le \Gamma$. Therefore,
Assumption~\ref{assumption:problem-regularity}(b) and H\"{o}lder's
inequality imply that for $C_q = \E[|Y(1)|^q]$,
\begin{align}
  \nonumber
  2^{1 - q}
  \E\left[ |\theta_1(X)|^q \right]
  & \le \E\left[ \left| \E\left[ L^\delta_X(Y(1)) Y(1) | X\right] \right|^q + \delta^q \right]
  \le
  \E\left[ \Gamma^q \E\left[ \left|Y(1)\right|^q \mid X\right] + \delta^q \right]
  \\
  & = \Gamma^q \E\left[ \left|Y(1)\right|^q \right] + \delta^q
  = \Gamma^q C_q + \delta^q.
  \label{eqn:bound-theta-1-q-moment}
\end{align}
To bound the moments of the score, we therefore have
\begin{align*}
  \lefteqn{4^{1 - q} \E\left[|\score(W, \mulower, \eta_1)|^q\right]} \\
  & \le
  |\mulower|^q
  +   \E\left[|ZY(1)|^q + (1-Z)|\funcparam_1(X)|^q +Z\biggm|\frac{\psi_{\funcparam_1(X)}(Y(1))( 1 - e_1(X))}{\nu_1(X) e_1(X)}\biggm|^q \right] \\
  &\le
  |\mu_1^-|^q + (1-\epsilon)\E_1[|Y(1)|^q] + (1 - \epsilon)\E_0\left[|\funcparam_1(X)|^q\right] + (1-\epsilon) \E_1\left[ |\psi_{\funcparam_1(X)}(Y(1))|^q \right] \\
  &\le
  |\mu_1^-|^q + (1-\epsilon)\E_1[|Y(1)|^q] + (1 - \epsilon)\E_0\left[|\funcparam_1(X)|^q\right] + (1-\epsilon)2^{q-1} \Gamma^q \left( \E_1\left[ |Y(1)|^q \right] +  \E_1\left[ |\funcparam(X)|^q \right]  \right) \\
  & \le \wt{C}_1
\end{align*}
for a finite $\wt{C}_1$.
For $\eta \in \mc{T}_n$, so that $\norm{\eta - \eta_1}_{q, P} \le C_1$
by definition~\eqref{eqn:def-Tn},
we may bound the difference
\begin{align*}
  \lefteqn{2^{1-q}
    \E\left[|\score(W, \mulower, \eta_1)-\score(W, \mulower, \eta)|^q\right]} \\
  &\le
  \E\left[(1-Z)|\funcparam_1(X) - \funcparam(X)|^q +Z\left|\frac{\psi_{\funcparam_1(X)}(Y(1))(1 - e_1(X))}{\nu_1(X) e_1(X)} - \frac{\psi_{\funcparam(X)}(Y(1))(1 - e(X))}{\nu(X) e(X)}\right|^q \right] \\
  & \le C_1^q + \E\left[ Z\left| \frac{\psi_{\theta_1(X)}(Y(1))(1 - e_1(X))\nu(X)e(X)-\psi_{\funcparam(X)}(Y(1))(1-e(X))\nu_1(X)e_1(X)}{\nu_1(X)\nu(X)e_1(X)e(X)} \right|^q\right] \\
  & \stackrel{(i)}{\le}
  C_1^q + \frac{1}{\constlow^2\epsilon^2}
  \E\left[Z \left| \psi_{\funcparam_1(X)}(Y(1))(1 - e_1(X))\nu(X) e(X)
    - \psi_{\funcparam(X)}(Y(1))(1 - e(X)) \nu_1(X) e_1(X) \right|^q \right] \\
  & \stackrel{(ii)}{\le}
  C_1^q + \frac{2^{q-1}}{\constlow^2 \epsilon^2}
  \E\left[Z \left| \left\{\psi_{\funcparam_1(X)}(Y(1))(1 - e_1(X))
    - \psi_{ \funcparam(X)}(Y(1))(1 - e(X))\right\} \nu(X) e(X) \right|^q \right]\\
  & \qquad ~ + \frac{2^{q-1}}{\constlow^2 \epsilon^2}
  \E\left[Z\bigg| \psi_{\funcparam(X)}(Y(1))(1 - e(X))
    \left(\nu(X) e(X) - \nu_1(X) e_1(X)\right) \bigg|^q\right]\\
  & \le \wt{C}_2
\end{align*}
for some constant $\wt{C}_2$, uniformly over $\eta \in \mc{T}_n$.  Above,
inequality~$(i)$ follows by the lower bound conditions~\eqref{eqn:def-Tn} on
$e, \nu \in \mc{T}_n$, inequality $(ii)$ is convexity and the triangle
inequality, and the final inequality follows as
$\sup_x |\nu(x)| < \infty$, $e(\cdot) \subset [\epsilon, 1 - \epsilon]$
for $\nu, e \in \mc{T}_n$, and
\begin{align*}
  \E[Z|\psi_{\theta(X)}(Y(1))|^q]
  & = \E[Z \hinge{Y(1) - \theta(X)}^q]
  + \Gamma^q \E[Z \neghinge{Y(1) - \theta(X)}^q] \\
  & \le 2^q \left((1 + \Gamma^q) \E[|Y(1)|^q]
  + (1 + \Gamma^q) \E[|\theta(X)|^q]\right)
  \le \wt{C}_3
\end{align*}
for a finite $\wt{C}_3$ because $\norm{\theta - \theta_1}_{q,P} \le C_1$ by
definition~\eqref{eqn:def-Tn} of $\mc{T}_n$ and
inequality~\eqref{eqn:bound-theta-1-q-moment}.
Assumption~\ref{assumption:dml-nuisance}\eqref{item:moment-conditions}
follows by the triangle inequality.


We turn to verifying
Assumption~\ref{assumption:dml-nuisance}\eqref{item:score-rates}. Because
$\score^a \equiv -1$, the rate $r_n = 0$. The
construction~\eqref{eqn:def-Tn} of $\mathcal{T}_n$ implies that
\begin{align*}
  \lefteqn{\sup_{\eta \in \mathcal{T}_n} E\left[ \left\{\score(W, \mulower, \eta) - \score(W, \mulower, \eta_1)\right\}^2 \right]^\half} \\
  & \le
  \sup_{\eta \in \mathcal{T}_n}
  \| \funcparam(\cdot) - \funcparam_1(\cdot) \|_{2,P} +
  \frac{\Gamma(1-\epsilon)}{\epsilon} \sup_{\eta \in \mathcal{T}_n} \| \funcparam(\cdot) - \funcparam_1(\cdot) \|_{2,P} \\
  & \quad + 2\frac{\Gamma^2(1-\epsilon)}{\epsilon}\left(\|\funcparam(\cdot)\|_{2,P} + \E[Y(1)^2]\right)
  \cdot \left(\left\| e(\cdot) - e_1(\cdot) \right\|_{2,P} + \left\| \nu(\cdot) - \nu_1(\cdot) \right\|_{2,P}\right)
\end{align*}
Therefore, $r_n'\le \tilde{C}_4 a_n n^{-1/4}$ for a constant $\tilde{C}_4 <
\infty$.  Bounding the second derivative moments
$\lambda_n'$ is more involved;
the next lemma controls this term.
We defer the proof to section~\ref{sec:proof-second-deriv-rate}.

\begin{restatable}{lemma}{lemsecondderiv}
  \label{lem:second-deriv-rate}
  Assume the conditions of Theorem~\ref{thm:semiparametric}. Let
  $\mc{T}_n$ be as defined~\eqref{eqn:def-Tn}.
  Then
  \begin{equation}
    \sup_{r \in (0,1),\eta \in \mathcal{T}_n} \frac{d^2}{dr^2} \E\left[ \score(W, \mulower, \eta_1 + r(\eta - \eta_1)) \right] \le C a_n^2 n^{-1/2}.
    \label{eq:rate-cond}
\end{equation}
\end{restatable}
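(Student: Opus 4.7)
The plan is to first reduce the expected score to an integral of the product of two smooth functions of $r$, then apply the product rule and bound each term via Cauchy--Schwarz. Writing $\Delta\eta = \eta - \eta_1 = (\Delta\theta, \Delta\nu, \Delta e)$ and $\eta_r = \eta_1 + r\Delta\eta$, the pieces of $\score(W, \mu_1^-, \eta_r)$ depending on $r$ are $(1-Z)\theta_r(X)$ (linear in $r$, hence contributing zero to the second derivative) and the reweighting term
\begin{equation*}
G(r) \defeq \E\!\left[\frac{Z(1-e_r(X))}{\nu_r(X)\,e_r(X)}\,\psi_{\theta_r(X)}(Y)\right].
\end{equation*}
Iterated expectations (conditioning on $X$ and integrating out $Y(1)$) rewrite this as $G(r) = \E_X[A(r,X)\,B(r,X)]$ with $A(r,x) \defeq e_1(x)(1-e_r(x))/(\nu_r(x)e_r(x))$ and $B(r,x) \defeq h(\theta_r(x),x)$, where $h(\theta,x) \defeq \E[\psi_\theta(Y(1))\mid Z=1, X=x]$. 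This decomposition separates the smooth $(\nu,e)$-dependence from the $\theta$-dependence and, crucially, smooths the kink of $\psi_\theta$ by integrating against the conditional density of $Y(1)$.

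The argument then rests on three facts about $B(r,x)$: (i) $B(0,x)=0$, because $\theta_1(x)$ is the root of $h(\cdot,x)$ via Lemma~\ref{lemma:opt-is-good}; (ii) $|\partial_\theta h(\theta,x)| \le \Gamma$, via~\eqref{eqn:first-psi-derivative}; and (iii) $\partial_\theta^2 h(\theta,x) = (1-\Gamma)\,p_{Y(1)}(\theta\mid Z=1,X=x)$ is uniformly bounded by Assumption~\ref{assumption:problem-regularity}(c). Combining (i)--(iii) along the line $\theta_r$ gives $|B(r,x)| \le r\Gamma|\Delta\theta(x)|$, $|\partial_r B(r,x)| \le \Gamma|\Delta\theta(x)|$, and $|\partial_r^2 B(r,x)| \le C\,\Delta\theta(x)^2$. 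Meanwhile, on $\mc{T}_n$ the convex combinations satisfy $e_r(x)\in[\epsilon,1-\epsilon]$ and $\nu_r(x)\in[\constlow,\constup]$, so the quotient rule yields $|A(r,x)|\le C$, $|\partial_r A(r,x)|\le C(|\Delta\nu(x)|+|\Delta e(x)|)$, and $|\partial_r^2 A(r,x)|\le C(|\Delta\nu(x)|+|\Delta e(x)|)^2$ uniformly in $r\in(0,1)$, with $C$ depending only on $(\epsilon,\Gamma,\constlow,\constup)$. Dominated convergence---using the $L^\infty$ bound $\|\Delta\eta\|_\infty\le C_1$ built into $\mc{T}_n$ and the $q$th-moment control on $Y(1)$---justifies exchanging derivatives and expectations throughout.

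Applying the product rule, $G''(r) = \E_X[\partial_r^2 A\cdot B + 2\,\partial_r A\cdot \partial_r B + A\cdot \partial_r^2 B]$. Using Cauchy--Schwarz together with $\|\Delta\eta\|_\infty\le C_1$ and $\|\Delta\eta\|_{2,P}\le a_n n^{-1/4}$, the first term is bounded by $C_1\cdot(\|\Delta\nu\|_{2,P}+\|\Delta e\|_{2,P})\cdot\|\Delta\theta\|_{2,P} \lesssim a_n^2 n^{-1/2}$ (the $L^\infty$ factor absorbs one power of $|\Delta\nu|+|\Delta e|$); the second by $(\|\Delta\nu\|_{2,P}+\|\Delta e\|_{2,P})\cdot\|\Delta\theta\|_{2,P} \lesssim a_n^2 n^{-1/2}$; and the third by $\|\Delta\theta\|_{2,P}^2 \lesssim a_n^2 n^{-1/2}$. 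Summing yields~\eqref{eq:rate-cond}. The only conceptual obstacle is that $\psi_\theta(y)$ is not twice differentiable in $\theta$ at $y=\theta$, which blocks any naive pointwise second derivative inside the expectation; the remedy---and the sole role of Assumption~\ref{assumption:problem-regularity}(c)---is to first integrate out $Y(1)$ so that $\partial_\theta^2 h$ becomes the bounded density $(1-\Gamma)p_{Y(1)}(\theta\mid Z=1,X=x)$, making the second derivative well defined and controllable.
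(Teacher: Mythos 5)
Your proof is correct and follows essentially the same route as the paper's: both hinge on the facts that $h(\theta_1(x),x)=0$, that $\partial_\theta^2 h$ is the bounded conditional density (Assumption~\ref{assumption:problem-regularity}(c)), that $\nu_r,e_r$ are bounded away from their degenerate values on $\mc{T}_n$, and on Cauchy--Schwarz with the $L^2$ rate $a_n n^{-1/4}$ plus the $L^\infty$ bound to absorb the extra factor. The only (cosmetic) difference is that you fold the denominator into the $(\nu,e)$-factor and use the product rule, whereas the paper keeps a quotient $f/g$ and invokes a separate quotient-rule lemma.
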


\noindent In summary,
$\lambda_n' \le C a_n^2/\sqrt{n}$.
Let $\delta_n = \max\{\tilde{C}_3 a_n, \tilde{C}_4 a_n^2, n^{-1/2}\}$.
The sequences $\{a_n\}$, $\{a_n^2\}$, and $\{n^{-1/2}\}$ all
converge to 0, and $\delta_n$ satisfies the conditions
of Assumption~\ref{assumption:dml-nuisance}. Assumption
\ref{assumption:dml-nuisance}\eqref{item:score-rates} follows.

For Assumption~\ref{assumption:dml-nuisance}\eqref{item:score-variance},
note that
\begin{align*}
  \E\left[ \score(W, \mulower, \eta_1)^2 \mid Z=1, X=x \right] &= \var\left[ Y(1) + \frac{\psi_{\funcparam_1(X)}(Y(1))}{\nu_1(X)}\frac{1-e_1(X)}{e_1(X)} \biggm| Z=1, X=x \right] \\
  &\ge \var(Y(1)\mid Z=1, X=x),
\end{align*}
because $\cov(Y(1), \psi_{\funcparam_1(X)}(Y(1))) \ge 0$.
Taking expectations over $X$ completes the proof.

\subsubsection{Proof of Lemma~\ref{lem:second-deriv-rate}}
\label{sec:proof-second-deriv-rate}

\lemsecondderiv*
\begin{proof}
  For notational convenience, and with some abuse,
  throughout this proof for any function
  $f : \mc{X} \times \R \to \R$ of $x \in \mc{X}$ and a scalar
  $r \in \R$, we let $f'(x, r) = \frac{\partial}{\partial r} f(x, r)$
  and $f''(x, r) = \frac{\partial^2}{\partial r^2} f(x, r)$.
  Let 
  \begin{equation}
    \label{eqn:define-shitty-h}
    h(x,r) = \E\left[ Z \psi_{\funcparam_1(X) + r\left\{\funcparam(X) -
        \funcparam_1(X)\right\}}(Y(1))  \mid X=x \right].
  \end{equation}
  Note that $\sup_{r \in (0, 1)} h(x, r)$ is integrable because
  $t \mapsto \psi_t(Y(1))$ is Lipschitz and $\funcparam \in \mc{T}_n$.
  Differentiate once to get
  \begin{align*}
    h'(x, r)
    &= \left\{\funcparam(X) - \funcparam_1(X)\right\}\frac{d}{dt} \E\left[  Z \psi_t(Y(1))  \mid X=x \right] \bigg|_{t = \funcparam_1(X)+r\left\{\funcparam(X) - \funcparam_1(X)\right\}} \\
    &= -\left\{\funcparam(X) - \funcparam_1(X)\right\} \left[1 + (\Gamma-1)P_1\left(Y(1) < \funcparam_1(X)+r\left\{\funcparam(X) - \funcparam_1(X)\right\} | X=x\right) \right] e_1(X),
  \end{align*}
  and again to get
  \begin{align*}
    h''(x, r)
    &=
    -(\Gamma-1)e_1(X) \left\{\funcparam(X) - \funcparam_1(X)\right\}^2 p_{Y(1)}(\funcparam_1(X)+r\{\funcparam(X) - \funcparam_1(X)\} \mid Z=1, X=x).
  \end{align*}
  Assumption~\ref{assumption:problem-regularity}
  (or the relaxation~\eqref{eqn:condition-replacing-density})
  guarantee that
  the density term $p_{Y(1)}$ above is uniformly bounded.
  Additionally, because $\theta \in \mc{T}_n$, we may use the Lipschitz
  continuity of $t \mapsto \psi_t(Y(1))$ to obtain
  \begin{equation}
    \label{eqn:get-h-bounded-ish}
    h(x, r) \in h(x, 0)
    \pm \Gamma r |\theta(x) - \theta_1(x)|,
  \end{equation}
  where we have used that $h(x, 0) = 0$ for almost all $x$ (recall
  Lemma~\ref{lemma:opt-is-good}).

  Let $f(x, r) = h(x, r)\left\{1 - e_1(x) - r(e(x) - e_1(x))\right\}.$ Then
  \begin{equation}
    \label{eqn:fp-calc}
    f'(x,r) = \left[1 - e_1(x) - r\{e(x) - e_1(x)\}\right]
    h'(x,r)  - \{e(x) - e_1(x)\} h(x,r),
  \end{equation}
  and
  \begin{align*}
    f''(x, r)
    &= -2\{e(x) - e_1(x)\} h'(x,r)
    + \left[1 - e_1(x) - r\{e(x) - e_1(x)\}\right]
    h''(x,r).
  \end{align*}
  Because $e(x) \in (\epsilon, 1-\epsilon)$ by definition~\eqref{eqn:def-Tn}
  of $\mc{T}_n$, we have
  \begin{equation}
    \label{eqn:get-fp-bounded}
    \left|f'(x,r)\right| \le (1-\epsilon)\Gamma
    |\funcparam(x) - \funcparam_1(x)| + |e(x) - e_1(x)| |h(x,r)|
    \le \Gamma |\funcparam(x) - \funcparam_1(x)|(1 + |e(x) - e_1(x)|)    
  \end{equation}
  by inequality~\eqref{eqn:get-h-bounded-ish},
  and
  \begin{align*}
    \left|f''(x,r)\right|
    & \le 2\Gamma|e(x) - e_1(x)||\funcparam(x) - \funcparam_1(x)| \\
    & \quad ~
    + (1-\epsilon)(\Gamma-1) \left(\funcparam(X) - \funcparam_1(X)\right)^2 p_{Y(1)}(\funcparam_1(X)+r(\funcparam(X) - \funcparam_1(X)) | Z=1, X=x).
  \end{align*}
  As we note above, there is $R < \infty$ such
  that $\sup_t p_{Y(1)}(t \mid Z = 1, X = x) \le R$,
  giving the bound
  \begin{align}
    \left| f''(x,r)\right|
    & \le 2\Gamma|e(x) - e_1(x)||\funcparam(x) - \funcparam_1(x)| +
    \Gamma R \left\{\funcparam(X) - \funcparam_1(X)\right\}^2.
    \label{eqn:uniform-bound-on-fpp}
  \end{align}
  Now, let
  \begin{equation*}
    g(x, r) = \left[\nu_1(x) + r\{\nu(x) - \nu_1(x)\} \right]\left[e_1(x) + r\{e(x) - e_1(x)\} \right],
  \end{equation*}
  where we recall that
  $\sup_x |\nu(x) - \nu_1(x)| < \infty$ and
  $|e(x) - e_1(x)| \le 2 - 2 \epsilon$ by definition~\eqref{eqn:def-Tn}
  of $\mc{T}_n$.  
  Differentiating, we have
  \begin{align}
    \nonumber g'(x, r)
    & =
    \{\nu(x) - \nu_1(x)\} \left[e_1(x) + r\{e(x) - e_1(x)\} \right]
    +
    \left[\nu_1(x) + r\{\nu(x) - \nu_1(x)\} \right]\{e(x) - e_1(x)\} \\
    g''(x, r) & = 2\{\nu(x) - \nu_1(x)\}\{e(x) - e_1(x)\}.
    \label{eqn:gp-gpp}
  \end{align}
  Therefore,
  \begin{align}
    \label{eqn:gp-super-bound}
    \sup_{x \in \mc{X}, r \in (0, 1)}
    |g'(x, r)|
    & \le \sup_{x \in \mc{X}} \left\{3|\nu(x) - \nu_1(x)| + 
    3(\Gamma+1) |e(x) - e_1(x)|\right\}
    \le C
  \end{align}
  where $C < \infty$ by the boundedness conditions on $\nu$ and $e$
  that $\mc{T}_n$ guarantees~\eqref{eqn:def-Tn}.

  The following lemma abstracts the technical challenge in bounding the
  second derivatives via dominated convergence. We
  defer its proof temporarily to
  Sec.~\ref{sec:proof-second-deriv}.
  \begin{restatable}{lemma}{lemquotient}
    \label{lem:second-deriv}
    Let $f, g$ be as above.
    Then
    $\frac{\partial^2}{\partial r^2} \E[\frac{f(X, r)}{g(X, r)}]
    = \E[\frac{\partial^2}{\partial r^2} \frac{f(X, r)}{g(X, r)}]$ and
    there exists a finite $C < \infty$ such that for all $r \in (0, 1)$,
    \begin{equation*}
      \left|\frac{\partial^2}{\partial r^2} \E\left[
        \frac{f(X,r)}{g(X,r)} \right] \right|
      \le C\E\left[  \left| f''(X, r)\right| +
        |f(X, r)|\left( \left|g''(X, r)\right|
        + g'(X, r)^2 \right) + \left| f'(X, r) g'(X, r)\right|
        \right].
    \end{equation*}
  \end{restatable}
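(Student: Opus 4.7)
}
The plan is to compute the pointwise second derivative of $(x,r) \mapsto f(x,r)/g(x,r)$ by two applications of the quotient rule, establish a uniform positive lower bound on $g$ so that the denominators are harmless, and then justify the interchange of $\partial_r^2$ and expectation by dominated convergence with a crude integrable envelope.

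First, a direct computation gives
\begin{equation*}
  \frac{\partial^2}{\partial r^2}\frac{f(x,r)}{g(x,r)}
  = \frac{f''(x,r)}{g(x,r)} - \frac{f(x,r)g''(x,r)}{g(x,r)^2} - \frac{2f'(x,r)g'(x,r)}{g(x,r)^2} + \frac{2f(x,r)g'(x,r)^2}{g(x,r)^3}.
\end{equation*}
I would next show that $g$ is uniformly bounded below on $\mc{X}\times[0,1]$. For $\eta = (\theta,\nu,e) \in \mc{T}_n$ and $r\in[0,1]$, the factor $\nu_1(x) + r\{\nu(x)-\nu_1(x)\}$ is a convex combination of $\nu_1(x) \in [1,\Gamma]$ and $\nu(x) \in [\constlow,\constup]$, hence lies in $[\min(1,\constlow),\max(\Gamma,\constup)]$; similarly $e_1(x) + r\{e(x)-e_1(x)\} \in [\epsilon,1-\epsilon]$ because both endpoints do. Therefore $g(x,r) \ge g_{\min} \defeq \min(1,\constlow)\,\epsilon > 0$ uniformly. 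Taking absolute values in the above display and using $1/g \le 1/g_{\min}$, $1/g^2 \le 1/g_{\min}^2$, and $1/g^3 \le 1/g_{\min}^3$ yields the stated pointwise bound with $C = 2/g_{\min}^3$; the desired inequality then follows by taking expectations and applying the triangle inequality.

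To justify differentiating under the expectation, I would appeal to dominated convergence: it suffices to produce an integrable envelope for $\bigl|\partial_r^2 f(X,r)/g(X,r)\bigr|$ uniform in $r\in(0,1)$. The earlier displays~\eqref{eqn:get-h-bounded-ish}, \eqref{eqn:get-fp-bounded}, \eqref{eqn:uniform-bound-on-fpp}, and \eqref{eqn:gp-calc}--\eqref{eqn:gp-super-bound} already supply polynomial bounds: $|f(x,r)| \le \Gamma|\theta(x) - \popfunc(x)|$ and $|f'(x,r)|$ is controlled by $\Gamma|\theta(x)-\popfunc(x)|(1+|e(x)-e_1(x)|)$; $|f''(x,r)|$ is bounded by a constant multiple of $|e(x)-e_1(x)||\theta(x)-\popfunc(x)| + \{\theta(x)-\popfunc(x)\}^2$ using that $p_{Y(1)}(\cdot\mid Z=1,X=x)$ is uniformly bounded by Assumption~\ref{assumption:problem-regularity}(c); and $|g'(x,r)|$, $|g''(x,r)|$ are bounded by constants depending only on $\Gamma,\constlow,\constup,\epsilon$. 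All of these are integrable for $\eta \in \mc{T}_n$ since the $L^2(P)$ norms of $\theta-\popfunc$, $e-e_1$, $\nu-\nu_1$ are finite by construction~\eqref{eqn:def-Tn}. This permits the interchange and completes the argument.

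The only non-mechanical step is verifying the uniform lower bound on $g$; once that is in place the rest is a routine application of the quotient rule together with dominated convergence, so I expect no substantive obstacle. The subsequent application back in the main proof of Theorem~\ref{thm:semiparametric} then needs each of $\E|f''|$, $\E|f|(|g''|+(g')^2)$, and $\E|f'g'|$ to be $O(a_n^2 n^{-1/2})$, which in turn reduces by Cauchy--Schwarz to the $o_P(n^{-1/4})$ rates on $\|\theta - \popfunc\|_{2,P}$, $\|e-e_1\|_{2,P}$, and $\|\nu-\nu_1\|_{2,P}$ supplied by Assumption~\ref{assumption:nuisance-est}(a).
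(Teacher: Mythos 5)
Your proposal is correct and follows essentially the same route as the paper: apply the quotient rule pointwise, use the bounds $\nu_1(x)+r\{\nu(x)-\nu_1(x)\}\ge\min(1,\constlow)$ and $e_1(x)+r\{e(x)-e_1(x)\}\ge\epsilon$ built into the definition of $\mc{T}_n$ to lower-bound $g$ uniformly, and then invoke dominated convergence with the integrable envelopes already supplied by the earlier bounds on $f$, $f'$, $f''$, $g'$, $g''$. The only cosmetic difference is that you make the lower bound on $g$ and the constant $C$ fully explicit, whereas the paper leaves them implicit.
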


  As a consequence of Lemma~\ref{lem:second-deriv} and
  Young's inequality that $ab \le \half a^2 + \half b^2$,
  we apply (variously) inequality~\eqref{eqn:uniform-bound-on-fpp}
  to $f''$,
  inequality~\eqref{eqn:get-h-bounded-ish}
  to obtain $|f(x, r)| \le \Gamma |\theta(x) - \theta_1(x)|
  \le \Gamma C_1$ by definition~\eqref{eqn:def-Tn} of $\mc{T}_n$,
  Eq.~\eqref{eqn:gp-gpp} to get
  $|g''(x, r)| \le (\nu(x) - \nu_1(x))^2 + (e(x) - e_1(x))^2$,
  again Eq.~\eqref{eqn:gp-gpp}
  to get $|g'(x, r)|
  \le C |\nu(x) - \nu_1(x)|
  + C|e(x) - e_1(x)|$, and
  inequality~\eqref{eqn:get-fp-bounded} to bound
  $|f'(x, r)| \le C \Gamma |\funcparam(x) - \funcparam_1(x)|$,
  yielding
  \begin{align*}
    \left|\frac{\partial^2}{\partial r^2} \E\left[\frac{f(X, r)}{g(X, r)}
      \right]\right|
    & \le C \E\left[
      |e(X) - e_1(X)|^2 + |\funcparam(X) - \funcparam_1(X)|^2
      + |\nu(X) - \nu_1(X)|^2\right].
  \end{align*}
  By construction~\eqref{eqn:def-Tn} of $\mc{T}_n$,
  $\norm{\eta - \eta_1}_{2,P} \le a_n n^{-1/4}$,
  so that noting
  that
  $\frac{\partial^2}{\partial r^2}
  \E[\score(W, \mu_1^-, \eta_1 + r(\eta - \eta_1))]
  = \frac{\partial^2}{\partial r^2}
  \E[\frac{f(X, r)}{g(X, r)}]$
  gives the result.
\end{proof}

\subsubsection{Proof of Lemma~\ref{lem:second-deriv}}
\label{sec:proof-second-deriv}

By applying the quotient rule, we have
\begin{align*}
  \frac{\partial^2}{\partial r^2}
  \frac{f(x, r)}{g(x, r)}
  & = \frac{f''(x, r)}{g(x, r)}
  - 2 \frac{f'(x, r) g'(x, r)}{g(x, r)^2}
  - \frac{f(x, r) g''(x, r)}{g(x, r)^2}
  + \frac{2 f(x, r) (g'(x, r))^2}{g(x, r)^3}.
\end{align*}
If we can exhibit a function $R(x)$ such that $R(x) \ge
|\frac{\partial^2}{\partial r^2} \frac{f(x, r)}{g(x, r)}|$ for all $r \in (0,1)$ with $\E[R(X)] < \infty$, this is sufficient for interchanging
expectation and differentiation via the dominated convergence theorem.

Because $g(x, r) =
(\nu_1(x) + r(\nu(x) - \nu_1(x))) (e_1(x) + r(e(x) - e_1(x)))$,
the definition~\eqref{eqn:def-Tn} of $\mc{T}_n$ guarantees
that $\inf_{x,r\in[0,1]} g(x, r) > 0$. Therefore,
for some $C < \infty$ whose value may change from line to line but
which is independent of $x$ and $r$ for $r\in(0,1)$,
\begin{align*}
  \left|\frac{\partial^2}{\partial r^2}
  \frac{f(x, r)}{g(x, r)}\right|
  \le C \left[
    |f''(x, r)| + |f'(x, r)| |g'(x, r)|
    + |f(x, r) g''(x, r)| + |f(x, r)| (g'(x, r))^2\right].
\end{align*}
We bound each of the terms in the inequality
in turn.

Assume $r \in (0,1)$. Inequality~\eqref{eqn:uniform-bound-on-fpp} guarantees the existence of an
integrable $R_1$ such that $R_1(x) \ge |f''(x, r)|$. For
the $f' g'$ term, we have $|f'(x, r)| \le C (|h(x, r)| + |h'(x, r)|)$ for
some $C < \infty$ by Eq.~\eqref{eqn:fp-calc}, while
Eq.~\eqref{eqn:gp-super-bound} gives $\linf{g'} \le C$. Thus, as
$\funcparam$ is integrable by assumption~\eqref{eqn:def-Tn}, there is some
integrable $R_2$ such that $R_2(x) \ge |f'(x, r)| |g'(x, r)|$.  Now, we consider the term $|f(x, r) g''(x, r)|$. By
Eq.~\eqref{eqn:gp-gpp} we have $|g''(x, r)| \le C$ because $\nu, e \in
\mc{T}_n$. By the discussion after the
definition~\eqref{eqn:define-shitty-h} of $h$, there exists integrable $R_3$
such that $R_3(x) \ge |f(x, r)| C \ge |f(x, r)| |g''(x, r)|$.
Finally, we consider
the final term. As $\linf{g'} \le C$,
we may again use the function $R_3$, and so we have
\begin{equation*}
  \left|\frac{\partial^2}{\partial r^2}
  \frac{f(x, r)}{g(x, r)}\right|
  \le C (R_1(x) + R_2(x) + R_3(x)),
\end{equation*}
which is integrable. This gives the lemma.

\subsection{Design sensitivity proofs}
\label{sec:proof-design-sensitivity}

In this section, we prove Proposition~\ref{prop:design-sensitivity-model}.
We begin with a technical lemma, returning
to prove the proposition in Sec.~\ref{sec:proof-design-sensitivity-model}.
\begin{restatable}{lemma}{lemmamonotonects}
  \label{lem:monotone-continuous}
  Let $\theta_1^\Gamma(x)$ be the optimum~\eqref{eq:population-theta}
  for a fixed $\Gamma \ge 1$. If $\theta_1^\Gamma(x)$ is finite
  for some $\Gamma$, then $\Gamma \mapsto \theta_1^\Gamma(x)$ is
  continuous and monotone decreasing.
\end{restatable}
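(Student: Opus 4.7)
The plan is to work with the dual characterization from Lemma~\ref{lemma:duality}, which gives $\theta_1^\Gamma(x) = \sup\{\mu \in \R : F(\mu, \Gamma) \ge 0\}$ where I set $F(\mu, \Gamma) \defeq \E[(Y(1) - \mu)_+ - \Gamma (Y(1) - \mu)_- \mid Z=1, X=x] = A(\mu) - \Gamma B(\mu)$, with $A(\mu) = \E[(Y(1) - \mu)_+ \mid Z=1, X=x]$ and $B(\mu) = \E[(Y(1)-\mu)_- \mid Z=1, X=x]$. Finiteness of $\theta_1^\Gamma(x)$ for some $\Gamma$ implies $\E[|Y(1)| \mid Z=1, X=x] < \infty$, so $A$ and $B$ are finite on all of $\R$, and dominated convergence yields joint continuity of $(\mu, \Gamma) \mapsto F(\mu, \Gamma)$. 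Computing one-sided derivatives, $A'(\mu) = -P(Y(1) > \mu \mid Z=1, X=x)$ and $B'(\mu) = P(Y(1) < \mu \mid Z=1, X=x)$, so $F(\cdot, \Gamma)$ is strictly decreasing except at points $\mu$ with $Y(1) = \mu$ almost surely; in particular, for each $\Gamma \ge 1$, $F(\cdot, \Gamma)$ admits at most one zero, which by continuity must coincide with $\theta_1^\Gamma(x)$.

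For monotonicity, I note that $\Gamma \mapsto F(\mu, \Gamma) = A(\mu) - \Gamma B(\mu)$ is non-increasing for each $\mu$, since $B(\mu) \ge 0$. Hence the superlevel set $\{\mu : F(\mu, \Gamma) \ge 0\}$ is nested decreasingly in $\Gamma$, so $\Gamma \mapsto \theta_1^\Gamma(x)$ is non-increasing. For strict decrease, I observe that if $B(\theta_1^\Gamma(x)) = 0$ then $Y(1) \ge \theta_1^\Gamma(x)$ almost surely conditional on $Z=1, X=x$, which would render the primal infimum~\eqref{eq:population-theta} independent of $\Gamma$ and equal to $\E[Y(1) \mid Z=1, X=x]$; otherwise $F(\theta_1^\Gamma(x), \Gamma') < F(\theta_1^\Gamma(x), \Gamma) = 0$ for any $\Gamma' > \Gamma$, which forces $\theta_1^{\Gamma'}(x) < \theta_1^\Gamma(x)$.

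For continuity, fix $\Gamma \ge 1$ and a sequence $\Gamma_n \to \Gamma$. The monotonicity just established implies that for any sufficiently small $\epsilon > 0$ with $\Gamma - \epsilon \ge 1$, the sequence $\{\theta_1^{\Gamma_n}(x)\}$ eventually lies in the bounded interval $[\theta_1^{\Gamma + \epsilon}(x), \theta_1^{\Gamma - \epsilon}(x)]$. Let $\theta_\star$ be any subsequential limit. Because $\theta_1^{\Gamma_n}(x)$ is the unique zero of the continuous function $F(\cdot, \Gamma_n)$, we have $F(\theta_1^{\Gamma_n}(x), \Gamma_n) = 0$; joint continuity of $F$ then yields $F(\theta_\star, \Gamma) = 0$, and uniqueness of the zero forces $\theta_\star = \theta_1^\Gamma(x)$. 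Since every subsequential limit equals $\theta_1^\Gamma(x)$, the full sequence converges to $\theta_1^\Gamma(x)$.

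The main technical point is the uniqueness of the zero of $F(\cdot, \Gamma)$, which reduces to strict monotonicity of $F$ in $\mu$; the only exception is the degenerate case where $Y(1)$ is almost surely constant conditional on $Z=1, X=x$, in which case $\theta_1^\Gamma(x)$ equals that constant for every $\Gamma$ and both continuity and (weak) monotonicity are immediate.
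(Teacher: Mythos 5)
Your proof is correct, and it reaches both conclusions by a somewhat different route than the paper, though both arguments ultimately revolve around the same estimating function $F(\mu,\Gamma) = \E[(Y(1)-\mu)_+ - \Gamma (Y(1)-\mu)_- \mid Z=1, X=x]$ (the paper's $f_\Gamma$). For monotonicity, the paper works on the primal side: it writes $\theta_1^\Gamma(x)$ as an infimum over reweightings of $Y(1)$, plugs the $\Gamma$-optimal threshold into the $\widetilde{\Gamma}$-objective, and compares the weights $\Gamma$ versus $\widetilde{\Gamma}$ on the negative part of the residual, obtaining strict decrease whenever $\var(Y(1)\mid X=x)>0$. You instead observe that $\Gamma \mapsto F(\mu,\Gamma)$ is non-increasing, so the superlevel sets in the dual characterization of Lemma~\ref{lemma:duality} nest; this is arguably cleaner and isolates the degenerate case ($Y(1)$ a.s.\ constant given $Z=1,X=x$) in the same way the paper's variance condition does. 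For continuity, the paper exploits that $F(\cdot,\Gamma)$ has slope at most $-1$ to produce the explicit local Lipschitz bound $|\theta_1^\Gamma(x) - \theta_1^{\widetilde{\Gamma}}(x)| \le (\widetilde{\Gamma}-\Gamma)\,\E[(Y(1)-\theta_1^\Gamma(x))_-]$, whereas you give a soft argument via joint continuity of $F$, uniqueness of its zero, and subsequential compactness; the paper's version buys a quantitative modulus of continuity, yours requires slightly less bookkeeping. Two small points worth tightening: your derivative formulas for $A$ and $B$ are only one-sided when $Y(1)$ has atoms (the correct statement, which the paper uses, is that $F(\cdot,\Gamma)$ decreases with slope at most $-\min(1,\Gamma)=-1$ uniformly, so uniqueness of the zero needs no exception at all); and in the continuity step the bracketing interval $[\theta_1^{\Gamma+\epsilon}(x), \theta_1^{\Gamma-\epsilon}(x)]$ needs the trivial modification of using $\theta_1^{1}(x)$ as the upper endpoint when $\Gamma = 1$. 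Neither affects the validity of the argument.
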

\begin{proof}
  To check that $\Gamma \mapsto \theta_1^\Gamma(x)$ is strictly monotone, we
  use the choice of $L$ that attains the minimum in
  equation~\eqref{eq:best-lr} to write
  \begin{equation}
    \label{eqn:get-theta-via-infimum}
    \theta_1^\Gamma(x) =
    \inf_\mu \E \left[\frac{\ind{Y(1) \ge \mu} + \Gamma\ind{Y(1) < \mu}}{
        C_\Gamma(\mu)} Y(1) \mid Z=1, X=x \right],
  \end{equation}
  where $C_\Gamma(\mu) = P(Y(1) \ge \mu \mid Z = 1, X = x) + \Gamma P(Y(1) < \mu
  \mid Z = 1, X = x)$ normalizes $\ind{Y(1) \ge \mu} + \ind{Y(1) < \mu}$ so
  that it is a valid likelihood ratio.  Lemma~\ref{lemma:duality} implies
  that $\theta_1^\Gamma(x)$ itself achieves the
  infimum~\eqref{eqn:get-theta-via-infimum}. Then, for $\widetilde{\Gamma} >
  \Gamma$, if $\var(Y(1) \mid X = x) > 0$, then
  \begin{align}
    \lefteqn{\theta_1^{\widetilde{\Gamma}}(x) - \theta_1^{\Gamma}(x)} \nonumber \\
    & = \inf_\mu \E \left[ \frac{\ind{Y(1) \ge \mu}
        + \widetilde{\Gamma}\ind{Y(1) < \mu}}{C_{\wt{\Gamma}}(\mu)}
      Y(1) \mid Z=1, X=x \right]
    - \theta_1^{\Gamma}(x) \nonumber \\
    &\le \E \left[ \frac{\ind{Y(1) \ge \theta_1^{\Gamma}(x)}
        + \widetilde{\Gamma}
        \ind{Y(1) < \theta_1^{\Gamma}(x)}}{C_{\wt{\Gamma}}(\theta_1^{\Gamma}(x))}
      (Y(1)-\theta_1^{\Gamma}(x)) \mid Z=1, X=x \right]
    \label{eq:before-weight-change} \\
    & < \E \left[ \frac{\ind{Y(1) \ge \theta_1^{\Gamma}(x)}
        + \Gamma\ind{Y(1) < \theta_1^{\Gamma}(x)}}{
        C_{\wt{\Gamma}}(\theta_1^{\Gamma}(x))}
      (Y(1)-\theta_1^{\Gamma}(x)) \mid Z=1, X=x \right]
    \label{eq:after-weight-change} \\
    & = 0. \nonumber
  \end{align}
  The strict inequality~\eqref{eq:after-weight-change} follows by
  considering the signs of $Y(1) - \theta_1^\Gamma(x)$ in
  expression~\eqref{eq:before-weight-change}, that $\wt{\Gamma} > \Gamma$,
  and that $\var(Y(1) \mid X = x) > 0$. The final equality is simply the
  definition of $\theta_1^\Gamma$ via the
  expectation~\eqref{eq:population-theta}.
  
  The function $t \mapsto f_\Gamma(t) \defeq \E[(Y(1) - t)_+ - \Gamma(Y(1) - t)_-
    \mid Z=1, X=x]$ is strictly monotone with slope $ \le -1$. Therefore,
  for $1 \le \Gamma \le \widetilde{\Gamma} < \infty$, using that
  $f_\Gamma(\theta_1^{\Gamma}) = 0$ and
  $f_{\widetilde{\Gamma}}(\theta_1^{\widetilde{\Gamma}}) = 0$,
  \begin{align*}
      |\theta_1^{\Gamma}(x) - \theta_1^{\widetilde{\Gamma}}(x)| &\le f_{\widetilde{\Gamma}}(\theta_1^\Gamma(x)) - f_{\widetilde{\Gamma}}(\theta_1^{\widetilde{\Gamma}}(x))
      = f_{\widetilde{\Gamma}}(\theta_1^{\Gamma}(x)) - f_{\Gamma}(\theta_1^{\Gamma}(x)) 
      \le (\widetilde{\Gamma} - \Gamma)\E\left[ (Y(1) - \theta_{\Gamma}(x))_- \right].
  \end{align*}
  When $\theta^{\Gamma}_1(x)$ is finite, this implies $\Gamma \mapsto
  \theta_1^\Gamma(x)$ is continuous.
\end{proof}

\subsubsection{Proof of Proposition~\ref{prop:design-sensitivity-model}}
\label{sec:proof-design-sensitivity-model}

\thmdesignsensitivity*
\begin{proof}
  That $\tau^-$ takes the form claimed in the proposition is a consequence
  of Eq.~\eqref{eqn:get-theta-via-infimum}.  Let us consider the power of
  the tests $\psi_n^\Gamma$.  As $n \to \infty$, we have
  $\what{\tau}^--z_{1-\alpha}\what{\sigma}_{\tau^-}/\sqrt{n} \cp \tau^{-}$,
  so that $\lim_{n \to \infty} Q(\psi_n^\Gamma = 0) = 0$ if $\tau^- > 1$,
  and otherwise $Q(\psi^\Gamma_n = 0) \to 1$.  By
  Lemma~\ref{lem:monotone-continuous}, $\tau^-(\Gamma)$ is strictly
  decreasing in $\Gamma$, so the design sensitivity $\Gamma_{\design}$ for
  this test is the choice of $\Gamma$ such that $\tau^-(\Gamma) = 0$. If
  this equation has no roots, then no choice of $\Gamma$ makes $\tau^-(\Gamma)$
  negative, so we set $\Gamma_{\design}=\infty$.
\end{proof}

\subsubsection{Proof of Corollary~\ref{cor:designsensitivity}}
\label{sec:proof-gaussian-design-sensitivity}

\cordesignsensitivity*
\begin{proof}
  For notational convenience, we use $\tilde{\Gamma} \equiv
  \Gamma_\design$ in the proof.
  Proposition~\ref{prop:design-sensitivity-model} shows that the design
  sensitivity $\tilde{\Gamma}$ satisfies
  \begin{align*}
    0 & = \E_Q[Y(1)] + \funcparam_1 - \E_Q[Y(0)] - \theta_0
    \stackrel{(i)}{=} \tau + 2 \theta_1,
  \end{align*}
  for $\theta_1$ solving $\E_Q[\hinge{Y(1) - \theta_1} - \tilde{\Gamma}
    \neghinge{Y(1) - \theta_1}] = 0$ (recall
  Lemmas~\ref{lemma:duality} and~\ref{lemma:opt-is-good})
  where equality~$(i)$
  uses that $Y(1)
  \overset{d}= -Y(0)$ under $Q$. The design sensitivity
  $\tilde{\Gamma}$ thus solves
  \begin{equation*}
    \E_Q\left[\hinge{Y(1) + \frac{\tau}{2}} - \tilde{\Gamma}
      \neghinge{Y(1) + \frac{\tau}{2}} \right] = 0.
  \end{equation*}
  
  Substituting the density of $Y(1)$ under $Q$ gives
  \begin{align*}
    0 & = \int_{-\infty}^\infty
    \left(\indic{y \ge -\tau/2} + \tilde{\Gamma} \indic{y < \tau/2}\right)
    \frac{y+\tau/2}{\sqrt{2 \pi \sigma^2}}
    \exp\left(-\frac{1}{2 \sigma^2}(y - \tau/2)^2\right)\dif{t}\\
    &= \int_{-\infty}^\infty
    \left(\indic{t \ge 0} + \tilde{\Gamma} \indic{t < 0}\right)
    \frac{t}{\sqrt{2 \pi \sigma^2}}
    \exp\left(-\frac{1}{2 \sigma^2}(t - \tau)^2\right) \dif{t}
  \end{align*}
  by a change of variables.
  This immediately implies that
  \begin{equation*}
    \tilde{\Gamma} \int_{-\infty}^0 t \exp\left(-\frac{1}{2 \sigma^2}
    (t - \tau)^2 \right) \dif{t}
    = -\int_0^\infty t \exp\left(-\frac{1}{2 \sigma^2} (t - \tau)^2\right) \dif{t},
  \end{equation*}
  which gives the first equality in the corollary.
  The second equality is just a change of variables and computation of
  the integral.
\end{proof}

\subsection{Proof of Proposition~\ref{prop:opt-design-sensitivity}}
\label{sec:proof-opt-design-sensitivity}

\optdesignsensitivity*
\begin{proof}
  We assume for simplicity that $\sigma = 1$; replacing $\tau$ by
  $\tau/\sigma$ gives an equivalent problem and does not
  change the quantity~\eqref{eq:design-sensitivity-gaussian}.
  We show that for $\Gamma \ge \Gamma_\design^\gauss$, there
  exists $P \in H_0(\Gamma)$ such that $\tvnorms{P_{Y(Z), Z} - Q_{Y(Z), Z}} = 0$,
  where the notation indicates that we only observe pairs $Y(Z), Z$.
  Our choice of $P$ will be independent of the sample size $n$, so
  that it gives both asymptotic and finite sample results.
  We proceed in three steps: (1) we construct $P$, (2) we verify the constructed $P$ belongs
  to $H_0(\Gamma)$, and (3) we show the variation distance is zero.
  
  \paragraph{Step 1: Constructing $P$}
  We construct $P$ using the distribution $Q$, augmenting with an unobserved
  confounding variable $U$ to have the Markov structure
  \begin{equation*}
    Z \longleftarrow U \longrightarrow (Y(1), Y(0)).
  \end{equation*}
  Let
  \begin{equation*}
    t^* = t^*(\Gamma)
    \defeq \argmin_t \E_Q\left[\frac{\ind{Y(1)\ge t}
        + \Gamma\ind{Y(1) < t}}{Q(Y(1)\ge t) + \Gamma Q(Y(1) < t)}Y(1)\right],
  \end{equation*}
  where Lemma~\ref{lemma:duality} shows that $t^\ast=\theta_1$ attains this
  minimum.  Denote the densities of $Y(1)$ and $Y(0)$ under $Q$ by $q_1$ and
  $q_0$, respectively, where $q_0(t) = q_1(-t)$.
  Under $P$, let $Y(1)$ have density
  \begin{subequations}
    \label{eqn:lr-treated-p}
    \begin{equation}
      \label{eq:lr-treated-p1}
      p_1(t) \propto \left( \ind{t > t^\ast} + \sqrt{\Gamma} \ind{t \le t^\ast}\right) q_1(t),
    \end{equation}
    and set $Y(0) = -Y(1)$, so that marginally $Y(0)$ has density
    \begin{equation}
      p_0(t) \propto \left( \ind{t <  -t^\ast} + \sqrt{\Gamma} \ind{t \ge -t^\ast}\right) q_0(t).
      \label{eq:lr-treated-p0}
    \end{equation}
  \end{subequations}
  Define the unobserved confounding variable
  \begin{equation*}
    U = \ind{Y(1) > t^\ast} = \ind{Y(0) < -t^\ast}
  \end{equation*}
  and let $Z$ be a random variable based on the conditional probabilities
  \begin{equation*}
    P(Z = z \mid U = u) = \begin{cases}
      \sqrt{\Gamma} / (1 + \sqrt{\Gamma}) & \mbox{if~} z = u \\
      1 / (1 + \sqrt{\Gamma}) & \mbox{if~} z = 1 - u,
    \end{cases}
  \end{equation*}
  so $Z$ is independent of $Y(1)$ and $Y(0)$ conditional on $U$.

  Define $a$ by the marginal distribution of $Z$ under this conditional
  distribution,
  \begin{align*}
    a \defeq P(Z=1)&=P(Z=1\mid U=1)P(U=1)+P(Z=1 \mid U=0)P(U=0)
    \in \left[\frac{1}{1 + \sqrt{\Gamma}}, \frac{\Gamma}{1 + \sqrt{\Gamma}}
      \right].
  \end{align*}

  \paragraph{Step 2: Verifying that $P \in H_0(\Gamma)$}
  The $\Gamma$-\cornfield{} condition~\eqref{eq:cornfield} holds, as
  \begin{align*}
    \frac{P(Z = 1 \mid U = 1)}{P(Z = 0 \mid U = 1)}\frac{P(Z = 0 \mid U = 0)}{P(Z = 1 \mid U = 0)} &= \frac{\frac{\sqrt{\Gamma}}{1 + \sqrt{\Gamma}}}{\frac{1}{1 + \sqrt{\Gamma}}}\frac{\frac{\sqrt{\Gamma}}{1 + \sqrt{\Gamma}}}{\frac{1}{1 + \sqrt{\Gamma}}} = \Gamma.
  \end{align*}
  Therefore, $P \in H_0(\Gamma)$ if $\E_P[Y(1) - Y(0)] \le 0$. To verify
  this condition, we first calculate the conditional likelihood ratios using
  Bayes rule:
  \begin{align*}
    \frac{p_{Y(1) | Z=1}(t)}{p_{Y(1)}(t)} &= \frac{P(Z=1 \mid Y(1) = t)}{P(Z=1)}
    \\
    &=
    \frac{P(Z=1 \mid U = 1)P(U=1 \mid Y(1) = t) + P(Z=1 \mid U = 0)P(U=0 \mid Y(1) = t)}{P(Z=1)}
    \\
    &\propto
    \sqrt{\Gamma}\ind{t > t^\ast} +  \ind{t \le t^\ast},
  \end{align*}
  and similarly
  $$
  \frac{p_{Y(0) | Z=0}(t)}{p_{Y(0)}(t)} = \frac{P(Z=0 \mid Y(0) = t)}{P(Z=0)}\propto
  \ind{t < -t^\ast} + \sqrt{\Gamma}\ind{t \ge -t^\ast}.
  $$
  Thus, the definitions~\eqref{eqn:lr-treated-p} of
  $p_z(t)$ yield
  \begin{align}
    \label{eq:equal-marginal-y-1}
    \frac{p_{Y(1) | Z=1}(t)}{q_1(t)} &= \frac{p_{Y(1) | Z=1}(t)}{p_{Y(1)}(t)}  \frac{p_{Y(1)}(t)}{q_1(t)} = 1, \\
    \frac{p_{Y(0) | Z=0}(t)}{q_0(t)} &= 1,
    \label{eq:equal-marginal-y-0}
  \end{align}
  and
  \begin{align*}
    \frac{p_{Y(1) | Z=0}(t)}{q_1(t)} &\propto \ind{t > t^\ast} + \Gamma \ind{t \le t^\ast} \\
    \frac{p_{Y(0) | Z=1}(t)}{q_0(t)} &\propto \ind{t < -t^\ast} + \Gamma \ind{t \ge -t^\ast}.
  \end{align*}
  These imply that
  \begin{align*}
    \E_P\left[Y(1) - Y(0) \right] &= a\E_P\left[Y(1) - Y(0) \mid Z=1 \right] + (1-a)\E_P\left[Y(1) - Y(0) \mid Z=0 \right]
    \\
    & \stackrel{(i)}{=}
    \frac{\tau}{2} + \E_Q\left[\frac{\ind{Y(1) \ge t^\ast} + \Gamma \ind{Y(1) < t^\ast}}{Q(Y(1) \ge t^\ast) + \Gamma Q(Y(1) < t^\ast)}Y(1) \right] \\
    & \stackrel{(ii}{\le}
    \frac{\tau}{2} + \frac{1}{C_0} \int_{-\frac{\tau}{2}}^{\infty} y q_1(y) \dif{y} + \frac{\Gamma}{C_0} \int_{-\infty}^{-\frac{\tau}{2}} y q_1(y) \dif{y}
    \\
    &=
    \frac{1}{C_0} \int_{0}^{\infty} y\, q_1\!\left(y-\frac{\tau}{2}\right) \dif{y} + \frac{\Gamma}{C_0} \int_{-\infty}^{0} y \, q_1\!\left(y-\frac{\tau}{2}\right) \dif{y}
    \\
    & \stackrel{(iii)}{\le} 0,
  \end{align*}
  where inequality~$(i)$ follows by~\eqref{eqn:get-theta-via-infimum}
  defining the threshold and in inequality~$(ii)$ we replace the
  threshold $t^*$ by $\tau/2$, which only increases the objective,
  and the inequality~$(iii)$ holds when $\Gamma \ge \Gamma_\design^\gauss$
  (and $C_0 = Q(Y(1) \ge -\tau/2) + \Gamma Q(Y(1) \le -\tau/2)$).

  \paragraph{Step 3: Proving Zero Variation Distance}
  Now, we verify that observed quantity $(Y(Z), Z)$ has the same
  distribution under $Q$ and $P$. If $Q(Z=1) = a$, then $P(Z=1) = a =
  Q(Z=1)$. Furthermore, the conditional distribution $Y(Z) \mid Z$ under $Q$ equals that under $P$
  by Eqs.~\eqref{eq:equal-marginal-y-1} and
  \eqref{eq:equal-marginal-y-0}. Therefore, the marginal distributions $(Y(Z),Z))$ under $P$ and $Q$ are identical, and
  $\tvnorms{P_{Y(Z), Z} - Q_{Y(Z), Z}} = 0$.
  Consequently, for any test $t_n^\Gamma,$ we have
  $P(t_n^\Gamma = 1) = Q(t_n^\Gamma = 1)$.
\end{proof}


\end{document}